\newtheorem{thm}{Theorem}[section]
\newtheorem{cor}{Corollary}[section]
\newtheorem{lem}{Lemma}[section]
\newtheorem{prop}{Proposition}[section]
\newtheorem{defi}{Definition}[section]
\newtheorem{Assumption}{Assumption}[section]
\newtheorem{remark}{Remark}[section]
\newcommand{\et}{\mathrm{e}}
\newcommand{\be}{\begin{equation}}
\newcommand{\ee}{\end{equation}}
\newcommand{\bq}{\begin{eqnarray}}
\newcommand{\eq}{\end{eqnarray}}
\newcommand{\ex}{\mathbb{E}}
\newcommand{\ind}{\mathbf{1}}
\newcommand{\nn}{\nonumber}
\newcommand{\sk}{\smallskip}
\newcommand{\ol}{\overline}
\newcommand{\mc}{\mathcal}
\newcommand{\p}{\partial}
\newcommand{\e}{\mathbf{e}}
\newcommand{\pr}{\mathbb{P}}
\newcommand{\diff}{\textup{d}}
\newcommand{\lev}{L\'evy }
\newcommand{\R}{\mathbb{R}}
\newcommand{\ans}{\textcolor[rgb]{0.00,0.00,0.72}}
\begin{document}

\title{\Large \textbf{On the optimality of threshold type strategies in single and recursive optimal stopping under \lev models}}

\author{Mingsi Long\footnote{Center of Data Science, New York University, New York, NY 10011,  Email: longmingsi01@gmail.com} \and
Hongzhong Zhang\footnote{
Department of IEOR, Columbia University, New York, NY 10027, USA. Email: {hz2244@columbia.edu}. }
}
\maketitle
\begin{abstract}
In the spirit of Surya \cite{Surya2007}, we develop an average problem approach  to prove the optimality of threshold type strategies for optimal stopping of  L\'evy models with a continuous additive functional (CAF) discounting. Under spectrally negative models, we specialize this  in terms of  conditions on the reward function and  random discounting, where we present two examples of local time and occupation time discounting. We then apply this approach to recursive optimal stopping problems, and present simpler and neater proofs for a number of important results on qualitative  properties of the optimal thresholds, which are only known under a few special cases \cite{Carmona2008,TimKazuHZ14, YamazakiAMO2014}.
\end{abstract}


\section{Introduction}\label{sec:intro}
Let $X_\cdot=(X_t)_{t\ge0}$ be a general L\'evy process, with c\`adl\`ag paths, living on a filtered probability space $(\Omega,\mathcal{F},\{\mathcal{F}_{t}\}_{t\geq0},\pr)$, where $\mc{F}$ is assumed as the augmented natural filtration of $X_\cdot$.
We study the optimality of threshold type strategies in a class of optimal stopping problems driven by $X_\cdot$. In particular, we consider an optimal single stopping problem with a continuous additive functional (CAF) random discounting, and a sequence of recursive optimal stopping problems that arise from pricing of swing options \cite{Carmona2008} and contraction options \cite{YamazakiAMO2014}. For all those problems, we show that a sufficient condition for the optimality of threshold type strategy can be formulated in terms of an auxiliary average problem about the running maximum\footnote{By considering the \lev process $-X_\cdot$, one can easily adjust the argument to incorporate the running minimum of $X_\cdot$.} of $X_\cdot$ at a doubly stochastic random time. When $X_\cdot$ is a spectrally negative L\'evy process, we show that this average problem can be explicitly solved  through the ``first-order condition'' equation. The optimality of threshold type strategies then follows from the monotonicity property of the solution to this average problem. Moreover, we also show that a main result in \cite{HLY14} on optimal stopping with log-concave rewards can be seen as a special case of our Theorem \ref{thmSNLP}.

This work generalizes similar ideas as in \cite{Alili2005, KyprSury05, mordecki2002, Surya2007}, where optimal stopping of a L\'evy process under a constant discounting rate were studied. They show that the optimality of up-crossing threshold type strategies follows from a special form of the reward function, namely, there exists a nondecreasing function $h(\cdot)$ such that the expectation of $h(\overline{X}_{\mathbf{e}_r})$ is equal to the reward function, where $\mathbf{e}_r$ is an independent exponential random variable with parameter $r\ge0$ equal to the discounting rate. In \cite{TimKazuHZ14}, the authors use  measure change techniques to generalize this approach to case of a negative discounting rate.  This approach is further applied in \cite{OmegaRZ15}, to evaluate a perpetual American call option with an occupation time type discounting. As seen in \cite[Section 4]{LTZIJTAF15},
knowing the optimality of threshold type strategies  and monotonicity of optimal thresholds effectively helps reduce a complicated stochastic optimization in optimal stopping problems to a parametric optimization, which can then be developed into an efficient numerical algorithm.

Our objective of this study is to present an application of the average problem approach under random discounting in a {\em general} setting. While the optimality of threshold type strategies often fails to hold in a random discounting setting (as shown in \cite{Dayanik2008b,OmegaRZ15}), it is still valuable to give conclusive answers to a wide class of problems using results in this work. A further and equally important consideration is to show that many important, yet difficult-to-prove qualitative results in optimal multiple stopping problems and recursive optimal stopping problems (e.g. optimality of threshold type strategies, monotonicity of optimal thresholds, etc) can be easily proved by following the average problem approach.  Although this approach  is known for years,
to our best knowledge, the above applications  are novel and will help the dissemination  of this powerful method in a wider class of problems in practice.

Optimal stopping problems with a random discounting find their applications  in many areas such as finance and applied probability (for instance, in problems driven by continuously time-changed Markov processes \cite{CPT12}, or problems with random maturity \cite{OmegaRZ15}).
Perpetual optimal stopping of time-homogeneous diffusions with a random discounting rate were studied in Dayanik \cite{Dayanik2008b}, by exploiting Dynkin's concave characterizations of the excessive functions. In particular, the author managed to directly ``construct'' the value function, without postulating (and verifying) any prior ansatz about the structure of the optimal stopping region. While this is a very promising approach within diffusion framework, it shows limitations when jumps present. For example, for perpetual American call options on exponential L\'evy models under an occupation time type discounting  \cite{OmegaRZ15},  it is shown that there can be two disjoint components for both the continuation and the stopping regions,  which appear alternately. Possible overshoots thus make it difficult to apply \cite{Dayanik2008b}'s approach directly.

Optimal stopping problems with multiple exercising and refraction times arise in many application in finance and operations research. For instance, Carmona and Touzi \cite{Carmona2008} formulated the valuation of a swing put option as optimal multiple stopping problem, with constant refraction periods, under the Black-Scholes model. In a related work, Zeghal and Mnif \cite{ZeghalSwing} priced a perpetual American swing put option under spectrally positive exponential L\'evy models. Later,
 Leung et al. \cite{TimKazuHZ14} considered a stock loan with multiple repayments as an optimal multiple stopping problem with negative discounting rate and general i.i.d. positive refraction periods.  Among them, \cite{Carmona2008} and \cite{ZeghalSwing} subsequently established the optimality of threshold type strategies  in exercising a perpetual American swing put option,\footnote{\cite[Proposition 3.1]{ZeghalSwing} proved the optimality of threshold type strategy using monotonicity and convexity of the value function, which, is not fully legitimate, because the reward functions in the multiple optimal stopping problems can also be {\em curved}, convex functions, leaving arguments based on a put payoff invalid.} and demonstrated these monotonicity of the optimal thresholds using  sub-gradients techniques under two special models. By using mathematical inductions and the supermartingale property of value functions,
\cite{TimKazuHZ14} proved  similar results for a multiple-exercising call option under a general L\'evy model with arbitrary negative jumps and Phase-type positive  jumps. Moreover,  optimal multiple stopping problems with a running cost were studied in Yamazaki \cite{YamazakiAMO2014} to address the optimal timing to withdraw from a project in stages. In particular, under spectrally negative \lev models, \cite{YamazakiAMO2014} explicitly calculated the value of down-crossing threshold type strategies in terms of the so-called scale functions,  which was then used in conjunction with smooth fit to show the optimality of threshold type strategies.

The remaining paper is structured as follows. 
In Section \ref{sec:single} we study the single optimal stopping problem with a random discounting by using the average problem approach. 
 In Section \ref{sec:SNLPs}, we specialize the L\'evy model  to that of spectrally negative processes, and give an explicit construction of the solution to the average problem.  To illustrate the idea, we present two examples in Section \ref{sec:Example}: a generalization of the local time discounting optimal stopping problem as studied in \cite{Dayanik2008b} to a spectrally negative $\beta$-stable process, and a case of the Novikov-Shiryaev problem (see \cite{KyprSury05,NovkShir07}) under an occupation time discounting.  In Section \ref{sec:mul}, we apply the average problem approach to recursive optimal stopping problem and give simpler proofs for some important results within this context. Specifically, under a general \lev model, we study the case with a deterministic discount rate and refraction times but without running cost in Section \ref{sec41}, and the case with a random discounting and  a running cost in Section \ref{sec42}. Omitted technical proofs can be found in Appendix \ref{sec:proof}.  Some useful facts about the scale functions of spectrally negative L\'evy processes are reviewed in Appendix \ref{sec:SNLP}.

Throughout the paper, we use $\pr_x$ and $\ex_x$ to denote the probability law and the corresponding expectation given $X_0=x$, and we will suppress the subscripts in $\pr_x$ and $\ex_x$ if $x=0$.

%
%
%
%

\section{Single optimal stopping problem with random discounting}\label{sec:single}

We consider the following optimal stopping problem:
\be\label{singleproblem}
V(x):=\sup_{\tau\in\mathcal{T}}\ex_{x}[\et^{-A_{\tau}}f(X_{\tau})\ind_{\{\tau<\infty\}}],
\ee
where $\mathcal{T}$ is the set of all $\mathcal{F}$-stopping times with values in $[0,\infty]$, $f(\cdot)$ is the reward function, which is lower semi-continuous, and satisfies Condition (M) in Definition \ref{conM0} below.\footnote{\label{fn:well}As seen in Theorem \ref{singlesolution} below, this will guarantee the problem \eqref{singleproblem} is well defined. }  Here $A_\cdot=(A_{t})_{t\geq0}$ is a continuous additive functional, or CAF,  of $X_\cdot$. Namely, $A_\cdot$ is an $\mathcal{F}$-adapted process that is almost surely non-negative, continuous, and satisfies 
$$
A_0=0,\quad A_{s+t}=A_s+A_t\circ\theta_s,\text{ }s,t\geq0\text{ a.s.}
$$
where $\theta_s$ is the usual Markov shifting operator, namely, $X_t\circ\theta_s=X_{t+s}$ for all $t,s\ge0$ (see \cite[page 133]{CAF66} for a more complete definition of CAF). From the definition it is clear that a CAF $A_\cdot$ is nondecreasing. Throughout, we make the following standing assumption.
\begin{Assumption}\label{ass:A}For all $x\in\R$, we have
\be
\textrm{either }\pr_x(A_\infty=\infty)=1\textrm{ or }\pr(\limsup_{t\to\infty}X_t<\infty)=1.\nn\ee
\end{Assumption}

Given an independent, unit mean exponential random variable $\mathbf{e}$, let us introduce the left inverse of $\mathbf{e}$ by $A_\cdot$:
\be
\zeta:=\inf\{t>0\,:\,A_t>\mathbf{e}\},\nn
\ee
where, as usual, we set $\inf\emptyset=\infty$. Because
\[\pr_x(\zeta=\infty)=\pr_x(A_t\le \mathbf{e}, \forall t>0)=\pr_x(A_\infty\le\mathbf{e})=\ex_x[\exp(-A_\infty)],\]
we have from Assumption \ref{ass:A} that, either (i) $\zeta<\infty$ almost surely; or (ii) $\limsup_{t\to\infty}X_t<\infty$ almost surely in case (i) fails to hold.

We Let $T_z^+$ be the first passage time of $X_\cdot$ over a given threshold $z$ from below, i.e.
\be
T_z^+:=\inf\{t>0: X_t>z\},\label{eq:def:Tz}
\ee
and denote the running maximum process by $\ol{X}_t:=\sup_{s\in[0,t]}X_s$.
Then the random variable $\ol{X}_\zeta$ is well defined and is finite a.s. and satisfies
\be\pr_x(\overline{X}_\zeta>z)=\pr_x(T_z^+<\zeta, T_z^+<\infty)=\pr_x(A_{T_z^+}<\mathbf{e}, T_z^+<\infty)=\ex_x[\exp(-A_{T_z^+})\ind_{\{T_z^+<\infty\}}],\,\,\forall z\ge x.\label{eq:p2e}
\ee
In fact, we have the following equivalence:
\begin{lem}\label{lem:equivalence}
Assumption \ref{ass:A} is equivalent to 
\be
\label{eq:condition1}
\lim_{z\to\infty}\ex_x[\exp(-A_{T_z^+})\ind_{\{T_z^+<\infty\}}]=0,\quad\forall x\in\R.
\ee
\end{lem}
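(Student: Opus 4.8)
The plan is to reduce both the hypothesis and the conclusion to a single pathwise statement, namely that $\ol{X}_\zeta$ is almost surely finite under every $\pr_x$, and then to verify this reduced equivalence case by case. First I would invoke \eqref{eq:p2e}, which identifies the quantity appearing in \eqref{eq:condition1} with $\pr_x(\ol{X}_\zeta>z)$. Since the events $\{\ol{X}_\zeta>z\}$ decrease to $\{\ol{X}_\zeta=\infty\}$ as $z\to\infty$ (note $\ol{X}_\zeta\ge X_0=x$, so $\ol{X}_\zeta$ takes values in $[x,\infty]$), continuity of measure from above gives $\lim_{z\to\infty}\ex_x[\exp(-A_{T_z^+})\ind_{\{T_z^+<\infty\}}]=\pr_x(\ol{X}_\zeta=\infty)$. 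Hence \eqref{eq:condition1} is equivalent to requiring $\pr_x(\ol{X}_\zeta=\infty)=0$ for every $x\in\R$, and it remains to show that this is in turn equivalent to Assumption \ref{ass:A}.

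Next I would pin down the event $\{\ol{X}_\zeta=\infty\}$ explicitly. Because $X_\cdot$ has c\`adl\`ag paths, the running maximum over any finite horizon is finite, so $\ol{X}_\zeta<\infty$ automatically on $\{\zeta<\infty\}$; consequently $\{\ol{X}_\zeta=\infty\}\subseteq\{\zeta=\infty\}$. On $\{\zeta=\infty\}$ one has $\ol{X}_\zeta=\sup_{t\ge0}X_t$, and a c\`adl\`ag path is unbounded above precisely when $\limsup_{t\to\infty}X_t=\infty$ (finiteness of the limsup forces an eventual upper bound, which combined with boundedness on compacts gives a finite supremum). This yields the identity $\{\ol{X}_\zeta=\infty\}=\{\zeta=\infty\}\cap\{\limsup_{t\to\infty}X_t=\infty\}$, valid under each $\pr_x$.

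Finally I would invoke the L\'evy $0$-$1$ dichotomy: $\{\limsup_{t\to\infty}X_t=\infty\}$ is a tail event (being invariant under adding the finite constant $X_s$ for every $s$), so it has $\pr_x$-probability $0$ or $1$, and by spatial homogeneity this value is the same under every $\pr_x$ as under $\pr$. If $\limsup_{t\to\infty}X_t<\infty$ a.s., the second alternative in Assumption \ref{ass:A} holds for all $x$ and, by the identity above, $\pr_x(\ol{X}_\zeta=\infty)=0$ for all $x$; thus both Assumption \ref{ass:A} and \eqref{eq:condition1} hold. If instead $\limsup_{t\to\infty}X_t=\infty$ a.s., the second alternative fails, and the identity gives $\pr_x(\ol{X}_\zeta=\infty)=\pr_x(\zeta=\infty)=\ex_x[\exp(-A_\infty)]$, using the formula for $\pr_x(\zeta=\infty)$ already recorded in the text; this vanishes for every $x$ if and only if $\pr_x(A_\infty=\infty)=1$ for every $x$, which is exactly the first alternative in Assumption \ref{ass:A}. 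In both cases \eqref{eq:condition1} and Assumption \ref{ass:A} are equivalent, which proves Lemma \ref{lem:equivalence}.

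The main point requiring care is the $0$-$1$ dichotomy for $\limsup_{t\to\infty}X_t$ together with its independence of the starting point, since it is precisely what reconciles the asymmetric form of Assumption \ref{ass:A} (one clause phrased under $\pr_x$, the other under $\pr$); the remaining steps are routine applications of path regularity and monotone continuity of measure.
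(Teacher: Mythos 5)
Your proof is correct, but it is organized quite differently from the paper's. The paper proves the two implications separately: for the forward direction it splits along the two alternatives of Assumption \ref{ass:A}, using $T_z^+\to\infty$ together with the bounded convergence theorem when $\pr_x(A_\infty=\infty)=1$, and the a.s.\ finiteness of $\ol{X}_\infty$ in the other case; for the converse it uses the monotonicity bound $\ex_x[\et^{-A_\infty}\ind_{\{T_z^+<\infty\}}]\le\ex_x[\et^{-A_{T_z^+}}\ind_{\{T_z^+<\infty\}}]$ and then the trichotomy of \cite[Theorem VI.12]{Bertoin_1996} to conclude that $\pr_x(T_z^+<\infty)=1$ whenever the limsup alternative fails, whence $\ex_x[\et^{-A_\infty}]=0$, i.e.\ the first alternative holds. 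You instead collapse both directions into a single pathwise identity: by \eqref{eq:p2e} and continuity from above, the limit in \eqref{eq:condition1} equals $\pr_x(\ol{X}_\zeta=\infty)$, and $\{\ol{X}_\zeta=\infty\}=\{\zeta=\infty\}\cap\{\limsup_{t\to\infty}X_t=\infty\}$; the zero--one dichotomy for $\{\limsup_{t\to\infty}X_t=\infty\}$ (your tail-event argument is the same fluctuation fact the paper extracts from Bertoin's trichotomy) then settles both implications at once in a two-case split, with the recorded formula $\pr_x(\zeta=\infty)=\ex_x[\et^{-A_\infty}]$ doing the remaining work. Your route buys a cleaner conceptual statement --- the lemma literally becomes the assertion that $\ol{X}_\zeta$ is a.s.\ finite under every $\pr_x$ --- and it makes transparent why the asymmetric quantifiers in Assumption \ref{ass:A} (one clause under $\pr_x$, the other under $\pr$) cause no trouble; the paper's route trades this for elementary inequalities between expectations, avoiding the exact event identity and its boundary subtleties. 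One point you should make explicit to preempt a circularity objection: \eqref{eq:p2e} appears in the paper under the standing Assumption \ref{ass:A}, so you ought to note (as is true) that the identity chain in \eqref{eq:p2e} uses only the memorylessness of $\mathbf{e}$ and the definition of $\zeta$, not the assumption itself, and in particular remains valid when $\ol{X}_\zeta$ may be infinite.
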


\begin{defi}[Condition (M)]\label{conM0}
 We say the reward function $f(\cdot)$ satisfies Condition (M) with random time $\zeta$ if 
 there is a nondecreasing function $h$ such that $h(x)>0$ if and only if $x>x^\star$ for some constant $x^\star\in[-\infty,\infty)$, and it holds that
\be
\ex_x[|h(\ol{X}_{\zeta})|]<\infty, \quad f(x)=\ex_x[h(\ol{X}_{\zeta})],\quad\forall x\in\R,\label{eq:conM0}
\ee
where $\ol{X}_{\zeta}=\sup_{s\in[0,\zeta]}X_s$. Moreover, we will denote by $\Upsilon_{\zeta}$ the set of all reward functions satisfying Condition (M) with random time $\zeta$.
\end{defi}
\begin{remark}\label{rem:convex}
The set  $\Upsilon_{\zeta}$  is a convex cone. That is, if $f(\cdot),g(\cdot)\in\Upsilon_{\zeta}$, then
\begin{enumerate}
\item[(i)] $\alpha f(\cdot)\in\Upsilon_{\zeta}$,  for any $\alpha>0$;
\item[(ii)] $f(\cdot)+g(\cdot)\in\Upsilon_{\zeta}$.
\end{enumerate}
\end{remark}

\begin{remark}\label{remark:32}
In the case that the discount factor rate is a constant $r>0$, the random time $\zeta$ is the exponential random variable with mean $1/r$, which we denote as $\e_r$. Notice that, if $f(\cdot)\in\Upsilon_{\mathbf{e}_r}$, i.e., $f(x)=\ex_x[h(\ol{X}_{\mathbf{e}_r})]=\ex[h(x+\ol{X}_{\mathbf{e}_r})]$ for some nondecreasing function $h(\cdot)$, then we have $f(x)\le f(y)$ for any $x<y$.  So every element in
$\Upsilon_{\mathbf{e}_r}$ is nondecreasing. \end{remark}

Below we state our main result.

\begin{thm}\label{singlesolution}
Suppose the reward function $f(\cdot)$ is lower semi-continuous and belongs to $\Upsilon_{\zeta}$ (defined in Definition \ref{conM0}),
then we have \footnote{\label{fn:xinfty}If the reward function $f(\cdot)$ satisfies  Condition (M) but with an $x^\star=\infty$, then we have $f(x)\le 0$ for all $x\in\R$, so the value function $V(x)$ is trivially 0. In this case, Eq. \eqref{eq:singlesolution} still holds.}
\be
V(x)=\sup_{\tau\in\mathcal{T}}\ex_{x}[\et^{-A_{\tau}}f(X_{\tau})\ind_{\{\tau<\infty\}}]=\ex_x[h(\overline{X}_{\zeta})\ind_{\{\overline{X}_{\zeta}>x^{\star}\}}].\label{eq:singlesolution}
\ee
That is, the value function $V(\cdot)\in\Upsilon_\zeta$.
Besides, the optimal stopping time is the up-crossing strategy $\tau^{\star}=T_{x^{\star}}^+$. That is,
\be
\ex_x[h(\overline{X}_{\zeta})\ind_{\{\overline{X}_{\zeta}>x^{\star}\}}]=\ex_{x}[\exp(-A_{T_{x^{\star}}^{+}})f(X_{T_{x^{\star}}^{+}})\ind_{\{T_{x^{\star}}^{+}<\infty\}}].\nn
\ee
\end{thm}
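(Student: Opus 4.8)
The plan is to recast \eqref{singleproblem} as an ``average problem'' about the running maximum $\ol{X}_\zeta$, and then to sandwich the value between a universal upper bound valid for every stopping time and the value of the candidate up-crossing strategy $T_{x^\star}^+$, which I will show coincide. The cornerstone is the representation that, for every $\tau\in\setT$,
\[
\ex_x\!\left[\et^{-A_\tau} f(X_\tau)\ind_{\{\tau<\infty\}}\right]
=\ex_x\!\left[\ind_{\{\tau<\zeta\}}\, h\big(\textstyle\sup_{\tau\le s\le\zeta} X_s\big)\right].\qquad(\star)
\]

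I would prove $(\star)$ by conditioning on $\mathcal{F}_\tau$ and combining three ingredients: (a) the defining relation $f(y)=\ex_y[h(\ol{X}_\zeta)]$ from Condition (M); (b) the strong Markov property of $X_\cdot$ together with the additive structure $A_{\tau+\cdot}=A_\tau+A_\cdot\circ\theta_\tau$ of the CAF; and (c) the memoryless property of the exponential clock $\e$, which guarantees that, conditionally on $\{\tau<\zeta\}=\{A_\tau\le\e\}$ and on $\mathcal{F}_\tau$, the residual clock $\e-A_\tau$ is again a unit-mean exponential independent of the post-$\tau$ path, hence drives a fresh copy of $\zeta$ for the shifted process. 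Since $\et^{-A_\tau}=\pr(\zeta>\tau\mid\mathcal{F}_\infty)$ (as $\e$ is independent of the natural filtration of $X_\cdot$ and $A_\cdot$ is nondecreasing), these yield $\ex_x[\ind_{\{\tau<\zeta\}} h(\sup_{\tau\le s\le\zeta}X_s)\mid\mathcal{F}_\tau]=\et^{-A_\tau}\ex_{X_\tau}[h(\ol{X}_\zeta)]=\et^{-A_\tau}f(X_\tau)$ on $\{\tau<\infty\}$, both sides vanishing on $\{\tau=\infty\}$; taking expectations gives $(\star)$. I expect this identity, which requires carefully lining up the independent exponential clock with the $\mathcal{F}$-stopping time $\tau$ (that does not see $\e$), to be the main obstacle.

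Granting $(\star)$, the upper bound is immediate from the structure of $h$. Since $\sup_{\tau\le s\le\zeta}X_s\le\ol{X}_\zeta$ and $h$ is nondecreasing with $h\le 0$ on $(-\infty,x^\star]$, the integrand in $(\star)$ is dominated pathwise by $h(\ol{X}_\zeta)\ind_{\{\ol{X}_\zeta>x^\star\}}$, whence $V(x)\le\ex_x[h(\ol{X}_\zeta)\ind_{\{\ol{X}_\zeta>x^\star\}}]$; the right-hand side is finite by the integrability assumption in Condition (M).

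For the reverse inequality I would substitute $\tau=T_{x^\star}^+$ into $(\star)$ and show the bound is attained. Two observations close the gap. First, on $\{T_{x^\star}^+<\zeta\}$ one has $\sup_{0\le s<T_{x^\star}^+}X_s\le x^\star$ by the definition of the first passage time while $\sup_{T_{x^\star}^+\le s\le\zeta}X_s>x^\star$, so that $\sup_{T_{x^\star}^+\le s\le\zeta}X_s=\ol{X}_\zeta$. Second, $\{T_{x^\star}^+<\zeta\}=\{\ol{X}_\zeta>x^\star\}$ up to a $\pr_x$-null set: the inclusion ``$\subseteq$'' is clear, the reverse gives $\{\ol{X}_\zeta>x^\star\}\subseteq\{T_{x^\star}^+\le\zeta\}$, and the leftover event $\{\zeta=T_{x^\star}^+\}$ is null because $A_\cdot$ is continuous and $\e$ is a continuous random variable independent of $X_\cdot$, so $\pr_x(A_{T_{x^\star}^+}=\e)=0$. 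Combining these with $(\star)$ yields $\ex_x[\et^{-A_{T_{x^\star}^+}}f(X_{T_{x^\star}^+})\ind_{\{T_{x^\star}^+<\infty\}}]=\ex_x[h(\ol{X}_\zeta)\ind_{\{\ol{X}_\zeta>x^\star\}}]$, matching the upper bound and proving both \eqref{eq:singlesolution} and the optimality of $T_{x^\star}^+$. The degenerate cases $x^\star=-\infty$ (stop at once) and $x^\star=+\infty$ (never stop, cf.\ the footnote) are checked directly, and under case (ii) of Assumption \ref{ass:A} one simply reads $\ol{X}_\zeta=\ol{X}_\infty$, which is finite a.s.; note that lower semi-continuity of $f$ is not actually used in this verification argument.
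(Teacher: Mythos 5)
Your proposal is correct, and it takes a genuinely different route from the paper's own proof. The paper argues by supermartingale verification: it defines $v(x)=\ex_x[h(\ol{X}_\zeta)\ind_{\{\ol{X}_\zeta>x^\star\}}]$, proves that $(\et^{-A_t}v(X_t))_{t\ge0}$ is a $\pr_x$-supermartingale using memorylessness of the clock at \emph{deterministic} times, establishes your identity $(\star)$ only for first-passage times $T_z^+$ (its Eq.~\eqref{eq:hitpay1}), checks $v\ge\max\{f,0\}$, and then treats a general $\tau$ by optional sampling along the discretizations $\tau_n=\min\{m/n:m/n\ge\tau\}$, Fatou's lemma, and lower semi-continuity of $\max\{f,0\}$. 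You instead prove the exact identity $(\star)$ at an \emph{arbitrary} $\mathcal{F}$-stopping time $\tau$ by regenerating the clock there --- memorylessness of $\mathbf{e}$, the strong Markov property, and the CAF shift relation $A_{\tau+u}=A_\tau+A_u\circ\theta_\tau$ at $\tau$ --- after which the theorem reduces to a pathwise comparison: the integrand in $(\star)$ is dominated by $h(\ol{X}_\zeta)\ind_{\{\ol{X}_\zeta>x^\star\}}$ (using $h\le0$ on $(-\infty,x^\star]$ and monotonicity above $x^\star$), with equality at $\tau=T_{x^\star}^+$ because $\{T_{x^\star}^+<\zeta\}=\{\ol{X}_\zeta>x^\star\}$ up to a $\pr_x$-null set (your atomlessness argument $\pr_x(A_{T_{x^\star}^+}=\mathbf{e})=0$ is sound, since $A_\zeta=\mathbf{e}$ on $\{\zeta<\infty\}$ by continuity of $A_\cdot$) and $\sup_{s\in[T_{x^\star}^+,\zeta]}X_s=\ol{X}_\zeta$ on that event --- both facts are also implicit in the paper's \eqref{eq:p2e} and its derivation of \eqref{eq:hitpay}. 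What each approach buys: yours requires the strong Markov property and CAF additivity at a general stopping time, precisely what the paper avoids (it discretizes exactly because it cannot invoke optional sampling for the continuous-time supermartingale without knowing right-continuity of its paths), but in exchange you eliminate the discretization, the Fatou step, and the lower semi-continuity hypothesis altogether; your observation that l.s.c.\ is never used confirms and even sharpens the paper's Remark~\ref{rmk:33}, since in your argument no approximating sequence $\tau_n$ ever appears. Two points you should make explicit to be airtight: first, on $\{\tau<\zeta\}$ the positive part of your integrand is dominated by $h^+(\ol{X}_\zeta)$, which is integrable by Condition (M) (Definition~\ref{conM0}), so both sides of $(\star)$ are well defined in $[-\infty,\infty)$ and the tower-property computation is legitimate even when $f$ is unbounded below; second, \eqref{eq:singlesolution} exhibits $V$ with representing function $h(\cdot)\ind_{\{\cdot>x^\star\}}$, which is what yields the claim $V(\cdot)\in\Upsilon_\zeta$ in the statement (and, if needed downstream as in Proposition~\ref{prop:induction} and Lemma~\ref{lem:cont}, the supermartingale property of $\et^{-A_t}V(X_t)$ is recoverable from $(\star)$ applied to $V$ itself).
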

\begin{proof}
Let us define function
\be
v(x):=\ex_x[h(\overline{X}_{\zeta})\ind_{\{\overline{X}_{\zeta}>x^{\star}\}}].\nn\ee
We first prove that $(\et^{-A_{t}}v(X_{t}))_{t\geq 0}$ is a $\pr_{x}$-supermartingale for any $x\in\mathbb{R}$.
As $A_{t}$ is additive, the random variable $\zeta$ has the key property of being memoryless and $\pr_{x}(\zeta>t|\mc{F}_s,\zeta>s)=\pr_{x}(A_s+A_{t-s}\circ\theta_s<\e|\mc{F}_s,A_s<\e)=\pr_{X_s}(A_{t-s}<\mathbf{e})=\ex_{X_s}[\et^{-A_{t-s}}]$ for $t>s$. On the event $\{t<\zeta\}$, the identity $\overline{X}_{\zeta}=\overline{X}_{t}\vee\sup_{s\in[t,\zeta]}X_s\ge \sup_{s\in[t,\zeta]}X_s$ holds almost surely, so does $h(\overline{X}_{\zeta})\ind_{\{\overline{X}_{\zeta}>x^\star\}}\ge h(\sup_{s\in[t,\zeta]}X_s)\ind_{\{\sup_{s\in[t,\zeta]}X_s>x^{\star}\}}$, thanks to the nondecreasing property of $h(\cdot)\ind_{\{\cdot>x^\star\}}$. Hence, for any $x\in\R$,
\begin{align}\nonumber
v(x)&=\ex_{x}[h(\overline{X}_{\zeta})\ind_{\{\overline{X}_{\zeta}>x^{\star}\}}]\\\nonumber
&\geq \ex_{x}\big[\ex_{x}[h(\overline{X}_{\zeta})\ind_{\{\overline{X}_{\zeta}>x^{\star}\}}\ind_{\{t<\zeta\}}|\mathcal{F}_{t}, t<\zeta]\big]\\\nn
&\geq \ex_{x}\big[\ind_{\{t<\zeta\}}\ex_{x}[h(\sup_{s\in[t,\zeta]}X_s)\ind_{\{\sup_{s\in[t,\zeta]}X_s>x^{\star}\}}|\mathcal{F}_{t}, t<\zeta]\big]\\\nonumber
&=\ex_{x}\big[\ind_{\{t<\zeta\}}\ex_{X_{t}}[h(M)\ind_{\{M>x^\star\}}]\big]\\\nonumber
&=\ex_{x}[\ind_{\{t<\zeta\}}v(X_{t})]=\ex_x[\exp(-A_t)v(X_t)],
\end{align}
where $M$ is a random variable whose law under $\pr_{X_t}$ is identical to the conditional law of $\sup_{[t,\zeta]}X_s$ under $\pr_x$ given $\mc{F}_t$ and $\{t<\zeta\}$.

Second, we identify $v(\cdot)$ as the expected payoff of the up-crossing strategy $T_{x^\star}^+$. That is,
\be
v(x)=\ex_{x}[\exp(-A_{T_{x^{\star}}^{+}})f(X_{T_{x^{\star}}^+})\ind_{\{T_{x^{\star}}^{+}<\infty\}}].\label{eq:hitpay}
\ee
In fact,  for any $z\in\R$, by conditioning,
\begin{align*}
&\ex_{x}[h(\overline{X}_{\zeta})\ind_{\{\overline{X}_{\zeta}>z\}}]
=\ex_{x}[h(\overline{X}_{\zeta})\ind_{\{T_{z}^{+}<\zeta, T_{z}^+<\infty\}}]=\ex_{x}[h(\overline{X}_{\zeta})\ind_{\{A_{T_{z}^{+}}<\mathbf{e}\}}\ind_{\{T_{z}^+<\infty\}}]\\
=&\ex_{x}[\ind_{\{T_{z}^+<\infty\}}\ind_{\{A_{T_{z}^{+}}<\mathbf{e}\}}\ex_{x}[h(\overline{X}_{\zeta})|\mathcal{F}_{T_z^+},T_z^+<\zeta]].
\end{align*}
On the event $\{T_z^+<\zeta\}$, the identity
$\ol{X}_\zeta=\ol{X}_{T_z^+}\vee\sup_{t\in[T_z^+,\zeta]}X_t=\sup_{t\in[T_z^+,\zeta]}X_t$ holds almost surely because $\ol{X}_{T_z^+}=X_{T_z^+}$, hence we have
\begin{align}
\ex_{x}[h(\overline{X}_{\zeta})|\mathcal{F}_{T_z^+},T_z^+<\zeta]
=\ex_{X_{T_z^+}}[h(M)]=f(X_{T_z^+}),\nn
\end{align}
where $M$ is a random variable whose law under $\pr_{X_{T_z^+}}$ is identical to the conditional law of $\sup_{t\in[T_z^+,\zeta]}X_t$ under $\pr_x$ given $\mc{F}_{T_z^+}$ and $\{T_z^+<\zeta\}$, from which the last equality results.

It follows from the tower property of conditional expectations that
\begin{align}\ex_{x}[h(\overline{X}_{\zeta})\ind_{\{\overline{X}_{\zeta}>z\}}]=\ex_x[\ex_x[\ind_{\{T_z^+<\infty\}}\ind_{\{A_{T_z^+}<\mathbf{e}\}}f(X_{T_z^+})|\mathcal{F}_{T_z^+}]]=\ex_x[\exp(-A_{T_z^+})f(X_{T_z^+})\ind_{\{T_z^+<\infty\}}].\label{eq:hitpay1}\end{align}
Applying \eqref{eq:hitpay1} for $z=x^\star$ we obtain \eqref{eq:hitpay}.

As a consequence, we know that, for any $x\ge x^\star$, $v(x)=\ex_{x}[h(\overline{X}_{\zeta})]=f(x)$. And it is easy to see that
\begin{align}
f(x)-v(x)=\ex_x[h(\ol{X}_\zeta)]-\ex_x[h(\ol{X}_\zeta)\ind_{\{\ol{X}_\zeta>x^\star\}}]=\ex_x[h(\ol{X}_\zeta)\ind_{\{\ol{X}_\zeta\leq x^\star\}}]\leq0,\nn
\end{align}
since $h(\ol{X}_\zeta)\ind_{\{\ol{X}_\zeta\leq x^\star\}}$ is a non-positive random variable.

In summary,  we have proved that, for any $x\in\R$, $(\et^{-A_t}v(X_t))_{t\ge0}$ is a positive $\pr_x$-supermartingale, and that $v(x)\ge \max\{f(x),0\}
\ge f(x)$ for all $x\in\R$. To finish the proof, we need to show that, for any stopping time $\tau\in\mc{T}$, we have
\be
v(x)\ge \ex_x[\et^{-A_{\tau}}f(X_{\tau})\ind_{\{\tau<\infty\}}], \quad\forall x\in\R.\label{eq:objective}
\ee
To establish the above inequality (without knowing if $(\et^{-A_t}v(X_t))_{t\ge0}$ is c\`adl\`ag), we let $n\in\mathbb{N}$ be a positive integer, and define
\be
\tau_n:=\min\{\frac{m}{n}: m\in\mathbb{N}, \frac{m}{n}\ge\tau\}\text{ whenever }\tau<\infty.\nn
\ee
Then by applying the optional sampling theorem and Fatou's lemma to the discrete-time supermartingale $(\exp(-A_{\frac{k}{n}})v(X_\frac{k}{n}))_{k=0,1,\ldots}$, we obtain that
\begin{align}v(x)\ge &\ex_x[\et^{-A_{\tau_n}}v(X_{\tau_n})\ind_{\{\tau_n<\infty\}}]\ge\ex_x[\et^{-A_{\tau_n}}\max\{f(X_{\tau_n}),0\}\ind_{\{\tau_n<\infty\}}], \quad\forall x\in\R,\label{eq:optional1}\end{align}
where the last inequality follows from the fact that $v(x)\ge \max\{f(x),0\}$ for all $x\in\R$. On the other hand, notice  that $\{\tau_n<\infty\}=\{\tau<\infty\}$ for all $n\in\mathbb{N}$, and as $n\to\infty$, we have $\tau_n\downarrow\tau$ on $\{\tau<\infty\}$. Because L\'evy process $X_\cdot$ has c\`adl\`ag path almost surely, and $A_\cdot$ is continuous and nondecreasing, we know that, as $n\to\infty$, $X_{\tau_n}\to  X_\tau$ and $A_{\tau_n}\downarrow A_{\tau}$ on event $\{\tau<\infty\}$.  Moreover, we also know that the function $\max\{f(x),0\}$ is lower semi-continuous, so on event $\{\tau<\infty\}$, we have $\liminf_{n\to\infty}\max\{f(X_{\tau_n}),0\}\ge \max\{f(X_\tau),0\}$. Together, we apply Fatou's lemma and a property of liminf (see Lemma \ref{lem:inf}) to obtain that 
\begin{align}
\liminf_{n\to\infty}\ex_x[\et^{-A_{\tau_n}}\max\{f(X_{\tau_n}),0\}\ind_{\{\tau_n<\infty\}}]&\ge \ex_x[\liminf_{n\to\infty}(\et^{-A_{\tau_n}}\max\{f(X_{\tau_n}),0\})\ind_{\{\tau<\infty\}}]\nn\\
&= \ex_x[\lim_{n\to\infty}\et^{-A_{\tau_n}}\cdot\liminf_{n\to\infty}\max\{f(X_{\tau_n}),0\}\cdot\ind_{\{\tau<\infty\}}]\nn\\
&\ge\ex_x[\et^{-A_\tau}\max\{f(X_{\tau}),0\}\ind_{\{\tau<\infty\}}]\nn\\
&\ge\ex_x[\et^{-A_\tau}f(X_{\tau})\ind_{\{\tau<\infty\}}],\label{eq:optional2}
\end{align}
where the last inequality follows from the fact that $\max\{f(x),0\}\ge f(x)$ for all $x\in\R$.
 By combining \eqref{eq:optional1} with \eqref{eq:optional2}, we obtain \eqref{eq:objective}.
 This completes the proof.
\end{proof}

\begin{remark}\label{rmk:33}
A close look at the proof reveals that the lower semi-continuity is only used in the second inequality of \eqref{eq:optional2}. More generally, the final result still holds if the reward function $f(\cdot)$ is such that $\liminf_{n\to\infty}f(X_{\tau_n})\ge f(X_\tau)$ on event $\{\tau<\infty\}$, for any nonincreasing sequence of stopping times $\{\tau_n\}_{n\ge1}$ that converges to $\tau$ almost surely.
\end{remark}

\begin{remark}\label{rem32}
From the above proof, it is easily seen that the result of Theorem \ref{singlesolution} still holds if we drop the monotonicity condition for $h(\cdot)$ over $(-\infty,x^\star]$ in Definition \ref{conM0}. That is, for single stopping problems, the only condition needed for $h(\cdot)$ is that $h(x)>0$ if and only if $x>x^\star$ and $h(\cdot)$ is nondecreasing over $(x^\star,\infty)$. Moreover, 
it is easily seen that the optimal stopping region for problem \eqref{singleproblem} is  contained in the set $\mathop{Supp}^+(f):=\{x\in\R: f(x)>0\}$ provided it is nonempty, so
if there is an $x_0\in\mathbb{R}$ such that $\mathop{Supp}^+(f)$ is a nonempty subset of   $[x_0,\infty)$, then it suffices to apply Theorem  \ref{singlesolution} (and verify Condition (M)) only for $x>x_0$. For example, if the reward function is $(\et^x-K)^+$ with $K>0$, then we can take $x_0=\log K$ and verify Condition (M) for $\et^x-K$.
\end{remark}

\subsection{The case of spectrally negative L\'evy processes}\label{sec:SNLPs}
For a given reward function $f(\cdot)$, we have seen that a sufficient condition for the optimal strategy to be of threshold type is to look for a representation as prescribed in Theorem \ref{singlesolution}. In general, it is a very challenging, if not impossible task to obtain such  representation.
In this section, we focus on the special case that $X_\cdot$ is a spectrally negative L\'{e}vy process, and derive the $h$ function based on conditions on $f$ and the random discounting term $A_\cdot$.

\begin{remark}\label{rmk:hazard}
If $X_\cdot$ has no positive jumps, then from the strong Markov property and the additive property of $X_\cdot$ and $A_\cdot$, we notice that, for any $z_1>z_2\ge x$,
\begin{align}
\pr_x(\overline{X}_\zeta>z_1)=&\ex_x[\exp(-A_{T_{z_1}^+})\ind_{\{T_{z_1}^+<\infty\}}]\nn\\
=&\ex_x[\exp(-A_{T_{z_2}^+})\ind_{\{T_{z_2}^+<\infty\}}\ex_{z_2}[\exp(-A_{T_{z_1}^+})\ind_{\{T_{z_1}^+<\infty\}}]]\nn\\
=&\pr_x(\overline{X}_\zeta>z_2)\pr_{z_2}(\overline{X}_\zeta>z_1).\label{eq:law}
\end{align}
\end{remark}
Hence, the law of $\overline{X}_\zeta$ exhibits some properties similar as exponential random variable,
in the sense that
\[\pr_x(\ol{X}_\zeta>z_1|\ol{X}_\zeta>z_2)=\pr_{z_2}(\ol{X}_\zeta>z_1).\]
Moreover,  it becomes clear after \eqref{eq:law} that the cumulative distribution function of $\overline{X}_\zeta$ under $\pr_x$ will be differentiable at $z$  for every $x\le z$, so long as it holds for $x=z$. In other words, the ``hazard rate'' of $\overline{X}_\zeta$ under $\pr_x$, if exists, will not depend on the starting point $x$. This gives rise to the following assumption.
\begin{Assumption}\label{assumption'}
There exists a positive function $\Lambda(\cdot)\in\mathbf{L}^\ind_{\textup{loc}}(\R)$, such that,
\begin{align}\label{eq:A321}
\int_{x}^\infty\Lambda(z)\diff z=\infty,\quad
\Lambda(z)=-\frac{1}{\pr_x(\overline{X}_\zeta>z)}\frac{\diff\pr_x(\overline{X}_\zeta>z)}{\diff z},\quad\,\,\forall x<&z.
\end{align}
\end{Assumption}
\begin{remark}\label{rmk:constantr}
In the case that the discount factor rate is a constant $r>0$,  the running maximum $\ol{X}_\zeta=\ol{X}_{\mathbf{e}_r}$ follows an exponential distribution with mean $1/\Phi(r)$, where $\mathbf{e}_r$ is an independent exponential random variable with mean $1/r$, and $\Phi(r)>0$ is the right inverse of the Laplace exponent of $X_\cdot$, see Appendix \ref{sec:SNLP}. So we have $\Lambda(z)=\Phi(r)$ for all $z\in\R$.
\end{remark}

\begin{cor}\label{cor1}
Under Assumption \ref{assumption'}, we have that
\be
\pr_x(\overline{X}_\zeta> z)=\exp(-\int_x^z\Lambda(y)\diff y),\,\quad\forall z>x.\nn
\ee
In particular, \eqref{eq:A321} implies that $\pr_x(\ol{X}_\zeta<\infty)=1$ (so Assumption \ref{ass:A} holds by Lemma \ref{lem:equivalence}). Moreover, on the event $\{T_z^+<\infty\}$, we have
\be
\pr_x(\overline{X}_{\zeta}>y|\mathcal{F}_{T_z^+}, \overline{X}_{\zeta}>z)=\exp(-\int_{z}^y\Lambda(u)\diff u),\quad\forall y>z\ge x.\label{eq:cond}
\ee
\end{cor}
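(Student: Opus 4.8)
The plan is to read Assumption \ref{assumption'} as a first--order linear ODE for the tail of $\ol{X}_\zeta$ and integrate it, the only genuine work being the normalisation at the left endpoint. Fix $x$ and set $G(z):=\pr_x(\ol{X}_\zeta>z)$ for $z\ge x$. Equation \eqref{eq:A321} says precisely that $\frac{\diff}{\diff z}\log G(z)=-\Lambda(z)$ on $(x,\infty)$, and since $\Lambda\in\mathbf{L}^\ind_{\textup{loc}}(\R)$ the map $\log G$ is locally absolutely continuous there; integrating between $x<w<z$ gives $G(z)=G(w)\exp(-\int_w^z\Lambda(y)\diff y)$. Letting $w\downarrow x$, using $\int_w^z\Lambda\to\int_x^z\Lambda$ (local integrability) together with $\{\ol{X}_\zeta>w\}\uparrow\{\ol{X}_\zeta>x\}$, yields $G(z)=c(x)\exp(-\int_x^z\Lambda(y)\diff y)$, where $c(x):=\lim_{w\downarrow x}G(w)=\pr_x(\ol{X}_\zeta>x)\in(0,1]$; the constant is strictly positive since $G(z)>0$ for every $z>x$ (the hazard rate in \eqref{eq:A321} is only defined where $G$ does not vanish).

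The crux is to show $c(x)=1$; this is exactly where the absence of positive jumps enters, and an error here would leave a spurious factor in the formula. I would invoke the multiplicative identity \eqref{eq:law} of Remark \ref{rmk:hazard}, which holds because $X_{T_w^+}=w$ for a spectrally negative process. As Assumption \ref{assumption'} is posited at every real starting point with the \emph{same} $\Lambda$, the previous paragraph applied at starting point $w$ also gives $\pr_w(\ol{X}_\zeta>z)=c(w)\exp(-\int_w^z\Lambda(y)\diff y)$ for $z>w$. Substituting both representations into \eqref{eq:law} with $x<w<z$, namely $G(z)=G(w)\,\pr_w(\ol{X}_\zeta>z)$, the common factor $c(x)\exp(-\int_x^z\Lambda)$ cancels and forces $c(w)=1$. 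Since $w>x$ and $x$ are arbitrary, $c(\cdot)\equiv 1$ on $\R$; in particular $c(x)=1$, which is the first assertion.

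The remaining two claims follow quickly. For $\pr_x(\ol{X}_\zeta<\infty)=1$ I would let $z\to\infty$ in the established formula: $\pr_x(\ol{X}_\zeta=\infty)=\lim_{z\to\infty}G(z)=\exp(-\int_x^\infty\Lambda(y)\diff y)=0$ by the divergence in \eqref{eq:A321}, whence Assumption \ref{ass:A} holds via Lemma \ref{lem:equivalence}. For \eqref{eq:cond}, on $\{T_z^+<\infty\}$ the event $\{\ol{X}_\zeta>z\}$ coincides with $\{T_z^+<\zeta\}$ by \eqref{eq:p2e}, and there $X_{T_z^+}=z$; conditioning on $\mc{F}_{T_z^+}$ and using the strong Markov property together with the memorylessness of $\zeta$ relative to the additive functional $A_\cdot$ (exactly as in the proof of Theorem \ref{singlesolution}), the conditional law of $\ol{X}_\zeta$ on $\{\ol{X}_\zeta>z\}$ is that of the running maximum of a copy started afresh at $z$, so the conditional tail equals $\pr_z(\ol{X}_\zeta>y)=\exp(-\int_z^y\Lambda(u)\diff u)$ by the first part. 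Throughout, the one delicate point is the endpoint normalisation $c(x)=1$ in the second paragraph; everything else is bookkeeping.
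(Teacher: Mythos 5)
Your proof is correct. For \eqref{eq:cond} you follow essentially the paper's route: the paper's proof consists only of this part, noting $\{\ol{X}_\zeta>y\}=\{A_{T_y^+}<\mathbf{e},\,T_y^+<\infty\}$, invoking the memorylessness of $\mathbf{e}$ together with the additivity of $A_\cdot$, and iterating conditional expectations given $\mc{F}_{T_z^+}$; your ``fresh copy started at $z$'' formulation is the same computation. Where you genuinely diverge is the first display, which the paper does not prove at all (its proof opens with ``We only prove \eqref{eq:cond} below''), implicitly treating it as immediate from integrating the hazard identity \eqref{eq:A321}. The suppressed left-endpoint normalisation there is $\pr_x(\ol{X}_\zeta>x)=1$, which for a spectrally negative process (not the negative of a subordinator) follows from regularity of the upper half-line: $T_x^+=0$ $\pr_x$-a.s., so by \eqref{eq:p2e} one gets $\pr_x(\ol{X}_\zeta>x)=\ex_x[\et^{-A_{T_x^+}}\ind_{\{T_x^+<\infty\}}]=\ex_x[\et^{-A_0}]=1$. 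You instead pin down the constant $c(\cdot)$ purely algebraically, substituting the representation at two starting points into the multiplicative identity \eqref{eq:law} of Remark \ref{rmk:hazard} and cancelling the common positive factor; this is legitimate, since \eqref{eq:law} is derived in the paper and Assumption \ref{assumption'} posits the \emph{same} $\Lambda$ for every starting point, and it buys you independence from the path-regularity fact (at the cost of a longer argument; note both routes ultimately rest on the absence of positive jumps, yours through $X_{T_{z_2}^+}=z_2$ inside \eqref{eq:law}). One minor point worth flagging: your passage from the pointwise identity \eqref{eq:A321} to $G(z)=G(w)\exp(-\int_w^z\Lambda(y)\diff y)$ tacitly uses the classical theorem that an everywhere-differentiable function with locally integrable derivative is absolutely continuous, so the fundamental theorem of calculus applies; if \eqref{eq:A321} were read as holding only a.e., a singular part of the monotone function $G$ could survive, so your reading of the assumption as an everywhere statement is the right (and intended) one. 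Your deduction of $\pr_x(\ol{X}_\zeta<\infty)=1$ from $\int_x^\infty\Lambda(y)\diff y=\infty$ and Lemma \ref{lem:equivalence} matches the paper's (unproved) parenthetical exactly.
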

\begin{proof}
We only prove \eqref{eq:cond} below. Notice that $\{\overline{X}_\zeta>y\}=\{T_y^+<\zeta,T_y^+<\infty\}=\{A_{T_y^+}<\mathbf{e}, T_y^+<\infty\}$. By the memoryless property of exponential random variable, we have
\[\pr_x(A_{T_y^+}<\mathbf{e}|\mathcal{F}_\infty, A_{T_z^+}<\mathbf{e}, T_y^+<\infty)=\exp(A_{T_y^+}-A_{T_{z}^+})=\exp(A_{T_y^+}\circ\theta_{T_z^+}).\]
The conclusion now follows from iterated conditional expectations given $\mathcal{F}_{T_z^+}$.
\end{proof}

To find a representation of the reward function $f(\cdot)$ as in Definition \ref{conM0}, we make the following assumption.
\begin{Assumption}\label{assumption''}
The reward function $f(\cdot)$ satisfies the following:
\begin{enumerate}
\item[(i)] $f(x)>0$ for all sufficiently large $x$;
\item[(ii)] The reward function $f(\cdot)$ is absolutely continuous with respect to the Lebesgue measure. Let $h(x)=f(x)-\frac{f'(x)}{\Lambda(x)}$, a.e. $x\in\mathbb{R}$. Then there is an $x^\star\in[-\infty,\infty]$ such that
\begin{enumerate}
\item[(a)] $h(x)>0$ a.e. $x>x^\star$ and $h(x)\le 0$ a.e. $x\le x^\star$ (if $x^\star=\infty$ then $h(x)\le 0$ for all $x\in\R$);
\item[(b)] the function $h(\cdot)$ is nondecreasing over $(x^\star,\infty)$.\footnote{\label{fn:aeh}This means that there is a nondecreasing function $\tilde{h}(\cdot)$, such that, $h(x)=\tilde{h}(x)$, almost everywhere on $(x^\star,\infty)$.}
\end{enumerate}
\end{enumerate}
\end{Assumption}
\begin{remark}\label{rmk:constantr1}
By using \eqref{eq:p2e}, Corollary \ref{cor1} and Assumption \ref{assumption''}(ii), we know that, the mapping 
\[z\mapsto \ex_x[\et^{-A_{T_z^+}}f(X_{T_z^+})\ind_{\{T_z^+<\infty\}}]=f(z)\pr_x(\ol{X}_\zeta>z)=f(z)\exp(-\int_x^z\Lambda(y)\diff y),\quad\forall z>x,\] is differentiable almost everywhere, and that this function is nondecreasing over $[x,x^\star\vee x]$, and is strictly decreasing over $[x^\star\vee x,\infty)$, i.e., the function is maximized at $x^\star\vee x$ for each $x$.
In the case that the discount factor rate is a constant $r>0$, then for any fixed $\beta\in(0,\Phi(r))$ and $K>0$, functions $(e^{\beta x}+\et^{\Phi(r)x}-K)^+$ and $(\et^{\Phi(r)x}-K)^+$ satisfy Assumption \ref{assumption''}(i),(ii), with $x^\star=\frac{1}{\beta}\log(K/(1-\frac{\beta}{\Phi(r)}))$ and $x^\star=\infty$, respectively. 
\end{remark}

The following lemma explains why Assumption \ref{assumption''} is necessary for Condition (M) to hold.
\begin{prop}\label{prop:necessary}
Suppose Assumption \ref{assumption'} holds, $f(\cdot)$ satisfies Condition (M) and let $h(\cdot)$ be the nondecreasing function in Definition \ref{conM0}. Then all conditions in Assumption \ref{assumption''} hold. 
\end{prop}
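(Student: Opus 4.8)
The plan is to convert the defining integral representation of $f$ into an ordinary differential relation and then read off Assumption \ref{assumption''} directly from the known shape of the Condition (M) function $h$. First I would invoke Corollary \ref{cor1}: under Assumption \ref{assumption'} the survival function $\pr_x(\ol{X}_\zeta>z)=\exp(-\int_x^z\Lambda(y)\diff y)$ for $z>x$ tends to $1$ as $z\downarrow x$ (because $\Lambda$ is locally integrable), so $\ol{X}_\zeta$ has no atom at its starting point and admits the density $z\mapsto\Lambda(z)\exp(-\int_x^z\Lambda(y)\diff y)$ on $(x,\infty)$ under $\pr_x$. Substituting this into \eqref{eq:conM0} gives the explicit formula $f(x)=\int_x^\infty h(z)\Lambda(z)\exp(-\int_x^z\Lambda(y)\diff y)\,\diff z$.

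Next, fixing a reference point $x_0$ and writing $\Psi(x):=\int_{x_0}^x\Lambda(y)\diff y$, I would factor the kernel as $\exp(-\int_x^z\Lambda)=e^{\Psi(x)}e^{-\Psi(z)}$, so that $f(x)=e^{\Psi(x)}g(x)$ where $g(x):=\int_x^\infty h(z)\Lambda(z)e^{-\Psi(z)}\,\diff z$. The integrability hypothesis $\ex_x[|h(\ol{X}_\zeta)|]<\infty$ in Condition (M) says precisely that $z\mapsto h(z)\Lambda(z)e^{-\Psi(z)}$ is Lebesgue integrable on $[x,\infty)$ for every $x$; hence $g$ is well defined and finite, and being the tail integral of an integrable function it is absolutely continuous with $g'(x)=-h(x)\Lambda(x)e^{-\Psi(x)}$ for a.e.\ $x$. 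Since $\Lambda\in\mathbf{L}^\ind_{\textup{loc}}(\R)$, the functions $\Psi$ and hence $e^{\Psi}$ are locally absolutely continuous, so $f=e^{\Psi}g$ is locally absolutely continuous, which is the absolute continuity asserted in Assumption \ref{assumption''}(ii). Differentiating the product then yields, for a.e.\ $x$, $f'(x)=\Lambda(x)e^{\Psi(x)}g(x)-\Lambda(x)h(x)=\Lambda(x)\big(f(x)-h(x)\big)$, and since $\Lambda>0$ this rearranges to $h(x)=f(x)-f'(x)/\Lambda(x)$ a.e.\ In other words, the function constructed in Assumption \ref{assumption''}(ii) coincides a.e.\ with the Condition (M) function $h$.

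With this identification the remaining items are immediate from the properties of $h$ in Definition \ref{conM0}. Because $h$ is nondecreasing with $h(x)>0$ iff $x>x^\star$ and $x^\star\in[-\infty,\infty)$, conditions (a) and (b) hold verbatim, the monotone representative $\tilde{h}$ of footnote \ref{fn:aeh} being $h$ itself. For Assumption \ref{assumption''}(i), I would argue that for any $x>x^\star$ we have $\ol{X}_\zeta\ge x>x^\star$ a.s., so monotonicity gives $h(\ol{X}_\zeta)\ge h(x)>0$ and therefore $f(x)=\ex_x[h(\ol{X}_\zeta)]\ge h(x)>0$; thus $f$ is positive on all of $(x^\star,\infty)$, in particular for all sufficiently large $x$.

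The only genuinely delicate point is the differentiation step: \emph{a priori} $\Lambda$ is merely locally integrable and $h$ need not be continuous, so one cannot naively differentiate under the integral sign. The factorization $f=e^{\Psi}g$ is exactly what resolves this, since it isolates the entire $x$-dependence into a locally absolutely continuous prefactor $e^{\Psi}$ and a tail integral $g$ whose a.e.\ derivative is governed by the Lebesgue differentiation theorem for integrable functions, with the Condition (M) integrability bound guaranteeing that $g$ is finite and absolutely continuous rather than merely formal.
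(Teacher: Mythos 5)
Your proposal is correct and follows essentially the same route as the paper's proof: both pass through the representation $f(x)=\int_x^\infty h(z)\Lambda(z)\et^{-\int_x^z\Lambda(y)\diff y}\diff z$ from Corollary \ref{cor1}, deduce Assumption \ref{assumption''}(i) from positivity of $h$ beyond $x^\star$, and obtain (ii) from the a.e.\ identity $f'(x)=\Lambda(x)\big(f(x)-h(x)\big)$. Your factorization $f=\et^{\Psi}g$ merely supplies the rigorous justification that the paper leaves implicit in its one-line differentiation step, which is a welcome but not substantively different elaboration.
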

\begin{proof}
By \eqref{eq:conM0} and Corollary \ref{cor1}, we know that 
\[f(x)=\ex_x[h(\ol{X}_\zeta)]=\int_{x}^\infty h(z)\Lambda(z)\et^{-\int_x^z\Lambda(y)\diff y}\diff z,\quad\forall x\in\R.\]
Clearly, $f(x)>0$ for all $x>x^\star$, so Assumption \ref{assumption''}(i) holds. On the other hand, from $f'(x)=-\Lambda(x)h(x)+\Lambda(x)f(x)$, we know that $f(\cdot)$ is absolutely continuous, and $h(x)=f(x)-\frac{f'(x)}{\Lambda(x)}$, a.e.. So  Assumption \ref{assumption''}(ii) holds.
\end{proof}

Before we prove a reverse of Proposition \ref{prop:necessary}, we show that the limit of the value for the up-crossing strategy $T_z^+$ as the threshold $z\to\infty$ is well-defined. 

\begin{lem}\label{lem:limit}
Under Assumption \ref{assumption'} and Assumption \ref{assumption''}, the limit 
 \be
c_0:=\lim_{z\to\infty}f(z)\et^{-\int_0^z\Lambda(y)\diff y}=\lim_{z\to\infty}\ex[\et^{-A_{T_z^+}}f(X_{T_z^+})\ind_{\{T_z^+<\infty\}}],\label{eq:defc0}
\ee
exists. If $x^\star<\infty$, we have $c_0\in[0,\infty)$; if $x^\star=\infty$, then we have $c_0\in(0,\infty]$.
\end{lem}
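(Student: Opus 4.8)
The plan is to reduce the whole statement to the monotonicity of a single deterministic real function. Write $g(z):=f(z)\et^{-\int_0^z\Lambda(y)\diff y}$. By Remark \ref{rmk:constantr1} taken at $x=0$, for every fixed $z$ we have $\ex[\et^{-A_{T_z^+}}f(X_{T_z^+})\ind_{\{T_z^+<\infty\}}]=f(z)\et^{-\int_0^z\Lambda(y)\diff y}=g(z)$, so the two candidate expressions in \eqref{eq:defc0} agree term by term (this is exactly where spectral negativity enters, via $X_{T_z^+}=z$ on $\{T_z^+<\infty\}$). Hence it suffices to prove that $\lim_{z\to\infty}g(z)$ exists and to locate its range, and the lemma becomes a one-variable real-analysis statement about $g$.

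The key step is to compute the almost-everywhere derivative of $g$. Since $f$ is absolutely continuous (Assumption \ref{assumption''}(ii)) and $\Lambda\in\mathbf{L}^\ind_{\textup{loc}}(\R)$ (Assumption \ref{assumption'}), the product $g$ is absolutely continuous, and differentiating gives $g'(z)=\et^{-\int_0^z\Lambda}\big(f'(z)-\Lambda(z)f(z)\big)$ for a.e. $z$. Using the defining relation $h(z)=f(z)-f'(z)/\Lambda(z)$, i.e. $f'(z)=\Lambda(z)\big(f(z)-h(z)\big)$, this collapses to
\[
g'(z)=-\Lambda(z)\,h(z)\,\et^{-\int_0^z\Lambda(y)\diff y},\qquad\text{a.e. }z.
\]
Because $\Lambda>0$ and the exponential factor is strictly positive, $g'$ has the opposite sign to $h$. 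By Assumption \ref{assumption''}(ii)(a), $h>0$ a.e. on $(x^\star,\infty)$ and $h\le0$ a.e. on $(-\infty,x^\star)$, so $g$ is strictly decreasing on $(x^\star,\infty)$ and nondecreasing on $(-\infty,x^\star)$; this is consistent with the monotonicity already recorded in Remark \ref{rmk:constantr1}.

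With the monotonicity in hand the conclusion follows by elementary bookkeeping, split according to whether $x^\star$ is finite. An eventually monotone function has a limit in $[-\infty,\infty]$, so $c_0$ exists in all cases. If $x^\star<\infty$, then $g$ is decreasing on $(x^\star,\infty)$; fixing any $z_1>x^\star$ gives $g(z)\le g(z_1)<\infty$ for $z\ge z_1$, while Assumption \ref{assumption''}(i) forces $g(z)>0$ for all large $z$, so the decreasing limit $c_0=\inf_{z>z_1}g(z)$ lies in $[0,\infty)$. If $x^\star=\infty$, then $h\le0$ everywhere and $g$ is nondecreasing on all of $\R$; choosing $z_0$ with $f(z_0)>0$ (again by Assumption \ref{assumption''}(i)) yields $g(z)\ge g(z_0)>0$ for $z\ge z_0$, so $c_0=\sup_{z}g(z)\in(0,\infty]$, the value $+\infty$ being permitted since $g$ may be unbounded.

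I do not expect a genuine obstacle here: once the identity for $g'$ is established, the remainder is pure monotone-limit reasoning. The only points requiring care are justifying that $g$ is absolutely continuous (so that its a.e. sign-determined derivative actually controls its monotonicity), and correctly pinning down the endpoints of the range — in particular invoking Assumption \ref{assumption''}(i) to exclude a negative limit when $x^\star<\infty$ and a zero limit when $x^\star=\infty$, and remembering that the nondecreasing case genuinely allows $c_0=+\infty$.
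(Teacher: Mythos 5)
Your proof is correct and takes essentially the same route as the paper's: the paper likewise reduces to the deterministic function $g(z)=f(z)\et^{-\int_0^z\Lambda(y)\diff y}$, gets its monotonicity on either side of $x^\star$ from the sign of $h$ (recorded in Remark \ref{rmk:constantr1}), and concludes by monotone limits, with Assumption \ref{assumption''}(i) pinning down the endpoints exactly as you do. The only cosmetic difference is in the $x^\star=\infty$ case, where the paper phrases the nondecreasing property via $f'(x)\ge\Lambda(x)f(x)$ a.e.\ and Gr\"onwall's inequality, which is the same computation as your a.e.\ identity $g'(z)=-\Lambda(z)h(z)\,\et^{-\int_0^z\Lambda(y)\diff y}\ge0$ (both hinging, as you correctly note, on local absolute continuity of $g$).
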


\begin{remark}\label{rmk:35}
If  Condition (M) holds for $f(\cdot)$ and $f(x)=\ex_x[h(\ol{X}_\zeta)]$, then, from \eqref{eq:hitpay1} we have
\begin{align*}f(z)\et^{-\int_x^z\Lambda(y)\diff y}=\ex_x[\exp(-A_{T_z^+})f(X_{T_z^+})\ind_{\{T_z^+<\infty\}}]=\ex_x[h(\ol{X}_\zeta)\ind_{\{\ol{X}_\zeta>z\}}],\quad\forall x<z,\end{align*}
 we know that the limit of the above as $z\to\infty$ is zero (i.e., $c_0=0$), because of the integrability of random variable $h(\ol{X}_\zeta)$. Therefore, given Assumption \ref{assumption'}, the conditions given in Assumption \ref{assumption''} are more general than those in Definition \ref{conM0}.

\end{remark}

\begin{prop}\label{lem}
Suppose that Assumptions \ref{assumption'} and Assumption \ref{assumption''}  hold, and $c_0$ defined in \eqref{eq:defc0} is finite.
Then it holds that
\be
\ex_{x}[|h(\overline{X}_{\zeta})|]<\infty,\quad f(x)-c_0 \et^{\int_0^x\Lambda(y)\diff y}=\ex_{x}[h(\overline{X}_{\zeta})],\quad\forall x\in\R.\nn
\ee
\end{prop}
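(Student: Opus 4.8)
The plan is to compute $\ex_x[h(\ol{X}_\zeta)]$ directly against the explicit law of $\ol{X}_\zeta$ supplied by Corollary \ref{cor1}, rewrite the integrand using the defining relation $h\Lambda = f\Lambda - f'$ from Assumption \ref{assumption''}(ii), recognize the resulting expression as a perfect derivative, and identify its boundary term at infinity with $c_0\,\et^{\int_0^x\Lambda(y)\diff y}$ via Lemma \ref{lem:limit}.

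First I would record the law of $\ol{X}_\zeta$. By Corollary \ref{cor1}, $\pr_x(\ol{X}_\zeta>z)=\et^{-\int_x^z\Lambda(y)\diff y}$ for $z>x$, which tends to $1$ as $z\downarrow x$; hence $\ol{X}_\zeta$ carries no atom at $x$ and its distribution on $(x,\infty)$ has density $\Lambda(z)\,\et^{-\int_x^z\Lambda(y)\diff y}$. Consequently, for any nonnegative measurable $g$, $\ex_x[g(\ol{X}_\zeta)]=\int_x^\infty g(z)\Lambda(z)\,\et^{-\int_x^z\Lambda(y)\diff y}\diff z$. Writing $G(z):=\et^{-\int_x^z\Lambda(y)\diff y}$, I note that, since $\Lambda\in\mathbf{L}^{\ind}_{\textup{loc}}(\R)$, $G$ is absolutely continuous on compacts with $G'=-\Lambda G$ a.e.\ and $G(x)=1$. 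On any finite interval $[x,T]$ I would then substitute $h(z)\Lambda(z)=f(z)\Lambda(z)-f'(z)$ to obtain $h\Lambda G=-fG'-f'G=-(fG)'$ a.e.; as $f$ and $G$ are both absolutely continuous, so is $fG$, and the fundamental theorem of calculus yields $\int_x^T h(z)\Lambda(z)G(z)\diff z=f(x)G(x)-f(T)G(T)=f(x)-f(T)G(T)$. Since $G(T)=\et^{\int_0^x\Lambda(y)\diff y}\,\et^{-\int_0^T\Lambda(y)\diff y}$, Lemma \ref{lem:limit} gives $f(T)G(T)\to c_0\,\et^{\int_0^x\Lambda(y)\diff y}$ as $T\to\infty$, so the signed improper integral converges to the finite value $f(x)-c_0\,\et^{\int_0^x\Lambda(y)\diff y}$.

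The main obstacle is upgrading convergence of this \emph{signed} integral to absolute integrability, so that the improper integral genuinely equals the Lebesgue integral defining $\ex_x[h(\ol{X}_\zeta)]$. I would handle this by splitting according to the sign of $h$ prescribed in Assumption \ref{assumption''}(ii): the negative part $h^-=\max\{-h,0\}$ is supported on $(x,x^\star\vee x]$ and the positive part $h^+=\max\{h,0\}$ on $(x^\star,\infty)$. When $x^\star<\infty$, the negative part is bounded uniformly in $T$, because $\int_x^T h^-\Lambda G\,\diff z\le\int_x^{x^\star\vee x}|h|\Lambda\,\diff z\le\int_x^{x^\star\vee x}(|f|\Lambda+|f'|)\diff z<\infty$, using $G\le 1$ together with the continuity (hence local boundedness) of $f$, local integrability of $\Lambda$, and $f'\in\mathbf{L}^{\ind}_{\textup{loc}}(\R)$; when $x^\star=\infty$, one has $h^+\equiv 0$ so the positive part vanishes identically.

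In either case one of the two monotone (in $T$) quantities $P_T:=\int_x^T h^+\Lambda G\,\diff z$ and $N_T:=\int_x^T h^-\Lambda G\,\diff z$ stays bounded, while their difference $P_T-N_T=f(x)-f(T)G(T)$ converges to the finite limit found above. Hence both $P_T$ and $N_T$ increase to finite limits $P$ and $N$. This delivers the two assertions at once: $\ex_x[|h(\ol{X}_\zeta)|]=P+N<\infty$, and $\ex_x[h(\ol{X}_\zeta)]=P-N=\lim_{T\to\infty}(P_T-N_T)=f(x)-c_0\,\et^{\int_0^x\Lambda(y)\diff y}$, which is exactly the claimed identity.
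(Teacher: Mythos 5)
Your proof is correct and takes essentially the same route as the paper's: both integrate $h$ against the density $\Lambda(z)\,\et^{-\int_x^z\Lambda(y)\diff y}$ supplied by Corollary \ref{cor1}, use integration by parts (your identity $h\Lambda G=-(fG)'$ is exactly the paper's integration-by-parts step), split the integral at $x^\star$ according to the sign of $h$ from Assumption \ref{assumption''}(ii), and invoke Lemma \ref{lem:limit} to identify the boundary term $c_0\,\et^{\int_0^x\Lambda(y)\diff y}$. The only organizational difference is that the paper evaluates $\ex_x[h(\ol{X}_\zeta)\ind_{\{\ol{X}_\zeta>x^\star\}}]$ and $-\ex_x[h(\ol{X}_\zeta)\ind_{\{\ol{X}_\zeta\le x^\star\}}]$ each in closed form, while you compute only the combined quantity exactly and control the negative part by a crude local-integrability bound --- a repackaging that also handles the $x^\star=\pm\infty$ cases uniformly, which the paper dismisses as ``similar.''
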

\begin{proof}
We only prove the case that $x^\star\in\R$, the remaining cases that $x^\star$ is $\pm\infty$ can be proved similarly. 
To prove the finiteness of $\ex[|h(\overline{X}_\zeta)|]$, we prove that both $\ex[h(\overline{X}_\zeta)\ind_{\{\overline{X}_\zeta>x^\star\}}]<\infty$ and $-\ex[h(\overline{X}_\zeta)\ind_{\{\overline{X}_\zeta\le x^\star\}}]<\infty$ hold. 
To establish the former, we fix an $x$ and any constant $D>x^\star$, we use Corollary \ref{cor1} to obtain that
\begin{align*}
\ex_x[h(\overline{X}_\zeta)\ind_{\{D>\overline{X}_\zeta>x^\star\}}]=&\int_{x^\star}^D h(z)\pr_x(\overline{X}_\zeta\in\diff z)=\int_{x^\star\vee x}^{D\vee x} f(z)\Lambda(z)\et^{-\int_x^z\Lambda(y)\diff y}\diff z-\int_{x^\star\vee x}^{D\vee x} f'(z)\et^{-\int_x^z\Lambda(y)\diff y}\diff z\nn\\
=&f(x^\star\vee x)\et^{-\int_x^{x^\star\vee x}\Lambda(y)\diff y}-f(D\vee x)\et^{-\int_x^{D\vee x}\Lambda(y)\diff y},
\end{align*}
where the last step is due to integration by parts (see, e.g., \cite[Theorem 9 on page 163]{measure_integration81}).
Take the limit as $D\to\infty$, we use the monotone convergence theorem to obtain that (also using the a.s. finiteness of $\overline{X}_\zeta$)
\begin{align}
\ex_x[h(\overline{X}_\zeta)\ind_{\{\overline{X}_\zeta>x^\star\}}]&=\lim_{D\to\infty}\ex_x[h(\overline{X}_\zeta)\ind_{\{D>\overline{X}_\zeta>x^\star\}}]\nn\\
&=f(x^\star\vee x)\et^{-\int_x^{x^\star\vee x}\Lambda(y)\diff y}-\lim_{D\to\infty}f(D)\et^{-\int_x^D\Lambda(y)\diff y}\nn\\
&=f(x^\star\vee x)\et^{-\int_x^{x^\star\vee x}\Lambda(y)\diff y}-c_0 \et^{\int_0^x\Lambda(y)\diff y}<\infty,\label{eq:15}
\end{align}
where  the last step  is due to Lemma \ref{lem:limit}.

Similarly, for any fixed $x$,
\begin{align}
-\ex_x[h(\overline{X}_\zeta)\ind_{\{\overline{X}_\zeta\le x^\star\}}]&=-\int_x^{x^\star\vee x} h(z)\pr_x(\overline{X}_\zeta\in\diff z)\nn\\
&=\int_x^{x^\star\vee x} f'(z)\et^{-\int_x^z\Lambda(y)\diff y}\diff z-\int_x^{x^\star\vee x} f(z)\Lambda(z)\et^{-\int_x^z\Lambda(y)\diff y}\diff z\nn\\
&=f(x^\star\vee x)\et^{-\int_x^{x^\star\vee x}\Lambda(y)\diff y}-f(x)<\infty.\label{eq:16}
\end{align}
Therefore, we know that $\ex_x[|h(\overline{X}_\zeta)|]<\infty$. Moreover, from \eqref{eq:15} and \eqref{eq:16} we also obtain that
\begin{align*}
\ex_{x}[h(\overline{X}_{\zeta})]&=\ex_{x}[h(\overline{X}_{\zeta})\ind_{\{\overline{X}_\zeta>x^\star\}}]+\ex_{x}[h(\overline{X}_{\zeta})\ind_{\{\overline{X}_\zeta\le x^\star\}}]=f(x)-c_0 \et^{\int_0^x\Lambda(y)\diff y}.
\end{align*}
This completes the proof.\end{proof}

Below we present the main result of this section, a {\em generalization} of Theorem \ref{singlesolution}. 

\begin{thm}\label{thmSNLP}
Under Assumption \ref{assumption'} and Assumption \ref{assumption''},  we have
\begin{align}
V(x)=&\sup_{\tau\in\mc{T}}\ex_x[\et^{-A_\tau}f(X_\tau)\ind_{\{\tau<\infty\}}]=\ex_x[h(\ol{X}_\zeta)\ind_{\{\ol{X}_\zeta>x^\star\}}]+c_0 \et^{\int_0^x\Lambda(y)\diff y}.\end{align}
If $x^\star<\infty$, then the above value can be attained by stopping time $T_{x^\star}^+$. 
If $\Lambda(\cdot)$ is continuous and $c_0<\infty$,  and in the case of $x^\star<\infty$ we also have $f(\cdot)$ is continuously differentiable, then $V(\cdot)$ is also continuously differentiable. In particular, smooth fit holds at $x=x^\star$ if $x^\star\in(-\infty,\infty)$.
\end{thm}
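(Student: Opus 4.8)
The plan is to reduce the claim to Theorem \ref{singlesolution} by peeling off the ``escape-to-infinity'' harmonic term $c_0\,\et^{\int_0^x\Lambda(y)\diff y}$. Write $g(x):=\et^{\int_0^x\Lambda(y)\diff y}$ and $\tilde f(x):=f(x)-c_0\,g(x)$. By Proposition \ref{lem} (applicable whenever $c_0<\infty$, which covers the substantive cases), $\tilde f(x)=\ex_x[h(\ol X_\zeta)]$ with $h$ the function in Assumption \ref{assumption''}. Replacing $h$ by its nondecreasing modification $\tilde h$ is legitimate because $\ol X_\zeta$ has an absolutely continuous law under $\pr_x$ (density $\Lambda(z)\et^{-\int_x^z\Lambda(y)\diff y}$ on $(x,\infty)$ by Corollary \ref{cor1}), so a.e.\ changes of $h$ do not affect the integral; invoking Remark \ref{rem32} to drop the monotonicity of $h$ below $x^\star$, we get $\tilde f\in\Upsilon_\zeta$. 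Since $f$ is absolutely continuous and $g$ is continuous, $\tilde f$ is continuous, hence lower semi-continuous, so Theorem \ref{singlesolution} applies to $\tilde f$ and gives $\tilde V(x):=\sup_{\tau\in\mc T}\ex_x[\et^{-A_\tau}\tilde f(X_\tau)\ind_{\{\tau<\infty\}}]=\ex_x[h(\ol X_\zeta)\ind_{\{\ol X_\zeta>x^\star\}}]$, attained by $T_{x^\star}^+$ when $x^\star<\infty$.

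The crux is a lemma on $g$ with two parts. First, since $X_\cdot$ is spectrally negative we have $X_{T_z^+}=z$ on $\{T_z^+<\infty\}$, so combining \eqref{eq:p2e} with Corollary \ref{cor1} yields the up-crossing harmonicity $\ex_x[\et^{-A_{T_z^+}}g(X_{T_z^+})\ind_{\{T_z^+<\infty\}}]=g(z)\,\pr_x(\ol X_\zeta>z)=g(x)$ for all $z\ge x$. Second, $(\et^{-A_t}g(X_t))_{t\ge0}$ is a nonnegative $\pr_x$-supermartingale. For the latter I would, by the strong Markov property at $t$, decompose $g(x)=\ex_x[\et^{-A_{T_z^+}}g(z)\ind_{\{T_z^+\le t\}}]+\ex_x[\et^{-A_t}g(X_t)\ind_{\{T_z^+>t\}}]$, drop the first nonnegative term, and let $z\to\infty$ (monotone convergence, using $\ol X_t<\infty$ a.s.) to get $\ex_x[\et^{-A_t}g(X_t)]\le g(x)$; the conditional supermartingale inequality then follows from the additivity of $A_\cdot$ and the Markov property in the usual way. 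Granting this, I decompose $\ex_x[\et^{-A_\tau}f(X_\tau)\ind_{\{\tau<\infty\}}]=\ex_x[\et^{-A_\tau}\tilde f(X_\tau)\ind_{\{\tau<\infty\}}]+c_0\,\ex_x[\et^{-A_\tau}g(X_\tau)\ind_{\{\tau<\infty\}}]$. The upper bound $V(x)\le\tilde V(x)+c_0 g(x)$ follows from $c_0\ge0$ and the supermartingale property (optional sampling plus Fatou for the nonnegative supermartingale handles $\tau=\infty$), while the matching lower bound comes from evaluating at $\tau=T_{x^\star}^+$, using the optimality of $T_{x^\star}^+$ for $\tilde f$ together with the harmonicity at $z=x^\star$. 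When $x^\star=\infty$ one instead takes $\tau=T_z^+$ and lets $z\to\infty$, since $\ex_x[\et^{-A_{T_z^+}}f(X_{T_z^+})\ind_{\{T_z^+<\infty\}}]=f(z)\,\et^{-\int_x^z\Lambda(y)\diff y}\to c_0\,g(x)$ by \eqref{eq:defc0}.

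For the regularity, the decisive observation is that on the continuation region the value collapses to a multiple of $g$. Writing $\et^{-\int_x^z\Lambda(y)\diff y}=g(x)/g(z)$ shows that for $x\le x^\star$ one has $\tilde V(x)=g(x)\int_{x^\star}^\infty h(z)\Lambda(z)/g(z)\,\diff z$, a constant multiple of $g$, so $V(x)=(C_1+c_0)\,g(x)$ there, with $C_1$ the integral, finite by Proposition \ref{lem}; while for $x>x^\star$ we have $V(x)=f(x)$. When $\Lambda$ is continuous, $g\in C^1$ with $g'=\Lambda g$, so each piece is $C^1$; when moreover $x^\star\in(-\infty,\infty)$ and $f\in C^1$, continuity at $x^\star$ holds since $V(x^\star-)=(C_1+c_0)g(x^\star)=f(x^\star)=V(x^\star+)$, and the derivatives match because the boundary identity $h(x^\star)=0$ (valid as $h=f-f'/\Lambda$ is now continuous and changes sign at $x^\star$) rearranges to $f'(x^\star)=\Lambda(x^\star)f(x^\star)=(C_1+c_0)g'(x^\star)$, i.e.\ $V'(x^\star-)=V'(x^\star+)$, giving smooth fit. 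The degenerate subcase $x^\star=\infty$ with $c_0<\infty$ reduces to $V=c_0 g\in C^1$ directly.

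I expect the main obstacle to be the rigorous justification of the supermartingale property of $(\et^{-A_t}g(X_t))_{t\ge0}$ and the attendant optional-sampling and Fatou limiting arguments, especially controlling the $z\to\infty$ and $\tau\to$ passages without assuming c\`adl\`ag regularity of $\et^{-A_t}g(X_t)$ in advance, together with the careful bookkeeping of the a.e.-defined $h$ and the degenerate cases $x^\star=\pm\infty$ and $c_0=\infty$, where the asserted value formula must be read as an identity between possibly infinite quantities.
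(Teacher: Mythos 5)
Your proposal is correct and follows essentially the same route as the paper's proof: you peel off the harmonic term by applying Theorem \ref{singlesolution} to $f(x)-c_0\,\et^{\int_0^x\Lambda(y)\diff y}$ (justified by Proposition \ref{lem} and the a.e.\ bookkeeping of $h$), bound the supremum from above using the supermartingale property of $(\et^{-A_t+\int_0^{X_t}\Lambda(y)\diff y})_{t\ge0}$, match it from below with $T_{x^\star}^+$ (resp.\ $T_z^+$ with $z\to\infty$ when $x^\star=\infty$, which also covers $c_0=\infty$), and obtain smooth fit from $V'=\Lambda V$ on the continuation region together with $h(x^\star)=0$. Your direct derivation of the supermartingale inequality---splitting at $T_z^+\wedge t$ and letting $z\to\infty$ by monotone convergence---is just a repackaging of the paper's Lemma \ref{lem:Ulocal}, which localizes the same first-passage identity along $(T_n^+)_{n\ge1}$ to exhibit a nonnegative local martingale.
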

\begin{proof}
 Let us first suppose that $x^\star<\infty$ and $c_0=0$. In this case, Theorem \ref{singlesolution}  already gives us both the value function and the optimal stopping time, and hence 
we only need to prove  that the smooth fit condition holds when $x^\star$ is finite. To that end, we notice that
\[V(x)=\int_{x^\star}^\infty h(z)\Lambda(z)e^{-\int_x^z\Lambda(y)\diff y}\diff z=e^{-\int_{x}^{x^\star}\Lambda(y)\diff y}\int_{x^\star}^\infty h(z)\Lambda(z)e^{-\int_{x^\star}^z\Lambda(y)\diff y}\diff z, \quad\forall x\le x^\star,\]
which is continuously differentiable over $(-\infty,x^\star)$. Indeed,
\[V'(x)=\Lambda(x)V(x),\]
so we have
$V'(x^\star-)=\Lambda(x^\star)V(x^\star)=\Lambda(x^\star)f(x^\star).$
Thus,
\[V'(x^\star-)-V'(x^\star+)=V'(x^\star-)-f'(x^\star)=\Lambda(x^\star)\bigg(f(x^\star)-\frac{f'(x^\star)}{\Lambda(x^\star)}\bigg)=\Lambda(x^\star)h(x^\star)=0,\]
which indicates that $V(\cdot)$ satisfies smooth fit at $x=x^\star$ if $x^\star\in(-\infty,\infty)$.

Let us suppose $x^\star<\infty$ and $c_0>0$. In this case, we can apply Theorem \ref{singlesolution} to the reward function $f(x)-c_0\et^{\int_0^x\Lambda(y)\diff y}$, to obtain that 
\begin{align}
\ex_x[h(\ol{X}_\zeta)\ind_{\{\ol{X}>x^\star\}}]&=
\sup_{\tau\in\mc{T}}\ex_x[\et^{-A_\tau}(f(X_\tau)-c_0\et^{\int_0^{X_\tau}\Lambda(y)\diff y})\ind_{\{\tau<\infty\}}]\nn\\
&=\ex_x[\et^{-A_{T_{x^\star}^+}}(f(X_{T_{x^\star}^+})-c_0\exp(\int_0^{X_{T_{x^\star}^+}}\Lambda(y)\diff y))\ind_{\{{T_{x^\star}^+}<\infty\}}]\nn\\
&=\ex_x[\et^{-A_{T_{x^\star}^+}}f(X_{T_{x^\star}^+})\ind_{\{{T_{x^\star}^+}<\infty\}}]-c_0\et^{\int_0^{x\vee x^\star}\Lambda(y)\diff y}\ex_x[\et^{-A_{T_{x^\star}^+}}\ind_{\{{T_{x^\star}^+}<\infty\}}]\nn\\
&=\ex_x[\et^{-A_{T_{x^\star}^+}}f(X_{T_{x^\star}^+})\ind_{\{{T_{x^\star}^+}<\infty\}}]-c_0\et^{\int_0^x\Lambda(y)\diff y}.\label{eq:c0step1}
\end{align}
On the other hand, Lemma \ref{lem:Ulocal} in Appendix \ref{sec:proof} proves that  $(\exp(-A_t+\int_0^{X_t}\Lambda(y)\diff y))_{t\ge0}$ is a nonnegative c\`adl\`ag local martingale, hence it is a supermartingale.  Hence, for any stopping time $\tau\in\mc{T}$, by using the optional sampling theorem and Fatou's lemma, we have
\[0\le \ex_x[\et^{-A_\tau+\int_0^{X_{\tau}}\Lambda(y)\diff y}\ind_{\{\tau<\infty\}}]\le \et^{\int_0^x\Lambda(y)\diff y}.\]
So 
\be
0\le \sup_{\tau\in\mc{T}}\ex_x[\et^{-A_\tau}\cdot c_0\et^{\int_0^{X_\tau}\Lambda(y)\diff y}\ind_{\{\tau<\infty\}}]\le \et^{\int_0^x\Lambda(y)\diff y}.\label{eq:c0step2}
\ee
By the well-known properties of supremum, we obtain from \eqref{eq:c0step1} and \eqref{eq:c0step2} that
\begin{align}
&\sup_{\tau\in\mc{T}}\ex_x[\et^{-A_\tau}\cdot f(X_\tau)\ind_{\{\tau<\infty\}}]\nn\\
\le\,&\sup_{\tau\in\mc{T}}\ex_x[\et^{-A_\tau}(f(X_\tau)-c_0\et^{\int_0^{X_\tau}\Lambda(y)\diff y})\ind_{\{\tau<\infty\}}]+\sup_{\tau\in\mc{T}}\ex_x[\et^{-A_\tau}\cdot c_0\et^{\int_0^{X_\tau}\Lambda(y)\diff y}\ind_{\{\tau<\infty\}}]\nn\\
\le \,&\ex_x[h(\ol{X}_\zeta)\ind_{\{\ol{X}>x^\star\}}]+c_0\et^{\int_0^x\Lambda(y)\diff y}=\ex_x[\et^{-A_{T_{x^\star}^+}}f(X_{T_{x^\star}^+})\ind_{\{{T_{x^\star}^+}<\infty\}}].\nn
\end{align}
Thus, all inequalities in the above are in fact equalities. This proves the optimality of $T_{x^\star}^+$. Finally, smooth fit at $x^\star$ when $x^\star$ is finite can be proved similarly as before. 

If $x^\star=\infty$ and $c_0\in(0,\infty)$, then by footnote \ref{fn:xinfty} and Proposition \ref{lem}, we know that 
\[0=\ex_x[h(\ol{X}_\zeta)\ind_{\{\ol{X}>x^\star\}}]=
\sup_{\tau\in\mc{T}}\ex_x[\et^{-A_\tau}(f(X_\tau)-c_0\et^{\int_0^{X_\tau}\Lambda(y)\diff y})\ind_{\{\tau<\infty\}}].\]
Using the same argument as above, we have 
\be
\sup_{\tau\in\mc{T}}\ex_x[\et^{-A_\tau}f(X_\tau)\ind_{\{\tau<\infty\}}]\le c_0\et^{\int_0^x\Lambda(y)\diff y}.\label{eq:c0step3}
\ee
On the other hand, we trivially have (for $z>x$)
\be
\sup_{\tau\in\mc{T}}\ex_x[\et^{-A_\tau}f(X_\tau)\ind_{\{\tau<\infty\}}]\ge \ex_x[\et^{-A_{T_{z}^+}}f(X_{T_{z}^+})\ind_{\{{T_{z}^+}<\infty\}}]=f(z)\et^{-\int_x^z\Lambda(y)\diff y}\to c_0\et^{\int_0^x\Lambda(y)\diff y},\label{eq:c0step4}
\ee
as $z\to\infty$, thanks to Lemma \ref{lem:limit}. It follows that the inequality in \eqref{eq:c0step3} is an equality. The value function is clearly continuously differentiable. 

If $x^\star=c_0=\infty$, then by \eqref{eq:c0step4} we know that the optimal value is $\infty$.
\end{proof}

\sk
\sk
From the proof of Theorem \ref{thmSNLP}, we immediately obtain the following result. 
\begin{cor}\label{cor:mart}
The positive process  $(\exp(-A_t+\int_0^{X_t}\Lambda(y)\diff y))_{t\ge0}$ is a true martingale.
\end{cor}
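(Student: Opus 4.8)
The plan is to upgrade the supermartingale already furnished by Lemma \ref{lem:Ulocal} to a true martingale by checking that it has constant expectation. Write $M_t:=\exp(-A_t+\int_0^{X_t}\Lambda(y)\diff y)$ and $W(x):=\et^{\int_0^x\Lambda(y)\diff y}$, so that $M_t=\et^{-A_t}W(X_t)$ and $M_0=W(x)$ under $\pr_x$. Since $M$ is a nonnegative c\`adl\`ag supermartingale, one has $\ex_x[M_t]\le W(x)$ for all $t\ge0$, and a standard fact says that a nonnegative supermartingale with $\ex_x[M_t]\equiv M_0$ is automatically a martingale: for $s<t$ one combines $\ex_x[M_t\mid\mc F_s]\le M_s$ with the equality of expectations to force $\ex_x[M_t\mid\mc F_s]=M_s$ a.s. Hence it suffices to prove $\ex_x[M_t]=W(x)$ for every $x\in\R$ and $t\ge0$.

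First I would localize along the upward passage times $T_z^+$. Because $X_\cdot$ has no positive jumps, on $\{s\le T_z^+\}$ we have $X_s\le z$, so $\int_0^{X_s}\Lambda\le\int_0^z\Lambda$ and $M_{s\wedge T_z^+}\le W(z)<\infty$; thus $M^{T_z^+}$ is a bounded local martingale, hence a genuine martingale, giving $\ex_x[M_{t\wedge T_z^+}]=W(x)$. Splitting according to $\{T_z^+>t\}$ and $\{T_z^+\le t\}$ yields $W(x)=\ex_x[M_t\ind_{\{T_z^+>t\}}]+\ex_x[M_{T_z^+}\ind_{\{T_z^+\le t\}}]$. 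Since every path is bounded on $[0,t]$ we have $T_z^+\uparrow\infty$ a.s., so $\ind_{\{T_z^+>t\}}\uparrow1$ and monotone convergence gives $\ex_x[M_t\ind_{\{T_z^+>t\}}]\uparrow\ex_x[M_t]$. Everything therefore reduces to showing that the boundary term $\phi(z):=\ex_x[M_{T_z^+}\ind_{\{T_z^+\le t\}}]$ vanishes as $z\to\infty$.

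The vanishing of $\phi(z)$ is the main obstacle, since it is exactly the information \emph{not} already contained in the optional-sampling identities (those only reproduce the self-consistent relation $\ex_x[M_t]=W(x)-\lim_z\phi(z)$, so the supermartingale property alone is circular here). Using $X_{T_z^+}=z$ together with \eqref{eq:p2e} and Corollary \ref{cor1}, I would rewrite $\phi(z)=W(z)\,\ex_x[\et^{-A_{T_z^+}}\ind_{\{T_z^+\le t\}}]=W(x)\,\pr_x(T_z^+\le t\mid T_z^+<\zeta)\le W(x)\,\pr_x(\ol X_t>z)/\pr_x(\ol X_\zeta>z)$, so the content is that the running maximum at the fixed time $t$ has a strictly lighter tail than at the killed time $\zeta$. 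The key tool is the spectrally negative first-passage transform $\ex_x[\et^{-qT_z^+}]=\et^{-\Phi(q)(z-x)}$ from Appendix \ref{sec:SNLP}: for every $q>0$, Markov's inequality gives $\pr_x(\ol X_t>z)=\pr_x(T_z^+\le t)\le\et^{qt}\et^{-\Phi(q)(z-x)}$, and since $\Phi(q)\to\infty$ as $q\to\infty$ this tail decays faster than any fixed exponential rate. As $\pr_x(\ol X_\zeta>z)=\et^{-\int_x^z\Lambda}$ while $\Lambda$ is bounded in the cases of interest (e.g. $\Lambda\equiv\Phi(r)$ under constant discounting, by Remark \ref{rmk:constantr}), choosing $q$ with $\Phi(q)$ exceeding the growth rate of $\int_0^z\Lambda$ forces $W(z)\pr_x(\ol X_t>z)\to0$, whence $\phi(z)\to0$. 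Combined with the previous paragraph this gives $\ex_x[M_t]=W(x)$ for all $x,t$, completing the proof.

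As a sanity check, the statement can also be read off the optimal-stopping picture: applying Theorem \ref{thmSNLP} to the reward $f=W$ (for which $h\equiv0$, $x^\star=\infty$ and $c_0=1$) shows the continuation region is all of $\R$, so it is never optimal to stop and the value process $\et^{-A_t}V(X_t)=M_t$, being the Snell envelope of a problem with empty stopping region, is a martingale rather than a strict supermartingale; making this last sentence rigorous, however, still routes through the same boundary-term estimate above, which is why that estimate is the crux.
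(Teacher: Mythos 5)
Your proposal is correct where it applies, and it is genuinely different from --- and substantially more complete than --- what the paper offers. The paper's entire proof is the one sentence preceding the corollary: the claim is read off from the proof of Theorem \ref{thmSNLP}, whose only relevant ingredients are Lemma \ref{lem:Ulocal} (nonnegative local martingale, hence supermartingale) and the chain of equalities giving $\sup_{\tau\in\mc{T}}\ex_x[\et^{-A_\tau+\int_0^{X_\tau}\Lambda(y)\diff y}\ind_{\{\tau<\infty\}}]=\et^{\int_0^x\Lambda(y)\diff y}$. You are right that this identity cannot by itself yield true martingality: the supremum is attained already at $\tau=0$, so the identity holds for \emph{every} nonnegative supermartingale with that initial value, strict local martingales included. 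Your route --- reduce to constant expectation, localize along $T_z^+$ (where the stopped process is bounded by $\et^{\int_0^z\Lambda(y)\diff y}$ by absence of positive jumps, so optional sampling is legitimate; note Lemma \ref{lem:Ulocal} already gives that the stopped process is a martingale, so your boundedness remark is a harmless re-derivation), and show the boundary term $\phi(z)=\et^{\int_0^z\Lambda(y)\diff y}\,\ex_x[\et^{-A_{T_z^+}}\ind_{\{T_z^+\le t\}}]$ vanishes --- supplies exactly the content the paper leaves implicit, and your tail estimate $\pr_x(\ol{X}_t>z)\le \et^{qt}\et^{-\Phi(q)(z-x)}$ with $\Phi(q)\uparrow\infty$ is correct and closes the proof whenever $\limsup_{z\to\infty}z^{-1}\int_0^z\Lambda(y)\diff y<\infty$, in particular for bounded $\Lambda$, which covers Remark \ref{rmk:constantr} and both examples of Section \ref{sec:Example}.

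Be aware, however, that your hedge ``$\Lambda$ is bounded in the cases of interest'' is doing real mathematical work: Assumption \ref{assumption'} places no growth restriction on $\Lambda$, and without one the corollary as literally stated can fail, so no unconditional proof exists to be found. For instance, let $X_\cdot$ be a Brownian motion with positive drift $\mu$ and $A_t=\int_0^t a(X_s)\diff s$ with $a(x)=\et^{2x}$; then $\ex_x[\et^{-A_{T_z^+}}]=u(x)/u(z)$ for the increasing solution of $\tfrac12 u''+\mu u'=au$, so Assumption \ref{assumption'} holds with $\Lambda=u'/u$ growing like $\et^{x}$ (and taking $f=u$ one even has Assumption \ref{assumption''} with $h\equiv0$, $x^\star=\infty$, $c_0=u(0)\in(0,\infty)$, so all hypotheses of Theorem \ref{thmSNLP} are met). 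The Doob $h$-transform by $u$ of the killed semigroup $\ex_x[\et^{-A_t}\,\cdot\,]$ is a diffusion with drift $\mu+\Lambda(x)$, which explodes to $+\infty$ in finite time, and $\ex_x[\et^{-A_t}u(X_t)]=u(x)\,\pr^{h}_x(\text{lifetime}>t)<u(x)$ for $t>0$: the process is then a \emph{strict} local martingale, i.e., your boundary term $\phi(z)$ genuinely fails to vanish. So the defect you should repair is presentational rather than logical: promote your implicit growth hypothesis on $\int_0^z\Lambda(y)\diff y$ into the statement you prove; under it your argument is complete, and the restriction it imposes reflects a gap in the paper's unconditional claim, not in your reasoning. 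Your closing ``sanity check'' paragraph correctly diagnoses itself as circular and needs no change.
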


Finally, we show that Theorem \ref{thmSNLP} implies the results in \cite[Theorem 3.1]{HLY14} on optimality of a threshold type strategy in optimal stopping problem with constant discounting rate $r>0$ and spectrally negative L\'evy model.

\begin{cor}\label{cor:snlp}
Assuming that the reward function $f(\cdot)$ is log-concave, increasing and non-negative. 
Define function
\be
h(x):=f(x)-\frac{f'(x-)}{\Phi(r)}. \nn
\ee
Then there is a constant $x^\star\in[-\infty,\infty]$ such that $h(x)>0$ if and only if $x>x^\star$, and $h(\cdot)$ is nondecreasing over $(x^\star,\infty)$. Moreover, if $x^\star<\infty$, 
\be
V(x)=\sup_{\tau\in\mc{T}}\ex_x[\et^{-r\tau}f(X_\tau)\ind_{\{\tau<\infty\}}]=\ex_x[h(\ol{X}_{\e_r})\ind_{\{\ol{X}_{\e_r>x^\star}\}}]=\ex_x[\et^{-rT_{x^\star}^+}f(X_{T_{x^\star}^+})\ind_{\{T_{x^\star}^+<\infty\}}].\nn
\ee
If $x^\star=\infty$, then the value function is given by 
\[V(x)=\sup_{\tau\in\mc{T}}\ex_x[\et^{-r\tau}f(X_\tau)\ind_{\{\tau<\infty\}}]=\lim_{z\to\infty}\ex_x[\et^{-rT_z^+}f(X_{T_z^+})\ind_{\{T_z^+<\infty\}}]=c_0\cdot\et^{\Phi(r)x},\]
where $c_0$ is defined in \eqref{eq:defc0}.
\end{cor}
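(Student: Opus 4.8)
The plan is to realize this corollary as a special case of Theorem \ref{thmSNLP}, so the whole task reduces to checking that Assumptions \ref{assumption'} and \ref{assumption''} hold for a log-concave, increasing, non-negative $f(\cdot)$ under constant discounting. For the random-discounting ingredient, Remark \ref{rmk:constantr} tells us that $\zeta=\e_r$ and that $\ol{X}_{\e_r}$ is, under $\pr_x$, exponentially distributed on $(x,\infty)$ with rate $\Phi(r)$; hence Assumption \ref{assumption'} holds with the constant hazard rate $\Lambda(\cdot)\equiv\Phi(r)>0$ (in particular $\int_x^\infty\Lambda=\infty$), and $\int_0^x\Lambda(y)\diff y=\Phi(r)x$ so that $\et^{\int_0^x\Lambda(y)\diff y}=\et^{\Phi(r)x}$. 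Since the stopping region is contained in $\mathop{Supp}^+(f)$, by Remark \ref{rem32} it suffices to verify everything on the open set $\{f>0\}$, which (as $f$ is increasing) is a right ray, and there we may write $g:=\log f$, a finite concave function.

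Next I would verify Assumption \ref{assumption''} on $\{f>0\}$. Concavity of $g$ makes it locally Lipschitz, so $f=\et^{g}$ is absolutely continuous and the one-sided derivatives $g'(x-)=f'(x-)/f(x)$ exist at every point and are nonincreasing. Writing $h(x)=f(x)-f'(x-)/\Phi(r)=f(x)\big(1-g'(x-)/\Phi(r)\big)$ and using $f(x)>0$, the sign of $h$ equals the sign of $\Phi(r)-g'(x-)$. Because $g'(\cdot-)$ is nonincreasing, the set where $g'(x-)<\Phi(r)$ is a right ray, which identifies a threshold $x^\star\in[-\infty,\infty]$ with $h(x)>0$ if and only if $x>x^\star$; this is Assumption \ref{assumption''}(ii)(a), and (i) follows since an increasing non-negative $f$ is eventually positive unless it is identically zero, a trivial case. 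For (ii)(b), on $(x^\star,\infty)$ both factors $f(x)$ and $1-g'(x-)/\Phi(r)$ are positive and nondecreasing (the first because $f$ is increasing, the second because $g'(\cdot-)$ is nonincreasing), so their product $h$ is nondecreasing there.

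With both assumptions in force, Theorem \ref{thmSNLP} applies. When $x^\star<\infty$ I still have to rule out the extra term by showing $c_0=0$: since $g'(\cdot-)$ is nonincreasing and strictly below $\Phi(r)$ beyond $x^\star$, fixing any $z_0>x^\star$ gives $\Phi(r)-g'(z-)\ge\delta:=\Phi(r)-g'(z_0-)>0$ for all $z\ge z_0$, whence $g(z)-\Phi(r)z\to-\infty$ and $c_0=\lim_{z\to\infty}f(z)\et^{-\Phi(r)z}=0$ (consistent with Remark \ref{rmk:35}). Then Theorem \ref{thmSNLP} yields $V(x)=\ex_x[h(\ol{X}_{\e_r})\ind_{\{\ol{X}_{\e_r}>x^\star\}}]$ together with the optimality of $T_{x^\star}^+$, which is exactly the displayed identity. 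When $x^\star=\infty$, $h\le0$ everywhere and the $x^\star=\infty$ branch of Theorem \ref{thmSNLP} gives $V(x)=c_0\et^{\Phi(r)x}$ with $c_0\in(0,\infty]$, as claimed.

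I expect the only genuinely delicate point to be the bookkeeping around the one-sided derivative and the exact location of $x^\star$: the equivalence $h(x)>0\iff x>x^\star$ must survive the passage from the a.e. formulation of Assumption \ref{assumption''} to the everywhere statement of the corollary, which is precisely why $h$ is defined with the left derivative $f'(x-)$, well-defined everywhere for concave $g$, rather than an a.e. derivative. The argument that $c_0=0$ when $x^\star<\infty$ is the other step that genuinely uses concavity rather than mere monotonicity, and it is what removes the spurious $c_0\et^{\Phi(r)x}$ term present in the general Theorem \ref{thmSNLP}.
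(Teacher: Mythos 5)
Your proposal is correct and follows the same overall route as the paper: reduce to Theorem \ref{thmSNLP}, note that constant discounting gives $\Lambda(\cdot)\equiv\Phi(r)$ (Assumption \ref{assumption'} and Assumption \ref{assumption''}(i) are immediate), and verify Assumption \ref{assumption''}(ii) via the nonincreasing left log-derivative $(\log f(x-))'=f'(x-)/f(x)$, with $x^\star=\inf\{x:(\log f(x-))'<\Phi(r)\}$. Two points differ in execution. First, for the monotonicity of $h$ on $(x^\star,\infty)$ you factor $h(x)=f(x)\bigl(1-g'(x-)/\Phi(r)\bigr)$ and observe that both factors are positive and nondecreasing there; the paper instead runs the direct estimate $f'(y-)-f'(x-)\le(\log f(x-))'\,[f(y)-f(x)]\le\Phi(r)(f(y)-f(x))$. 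These are equivalent in substance, though your factorization is arguably cleaner. Second, and more significantly, you explicitly prove $c_0=0$ when $x^\star<\infty$ (via the linear bound $g(z)\le g(z_0)+(\Phi(r)-\delta)(z-z_0)$ for $z\ge z_0>x^\star$, forcing $f(z)\et^{-\Phi(r)z}\to0$). The paper's proof is silent on this, yet it is genuinely needed: Theorem \ref{thmSNLP} gives $V(x)=\ex_x[h(\ol{X}_{\e_r})\ind_{\{\ol{X}_{\e_r}>x^\star\}}]+c_0\et^{\Phi(r)x}$, and Lemma \ref{lem:limit} only guarantees $c_0\in[0,\infty)$ when $x^\star<\infty$, so without your argument the middle equality in the corollary's display would carry a spurious $c_0\et^{\Phi(r)x}$ term. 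Your write-up thus closes a small gap the paper leaves implicit; the one caveat, which both you and the paper share, is that absolute continuity of $f$ is only justified on the interior of $\{f>0\}$ (log-concavity says nothing about a possible jump from $0$ at the edge of the support), but this is harmless given Remark \ref{rem32} restricts attention to that right ray.
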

\begin{proof}
First, Assumption \ref{assumption'}  and Assumption \ref{assumption''}(i) obviously hold. Therefore, to apply Theorem \ref{thmSNLP}, we only need to verify that Assumption \ref{assumption''}(ii) holds as well.  

Let $\mathrm{Supp}(f):=\{x\in\R:f(x)>0\}$ be the support of the reward function $f(\cdot)$. Since $\log f(\cdot)$ is concave on $\mathrm{Supp}(f)$,  we know that $f(\cdot)$ is absolutely continuous, and  the left-hand derivative $\big(\log f(x-)\big)'=f'(x-)/f(x)$ is non-increasing over $\mathrm{Supp}(f)$.
On the other hand,
\be
h(x)>0\Leftrightarrow f(x)-\frac{f'(x-)}{\Phi(r)}>0\Leftrightarrow \frac{f'(x-)}{f(x)}<\Phi(r)\Leftrightarrow\big(\log f(x-)\big)'<\Phi(r),\label{eq:cor36}
\ee
where the second step comes from the observation that $\{x\in\R:h(x)>0\}\subset\mathrm{Supp}(f)$, thanks to $f'(x-)\ge0$. By the monotonicity of $\big(\log f(x-)\big)'$,  let us define
 $$x^\star:=\inf\{x\in\R:\big(\log f(x-)\big)'<\Phi(r)\}\in[-\infty,\infty].$$

If $x^\star=\infty$, then $h(x)\le 0$ for all $x\in\R$ so there is nothing more to verify. Below we assume that $x^\star<\infty$, and 
prove that $h(\cdot)$ is nondecreasing over $(x^\star,\infty)$. To this end, consider
 any $x^\star<x\leq y$, then we have
\begin{align}
f'(y-)-f'(x-)=&\big(\log f(y-)\big)'f(y)-\big(\log f(x-)\big)'f(x)\nn\\
\leq& \big[\big(\log f(y-)\big)'-\big(\log f(x-)\big)'\big]f(y)+\big(\log f(x-)\big)'[f(y)-f(x)]\nn\\
\leq& \big(\log f(x-)\big)'[f(y)-f(x)],\nn
\end{align}
where the last inequality holds since $\log f(\cdot)$ is concave and $\big(\log f(\cdot-)\big)'$ is non-increasing. However, we know that $f(y)\ge f(x)$, and for $x>x^\star$ we have $\big(\log f(x-)\big)'\leq\Phi(r)$. Thus, we have
\be
f'(y-)-f'(x-)\leq \Phi(r)(f(y)-f(x)),\nn
\ee
which holds if and only if
\be
h(y)-h(x)=\bigg(f(y)-\frac{f'(y-)}{\Phi(r)}\bigg)-\bigg(f(x)-\frac{f'(x-)}{\Phi(r)}\bigg)\geq0.\nn
\ee
Therefore $h(\cdot)$ is nondecreasing on $(x^\star,\infty)$, so Assumption \ref{assumption''}(ii) holds.
\end{proof}
\subsection{Examples}\label{sec:Example}

\subsubsection{Discounting with local time}
We consider a generalization of the local time discounting problem as studied in \cite{Dayanik2008b}. More specifically, we let $X_\cdot$ be a spectrally negative $\beta$-stable process with index $\beta\in(1,2]$, and $L_t$ be the local time of $X_\cdot$ at level 0, which is defined as the occupation time density at $0$. That is, 
\be
L_t=\lim_{\epsilon\downarrow 0}\frac{1}{2\epsilon}\int_0^t\ind_{(-\epsilon,\epsilon)}(X_s)\diff s,\quad\pr\text{-a.s.}\label{eq:LocalT}\ee
Thanks to the fact that 0 is regular for itself when $X_\cdot$ has unbounded variation (see, e.g., \cite[Corollary VII.5]{Bertoin_1996}), we know from \cite[page 327]{Sato99} that the occupation density $L_t$ defined in \eqref{eq:LocalT} exists. 

Fix constants $r,\alpha>0$, our objective is to solve the following optimal stopping problem:
\be
V(x):=\sup_{\tau\in\mathcal{T}}\ex_{x}[\et^{-r\,L_{\tau}}(X_{\tau}\vee0)^\alpha\ind_{\{\tau<\infty\}}].\label{problem:local}
\ee
 This problem was studied in the special case of  $\beta=2$, i.e., standard Brownian motion, in \cite[Section 4.2]{Dayanik2008b}. Our objective is to extend this result to a general $\beta\in(1,2]$. 

We begin by deriving the law of the local time stopped at the up-crossing strategy $T_z^+$, $z>0$.
\begin{lem}\label{lem1}
Let $X_\cdot$ be a general spectrally negative L\'evy process with unbounded variation (so that $W(0)=0$), we have
\be
\ex_x[\et^{-r\,L_{T_z^+}}\ind_{\{T_z^+<\infty\}}]=\frac{\et^{\Phi(0)x}+rW(x)}{\et^{\Phi(0)z}+rW(z)}, \quad\forall z>x\ge0,\label{eq:lem}
\ee
where $W(\cdot)$ is the 0-scale function of $X_\cdot$ (see Appendix \ref{sec:SNLP}).
\end{lem}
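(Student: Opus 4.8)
The plan is to exhibit a local martingale of the form $M_t=\et^{-rL_t}G(X_t)$ with $G(x)=\et^{\Phi(0)x}+rW(x)$ (recall $W\equiv0$ on $(-\infty,0)$), and then read off \eqref{eq:lem} by optional stopping at $T_z^+$. Since $X_\cdot$ has no positive jumps, $X_{T_z^+}=z$ on $\{T_z^+<\infty\}$, so $M_{T_z^+}=\et^{-rL_{T_z^+}}G(z)$ there. For $t<T_z^+$ one has $X_t\le z$, hence $0\le G(X_t)\le G(z)$ (both $\et^{\Phi(0)\cdot}$ and $W$ are nondecreasing and nonnegative) and $\et^{-rL_t}\le1$, so $M_{\cdot\wedge T_z^+}$ is a bounded local martingale, i.e.\ a bounded martingale, and converges a.s.\ and in $L^1$. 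On $\{T_z^+=\infty\}$ one necessarily has $\Phi(0)>0$ and $X_t\to-\infty$, so $\et^{\Phi(0)X_t}\to0$ and $W(X_t)=0$ eventually, giving $M_t\to0$; when $\Phi(0)=0$ this event is null. Bounded martingale convergence then yields $G(x)=\ex_x[M_\infty]=G(z)\,\ex_x[\et^{-rL_{T_z^+}}\ind_{\{T_z^+<\infty\}}]$, which rearranges to \eqref{eq:lem}.

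The core is proving that $M$ is a local martingale, which I would assemble from two blocks via integration by parts, exploiting that $A_t:=\et^{-rL_t}$ is continuous of finite variation with $\diff A_t=-rA_t\,\diff L_t$ and that $\diff L_t$ is carried by $\{X_t=0\}$. First, since $\psi(\Phi(0))=0$, the process $\et^{\Phi(0)X_t}$ is a martingale; using $\et^{\Phi(0)X_t}=1$ on $\mathrm{supp}(\diff L)$ the product rule gives $\diff(A_t\et^{\Phi(0)X_t})=A_t\,\diff\et^{\Phi(0)X_t}+\diff A_t$, so that $\et^{-rL_t}(\et^{\Phi(0)X_t}-1)$ is a local martingale. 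Second, I will invoke the identity that $W(X_t)-L_t$ is a local martingale; since $W(0)=0$ annihilates the $W(X_t)\,\diff A_t$ term on $\mathrm{supp}(\diff L)$, the same computation gives $\diff(A_tW(X_t))=(\text{loc.\ mart.})+A_t\,\diff L_t$, and because $r\int_0^t\et^{-rL_s}\,\diff L_s=1-\et^{-rL_t}$ this shows $r\et^{-rL_t}W(X_t)+\et^{-rL_t}-1$ is a local martingale. Adding the two blocks, the $\pm\et^{-rL_t}$ terms and the constants cancel and one is left with exactly $M_t-1$, so $M$ is a local martingale.

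The main obstacle is the identity $W(X_t)-L_t\in(\text{local martingales})$, i.e.\ the precise matching between the occupation-density local time $L$ of \eqref{eq:LocalT} and the singular boundary behaviour of the $0$-scale function at the origin. Away from $0$ one has $\mc{L}W=0$ (on $(0,\infty)$ by the defining property of $W$, and on $(-\infty,0)$ because $W\equiv0$ there while the L\'evy measure is supported on $(-\infty,0)$), so the only contribution to the Doob--Meyer decomposition of $W(X_t)$ is a singular term carried by $\{X_t=0\}$, and the content of the identity is that this term equals $L_t$ under the chosen normalizations. When $X_\cdot$ has a Gaussian component this follows from the It\^o--Tanaka--Meyer formula with $W'(0+)=2/\sigma^2$ and $L^{\mathrm{sym}}=\sigma^2L$ (the factors combine to coefficient $1$); but in the unbounded-variation pure-jump case (e.g.\ a $\beta$-stable process with $\beta\in(1,2)$) one has $W'(0+)=\infty$ and no continuous martingale part, so It\^o--Tanaka is unavailable. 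Here I would instead obtain the identity by approximation (Gaussian-perturbed approximants, with convergence of scale functions and of local times) or, equivalently, via excursion theory, where the statement reduces to the known fact that the It\^o excursion measure of excursions attaining height $z$ equals $1/W(z)$.

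As a consistency check on the normalizations I would specialize to Brownian motion, where $\Phi(0)=0$ and $W(x)=2x$: the formula gives $\ex_x[\et^{-rL_{T_z^+}}]=(1+2rx)/(1+2rz)$, which agrees with a direct It\^o--Tanaka computation (the balance $\tfrac12(f'(0+)-f'(0-))=rf(0)$ at the kink) and recovers the local-time calculation of Dayanik \cite{Dayanik2008b}. This both validates the coefficient $1$ in the block $W(X_t)-L_t$ and confirms that the two normalizations (occupation-density $L$ and the standard $0$-scale function $W$) are the compatible ones under which \eqref{eq:lem} holds.
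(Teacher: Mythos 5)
Your route is genuinely different from the paper's: the paper never constructs a martingale, but instead starts from the exact occupation-time identity of Loeffen--Renaud--Zhou \cite{OccupationInterval} for $\ex_x[\et^{-q\int_0^{T_c^+}\ind_{(-\epsilon,\epsilon)}(X_t)\diff t}\ind_{\{T_c^+<\infty\}}]$, substitutes $q=r/(2\epsilon)$, and passes to the limit $\epsilon\downarrow0$ using the occupation-density definition \eqref{eq:LocalT} of $L$, with the bound $0\le W^{(q)}(x)\le W(x)\et^{qxW(x)}$ to control the $\mathcal{H}^{(q)}$ terms, so that the limit identity \eqref{eq:lem} drops out of bounded convergence. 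Your superstructure is sound as far as it goes: $\et^{\Phi(0)X_t}$ is indeed a martingale since $\psi(\Phi(0))=0$; the two integration-by-parts blocks combine correctly (using that $\diff L$ is carried by the zero set, where $\et^{\Phi(0)X_t}=1$ and $W(X_t)=W(0)=0$); $M_{\cdot\wedge T_z^+}$ is bounded by $G(z)$; creeping upward gives $X_{T_z^+}=z$ on $\{T_z^+<\infty\}$; and your dichotomy on $\{T_z^+=\infty\}$ ($\Phi(0)>0$ with $G(X_t)\to0$, versus the event being null when $\Phi(0)=0$) is correct, so optional stopping would deliver \eqref{eq:lem}.

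The genuine gap is the load-bearing identity that $W(X_t)-L_t$ is a local martingale with $L$ the \emph{occupation-density} local time, precisely in the regime the paper needs: Lemma \ref{lem1} is applied to spectrally negative $\beta$-stable processes with $\beta\in(1,2)$, i.e.\ pure-jump unbounded variation, where $W'(0+)=\infty$, there is no Gaussian part, and your It\^o--Tanaka argument is unavailable --- your Brownian consistency check validates only the Gaussian case. Neither proposed remedy closes this as stated. A Gaussian-perturbation scheme would require joint convergence of occupation-density local times and of scale functions, and carried out honestly it collapses into exactly the $\epsilon$-occupation-time limit the paper performs. The excursion-theoretic fact you invoke, $n(\bar\varepsilon>z)=1/W(z)$, is normalization-sensitive: with $L$ normalized as an occupation density rather than through the inverse local time subordinator, matching the two normalizations is equivalent to the very identity being proved (indeed \eqref{eq:lem} at $x=0$ with $\Phi(0)=0$ \emph{is} the statement $n(\text{height}>z)=1/W(z)$ under this normalization), so citing it without an independent reference pinning both normalizations is circular. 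The cleanest completion is to establish your key block by the paper's own computation --- Corollary 2(i) and Eq.\ (11) of \cite{OccupationInterval} with $q=r/(2\epsilon)$ and the $\mathcal{H}^{(q)}$ estimates --- but at that point the martingale detour is unnecessary, since that computation yields \eqref{eq:lem} directly.
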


In the special case that  $X_\cdot$ is a spectrally negative $\beta$-stable process, it is well-known that $\Phi(0)=0$ and hence $\pr_x(T_z^+<\infty)=1$ for any $z>0$. Moreover, from \cite[page 233]{Kyprianou2006} we know that
\begin{align}
W(x)=\frac{x^{\beta-1}}{\Gamma(\beta)}\ind_{\{x\ge 0\}},\quad\forall x\in\mathbb{R}.\nn
\end{align}
Therefore, from  Lemma \ref{lem1}  we know that, for any $z>x\ge 0$,
\[\pr_x(\ol{X}_\zeta>z)=\ex_x[\et^{-rL_{T_z^+}}\ind_{\{T_z^+<\infty\}}]=\frac{\Gamma(\beta)+r{x}^{\beta-1}}{\Gamma(\beta)+rz^{\beta-1}},\]
where $\zeta=\inf\{t>0: rL_t>\mathbf{e}\}$. Using the limit of the above equation as $z\to\infty$, one also knows that the additive functional $L$ satisfies Assumption \ref{ass:A}, thanks to Lemma \ref{lem:equivalence}. On the other hand, by  the strong Markov Property of $X_\cdot$ at stopping time $T_0^+$, we have for any $x<0$ that,
\[\pr_x(\ol{X}_\zeta>z)=\ex_x[\et^{-rL_{T_z^+}}\ind_{\{T_z^+<\infty\}}]=\ex_x[\ex_{X_{T_0^+}}[\et^{-rL_{T_z^+}}\ind_{\{T_z^+<\infty\}}]]=\ex[\et^{-rL_{T_z^+}}\ind_{\{T_z^+<\infty\}}]=\frac{\Gamma(\beta)}{\Gamma(\beta)+rz^{\beta-1}}.\]
It follows that the ``hazard rate'' of  $\ol{X}_\zeta$ is given by
\begin{align*}
\Lambda{(z)}=&-\frac{1}{\pr_x(\ol{X}_\zeta>z)}\frac{\partial}{\partial z}\pr_x(\ol{X}_\zeta>z)=\frac{r(\beta-1)z^{\beta-2}}{\Gamma(\beta)+rz^{\beta-1}},\quad \forall z>x>0.
\end{align*}
So  we know that   Assumption \ref{assumption'} holds. Obviously, Assumption \ref{assumption''}(i) also holds. 

To verify Assumption \ref{assumption''}(ii), we consider the reward function $f(x)=x^\alpha\ind_{\{x>0\}}$, and define for any $x>0$,
\begin{align}
h(x):=&x^\alpha-\frac{\alpha x^{\alpha-1}}{\Lambda(x)}=x^\alpha-\alpha x^{\alpha-1}\frac{\Gamma(\beta)+rx^{\beta-1}}{r(\beta-1)x^{\beta-2}}=\frac{\beta-\alpha-1}{\beta-1}x^\alpha-\Gamma(\beta-1)\frac{\alpha}{r} x^{\alpha-\beta+1}.\nn
\end{align}
Let us first assume that $\alpha\in(0,\beta-1)$. Then it is easily seen that $h(\cdot)$ satisfies the following properties (notice that $\beta-\alpha-1\ge0,\alpha>0$):
\begin{align}
h'(x)&=\frac{\alpha(\beta-\alpha-1)}{\beta-1}x^{\alpha-1}+(\beta-\alpha-1)\Gamma(\beta-1)\frac{\alpha}{r} x^{\alpha-\beta}>0,\nn\\
h(x)&> 0 \text{ if and only if }x>x^\star\equiv \bigg(\frac{\alpha\Gamma(\beta)}{r(\beta-\alpha-1)}\bigg)^{\frac{1}{\beta-1}}.\label{eq:xstar_alpha}
\end{align}
Hence, when $\alpha\in(0,\beta-1)$, Assumption \ref{assumption''}(ii) holds with a finite $x^\star$. Furthermore, we compute $c_0=\lim_{z\to\infty}z^\alpha\ex[\et^{-rL_{T_z^+}}\ind_{\{T_z^+<\infty\}}]=0.$

If $\alpha=\beta-1$, then we have $h(x)=-\Gamma(\beta)/r<0$ for all $x>0$. Hence Assumption \ref{assumption''}(ii) holds with $x^\star=\infty$. 
Moreover, we compute $c_0=\lim_{z\to\infty}z^\alpha\ex[\et^{-rL_{T_z^+}}\ind_{\{T_z^+<\infty\}}]=\frac{\Gamma(\beta)}{r}$.

If $\alpha>\beta-1$, then from $h(0)=0$ and $h'(x)<0$ for all $x>0$, we know that $h(x)<0$ for all $x>0.$ Hence Assumption \ref{assumption''}(ii) holds with $x^\star=\infty$. Moreover, we compute $c_0=\lim_{z\to\infty}z^\alpha\ex[\et^{-rL_{T_z^+}}\ind_{\{T_z^+<\infty\}}]=\infty$.

By Remark \ref{rem32} and Theorem \ref{thmSNLP}, we obtain the following result:

\begin{figure}
\centering
\includegraphics[width=3.5in]{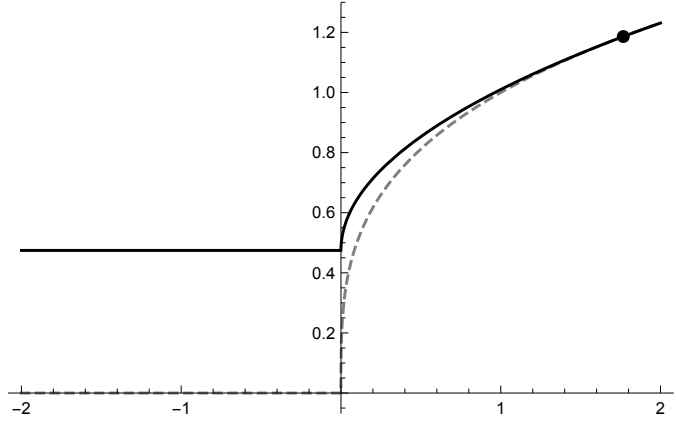}
\caption{Here we plot the reward function $f(x)=(x\vee0)^\alpha$ (in gray dashed line) and the value function given in \eqref{eq:prop32} (in black solid line).  Model parameters: $\alpha=0.3, \beta=1.5, r=1$. The optimal threshold $x^\star=1.7672$. According to Theorem \ref{thmSNLP}, smooth fit holds at $x^\star$. The black dot stands for the pasting point where optimal exercising occurs. }
\label{fig1}
\end{figure}

\begin{prop}\label{prop:localtime33}
If $\alpha\in(0,\beta-1)$, the optimal stopping time for problem \eqref{problem:local} is the up-crossing strategy $T_{x^\star}^+$. Moreover, (see Figure \ref{fig1} for a visualization)
\begin{align}
V(x)
&=\ex_x[\exp(-r\,L_{T_{x^\star}^+})(X_{T_{x^\star}^+})^\alpha\ind_{\{T_{x^\star}^+<\infty\}}]=\begin{dcases}
(x^\star)^\alpha\frac{\Gamma(\beta)+r(x\vee0)^{\beta-1}}{\Gamma(\beta)+r(x^\star)^{\beta-1}}, &\text{ if $x\le x^\star$},\\
(x)^\alpha, &\text{ if $x> x^\star$},
\end{dcases}\label{eq:prop32}
\end{align}
where $x^\star$ is defined in \eqref{eq:xstar_alpha}.

If $\alpha=\beta-1$, then the value function for problem \eqref{problem:local} is given by
\be
V(x)=\frac{\Gamma(\beta)}{r}+(x\vee 0)^{\beta-1},\quad\forall x\in\R.
\ee
If $\alpha>\beta-1$, then the value function for problem \eqref{problem:local} is $\infty$.
\end{prop}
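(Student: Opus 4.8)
The plan is to read off all three cases as direct consequences of Theorem \ref{thmSNLP} (together with Remark \ref{rem32}), since the preceding discussion has already done the genuine work: Assumption \ref{assumption'} and Assumption \ref{assumption''}(i) hold for the spectrally negative $\beta$-stable process, the candidate $h(x)=\frac{\beta-\alpha-1}{\beta-1}x^\alpha-\Gamma(\beta-1)\frac{\alpha}{r}x^{\alpha-\beta+1}$ has been computed, and Assumption \ref{assumption''}(ii) together with the values of $x^\star$ and $c_0$ (from \eqref{eq:defc0}) has been established in each of the three mutually exclusive and exhaustive ranges $\alpha\in(0,\beta-1)$, $\alpha=\beta-1$, $\alpha>\beta-1$. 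The reward $f(x)=x^\alpha\ind_{\{x>0\}}$ is lower semi-continuous with $\mathop{Supp}^+(f)=(0,\infty)$, so by Remark \ref{rem32} it suffices to have verified Condition (M)/Assumption \ref{assumption''} for $x>0$, and only the explicit evaluation of the value function remains.

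For $\alpha\in(0,\beta-1)$, where $x^\star$ is finite (given by \eqref{eq:xstar_alpha}) and $c_0=0$, I would invoke Theorem \ref{thmSNLP} to conclude optimality of $T_{x^\star}^+$ and the identity $V(x)=\ex_x[\et^{-rL_{T_{x^\star}^+}}f(X_{T_{x^\star}^+})\ind_{\{T_{x^\star}^+<\infty\}}]$. For $x>x^\star$ the up-crossing time is $0$, so $V(x)=f(x)=x^\alpha$. For $x\le x^\star$ I would use the absence of positive jumps, which gives $X_{T_{x^\star}^+}=x^\star$ a.s. on $\{T_{x^\star}^+<\infty\}$, so that the value factorizes as $(x^\star)^\alpha\,\pr_x(\ol X_\zeta>x^\star)$; substituting the survival function $\pr_x(\ol X_\zeta>z)=\frac{\Gamma(\beta)+r(x\vee0)^{\beta-1}}{\Gamma(\beta)+rz^{\beta-1}}$ (already derived, and valid for all $x$ including $x<0$) at $z=x^\star$ yields exactly \eqref{eq:prop32}.

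For $\alpha=\beta-1$, where $x^\star=\infty$ and $c_0=\Gamma(\beta)/r\in(0,\infty)$, the relevant branch of Theorem \ref{thmSNLP} gives $V(x)=c_0\,\et^{\int_0^x\Lambda(y)\diff y}$. The only computation is the integral: since $\Lambda(y)=\frac{r(\beta-1)y^{\beta-2}}{\Gamma(\beta)+ry^{\beta-1}}=\frac{\diff}{\diff y}\log\!\big(\Gamma(\beta)+ry^{\beta-1}\big)$ for $y>0$, one gets $\et^{\int_0^x\Lambda(y)\diff y}=\frac{\Gamma(\beta)+rx^{\beta-1}}{\Gamma(\beta)}$ for $x>0$, whence $V(x)=\frac{\Gamma(\beta)}{r}+x^{\beta-1}$. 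For $\alpha>\beta-1$, both $x^\star=\infty$ and $c_0=\infty$, so the final branch of Theorem \ref{thmSNLP} immediately gives $V\equiv\infty$.

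The main (and essentially only) obstacle is the region $x\le0$, which lies below the support of $f$ and where $\Lambda$ was defined only on $(0,\infty)$. I would resolve this by a strong Markov argument at $T_0^+$: because $X_\cdot$ is spectrally negative we have $X_{T_0^+}=0$, no local time at $0$ accrues while the process stays strictly below $0$ so $L_{T_0^+}=0$, and $T_0^+<\infty$ a.s.\ since $\Phi(0)=0$; hence $V(x)=V(0)$ for all $x\le0$. This is consistent with the closed forms above, in which $(x\vee0)^{\beta-1}$ vanishes for $x\le0$; equivalently, one can bypass this step entirely by noting that the survival function formula already incorporates the $x<0$ case, so the factorized expression $(x^\star)^\alpha\,\pr_x(\ol X_\zeta>x^\star)$ delivers the value uniformly for all $x\le x^\star$.
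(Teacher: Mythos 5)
Your proposal is correct and follows the paper's own route exactly: the paper proves this proposition simply by invoking Remark \ref{rem32} and Theorem \ref{thmSNLP} on top of the computations already carried out in Section \ref{sec:Example} (the law of $L_{T_z^+}$ from Lemma \ref{lem1}, the hazard rate $\Lambda$, the three-case analysis of $h$, $x^\star$ and $c_0$). Your additional explicit steps --- evaluating $\et^{\int_0^x\Lambda(y)\diff y}=\bigl(\Gamma(\beta)+rx^{\beta-1}\bigr)/\Gamma(\beta)$ via the log-derivative, using creeping ($X_{T_{x^\star}^+}=x^\star$) to factor the value, and the strong Markov argument at $T_0^+$ (with $L_{T_0^+}=0$ and $\Phi(0)=0$) for $x\le 0$ --- merely fill in evaluations the paper leaves implicit, and are all sound.
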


\subsubsection{Discounting with occupation time}
We let $X_\cdot$ be any spectrally negative L\'evy process with L\'evy triplet $(\mu,\sigma,\Pi)$, such that the tail jump measure $\Pi(-\infty,-x)$ has a completely monotone density over $(0,\infty)$. For fixed $r,q>0$, we consider the following optimal stopping problem:
\be
V(x):=\sup_{\tau\in\mathcal{T}}\ex_{x}[\et^{-r\tau-q\int_{0}^{\tau}\mathbf{1}_{(-\infty,0)}(X_{s})\diff s}(X_\tau\vee0)\,\ind_{\{\tau<\infty\}}],\quad \forall x\in\mathbb{R}.\label{problem:occupation}
\ee
 The occupation time process obviously satisfies Assumption \ref{ass:A}.
Notice that problem \eqref{problem:occupation} corresponds to a random discounting generalization of the Novikov-Shiryaev optimal stopping problem for degree one (see, e.g.,  \cite[Section 9.4]{Kyprianou2006} and \cite{KyprSury05}, see also \cite{NovkShir07} for a study of the same problem in discrete time). In the spirit of \cite{Linetsky99}, we can consider the problem as the evaluation of a perpetual American style step option  on risky asset $S_\cdot=(\et^{X_t})_{t\ge0}$ with reward  $\exp(-q\int_0^\tau\ind_{(0,1)}(S_s)\diff s) (\log S_\tau\vee0)$.

We begin by deriving the law of the occupation time stopped at an up-crossing stop time.
By  \cite[Proposition 4.1]{OmegaRZ15} (or \cite[Corollary 2(ii)]{OccupationInterval}), we know that for $x\leq z$,
\begin{align}
\ex_{x}[\et^{-rT_{z}^+-q\int_{0}^{T_{z}^+}\ind_{(-\infty,0)}(X_{s})\diff s}\ind_{\{T_{z}^+<\infty\}}]
=&\frac{\int_0^{\infty} \et^{-\Phi(r+q)y}W^{(r)}(x+y)\diff y}{\int_0^{\infty} \et^{-\Phi(r+q)y}W^{(r)}(z+y)\diff y},\nn
\end{align}
where $W^{(r)}(\cdot)$ is the $r$-scale function of $X$ (see Appendix \ref{sec:SNLP}).
It follows that, for the random variable $\ol{X}_\zeta$ with $\zeta=\inf\{t>0: rt+q\int_0^t\ind_{\{(-\infty,0)\}}(X_s)\diff s>\mathbf{e}\}$, we have
\[\pr_x(\ol{X}_\zeta>z)=\frac{\int_0^{\infty} \et^{-\Phi(r+q)y}W^{(r)}(x+y)\diff y}{\int_0^{\infty} \et^{-\Phi(r+q)y}W^{(r)}(z+y)\diff y},\quad\forall z>x,\]
and its
 ``Hazard rate'' $\Lambda$  given by \cite[Eq. (4.5)]{OmegaRZ15} becomes
\begin{align}
\Lambda(z)=-\frac{1}{\pr_x(\ol{X}_\zeta>z)}\frac{\p}{\p z}\pr_x(\ol{X}_\zeta>z)=\Phi(r+q)-\frac{W^{(r)}(z)}{\int_0^{\infty} \et^{-\Phi(r+q)y}W^{(r)}(z+y)\diff y},\quad \forall z>x.\label{eq:lambda}
\end{align}
In  \cite[Lemma 4.2]{OmegaRZ15}, it is proved that the function $\Lambda(z)$ is non-increasing over $\mathbb{R}$, satisfying
\[\Lambda(-\infty)=\Lambda(0-)=\Phi(r+q)\ge \Phi(r+q)-q W^{(r)}(0)=\Lambda(0)>\Lambda(\infty)=\Phi(r)>0.\]
 Clearly, Assumption \ref{assumption'} and Assumption \ref{assumption''}(i) hold.   To verify that Assumption \ref{assumption''}(ii) also holds, define
\be
h(x)=x-\frac{1}{\Lambda(x)},\quad\forall x>0,\nn
\ee
where we only consider positive $x$ because of Remark \ref{rem32}.
It is easily seen that $h(x)<0$ for sufficiently small $x>0$ and $h(x)\to\infty$ as $x\to\infty$. Moreover,
straightforward calculation yields that
\begin{align*}
h'(x)=\frac{1}{\Lambda(x)^2}\left(\Lambda'(x)+(\Lambda(x))^2\right)
=&\frac{(\Lambda(x))^{-2}}{\int_x^\infty\et^{-\Phi(r+q)y}W^{(r)}(y)\diff y}N(x),
\end{align*}
where we defined for any $x>0$ that
\be N(x)=(\Phi(r+q))^2\int_x^\infty\et^{-\Phi(r+q)y}W^{(r)}(y)\diff y-\Phi(r+q)\et^{-\Phi(r+q)x}W^{(r)}(x)-\et^{-\Phi(r+q)x}W^{(r)\prime}(x).\nn\ee
By \eqref{eq:scale}, \eqref{limbigx} and \eqref{eq:W'} We notice that
\be\lim_{x\to\infty}N(x)=0\\-\lim_{x\to\infty}(\et^{-\Phi(r+q)x}W^{(r)}(x))\bigg(\Phi(r+q)+\lim_{x\to\infty}\frac{W^{(r)\prime}(x)}{W^{(r)}(x)}\bigg)=0.
\ee
On the other hand, for any $x>0$
\[N'(x)=-\et^{-\Phi(r+q)x}W^{(r)\prime\prime}(x), \quad\forall x>0.\]
By  \cite[Theorem 3.4]{Kuznetsov_2011}, we know that when the tail jump measure $\Pi(-\infty,-x)$ has a completely monotone density over $(0,\infty)$, 
there is a constant $a^*$ such that 
$W^{(r)\prime\prime}(x)=-\et^{\Phi(r+q)x}N'(x)<0$ for
$x\in(0,a^*)$ and $W^{(r)\prime\prime}=-\et^{\Phi(r+q)x}N'(x)>0$ for $x\in(a^*,\infty)$. It follows that 
$N(\cdot)$ is strictly increasing over $(0,a^*)$, and is strictly decreasing over $(a^*,\infty)$. Given that $N(x)\to0$ as $x\to\infty$, we know that $N(x)>0$ for at least all $x\in[a^*,\infty)$. Using the monotonicity property of $N(\cdot)$, we know that, 
\begin{enumerate}
\item[(i)]
either $N(x)>0$ provided that $N(0+)\ge0$, in which case $h(\cdot)$ is strictly increasing over $(0,\infty)$;
\item[(ii)] or there is a $x_0\in(0,a^*)$ such that $N(x)<0$ for all $x\in(0,x_0)$ and $N(x)>0$ for all $x\in(x_0,\infty)$. In this case, because $h(0)=-\frac{1}{\Lambda(0)}<0$, $h'(x)<0$ for $x\in(0,x_0)$, and $h'(x)>0$ for $x\in(x_0,\infty)$, we know that there is a unique root to equation $h(x)=0$ over $(0,\infty)$, denoted by $x^\star$, and it holds that $x^\star>x_0$. Hence, we also have $h(\cdot)$ is strictly increasing over $(x^\star,\infty)$.
\end{enumerate}
 So in both cases, Assumption \ref{assumption''}(ii) holds with a finite $x^\star$. Moreover, we compute $$c_0=\lim_{z\to\infty}z\ex[\et^{-rT_z^+-q\int_0^{T_z^+}\ind_{(-\infty,0)}(X_s)\diff s}\ind_{\{T_z^+<\infty\}}]=0.$$
By Remark \ref{rem32} and Theorem \ref{thmSNLP}, we obtain the following result:
\begin{figure}
\centering
\includegraphics[width=3.5in]{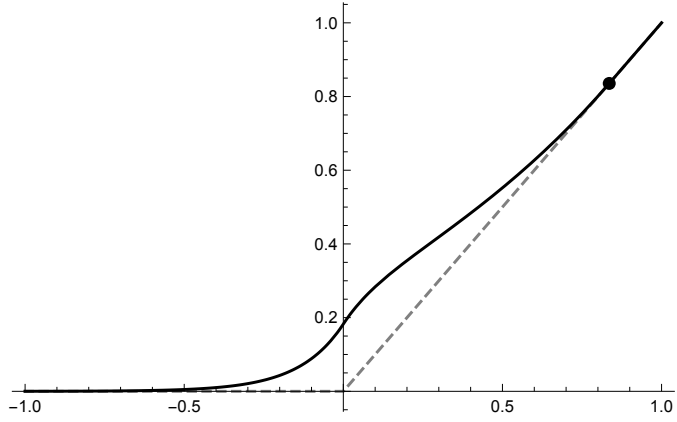}
\caption{Here we plot the reward function $f(x)=(x\vee0)$ (in gray dashed line) and the value function given in \eqref{eq:prop33} (in black solid line).  Laplace exponent used: $\psi(\lambda)=0.18\lambda+0.02\lambda^2-0.25(\frac{\lambda}{\lambda+4})$. Parameter: $r=0.18, q=2$. The optimal threshold $x^\star=0.8356$. According to Theorem \ref{thmSNLP}, smooth fit holds at $x^\star$. The black dot stands for the pasting point where optimal exercising occurs. }
\label{fig2}
\end{figure}
\begin{prop}There is a unique root to equation $x-\frac{1}{\Lambda(x)}=0$ over $(0,\infty)$, which we denote by $x^\star$.  The optimal stopping time for problem \eqref{problem:occupation} is the up-crossing strategy $T_{x^\star}^+$. Moreover, (see Figure \ref{fig2} for a visualization)
\begin{align}
V(x)=&\ex_{x}[e^{-rT_{x^\star}^+-q\int_{0}^{T_{x^\star}^+}\ind_{(-\infty,0)}(X_{s})ds}X_{T_{x^\star}^+}\ind_{\{T_{x^\star}^+<\infty\}}]\nn\\
=&\begin{dcases} x^\star\frac{\int_0^{\infty} \et^{-\Phi(r+q)y}W^{(r)}(x+y)\diff y}{\int_0^{\infty} \et^{-\Phi(r+q)y}W^{(r)}(x^\star+y)\diff y}, &\text{ if $x\le x^\star$},\\
x,&\text{ if $x> x^\star$}.\label{eq:prop33}
\end{dcases}
\end{align}
\end{prop}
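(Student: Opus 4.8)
The plan is to observe that almost all of the substantive work — verifying Assumption \ref{assumption'} and Assumption \ref{assumption''} for the reward $f(x)=x\vee 0$ and computing $c_0=0$ — has already been completed in the paragraphs preceding the statement, so that the proposition reduces to an invocation of Theorem \ref{thmSNLP} together with an explicit evaluation of the resulting hitting payoff. First I would dispatch the uniqueness claim directly from the sign analysis above: we have $h(0+)=-1/\Lambda(0)<0$ (since $\Lambda(0)=\Phi(r+q)-qW^{(r)}(0)>0$), $h(x)\to\infty$ as $x\to\infty$, and $\operatorname{sgn}h'(x)=\operatorname{sgn}N(x)$, where $N$ is increasing on $(0,a^*)$, decreasing on $(a^*,\infty)$, and satisfies $N(\infty)=0$. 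In Case (i) $h$ is strictly increasing on $(0,\infty)$, and in Case (ii) $h$ first decreases then increases; in either case a continuous function that starts strictly negative, possesses at most one turning point, and tends to $+\infty$ must cross zero exactly once on $(0,\infty)$. This furnishes the unique root $x^\star$.

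Next I would apply Theorem \ref{thmSNLP}. Since Assumption \ref{assumption'} and Assumption \ref{assumption''} hold with $x^\star<\infty$ and we have computed $c_0=0$, the theorem gives $V(x)=\ex_x[h(\ol{X}_\zeta)\ind_{\{\ol{X}_\zeta>x^\star\}}]$ and the attainment of the supremum in \eqref{problem:occupation} at the up-crossing time $T_{x^\star}^+$; equivalently, by Theorem \ref{singlesolution},
\[
V(x)=\ex_x[\et^{-A_{T_{x^\star}^+}}f(X_{T_{x^\star}^+})\ind_{\{T_{x^\star}^+<\infty\}}],
\]
where $A_t=rt+q\int_0^t\ind_{(-\infty,0)}(X_s)\diff s$. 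Because $f$ is strictly positive on $(0,\infty)$ and vanishes on $(-\infty,0]$, Remark \ref{rem32} justifies verifying Condition (M) only for $x>0$, where $f(x)=x$; this is precisely the $h(x)=x-1/\Lambda(x)$ used in the preceding computation.

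It then remains to convert this hitting payoff into the closed form \eqref{eq:prop33}. For $x>x^\star$, right-continuity of the c\`adl\`ag paths gives $T_{x^\star}^+=0$ and hence $V(x)=f(x)=x$. For $x\le x^\star$, I would invoke the fact that a spectrally negative process creeps upward across every level, so that $X_{T_{x^\star}^+}=x^\star$ almost surely on $\{T_{x^\star}^+<\infty\}$; since $x^\star>0$ this yields $f(X_{T_{x^\star}^+})=x^\star$, and therefore, by \eqref{eq:p2e},
\[
V(x)=x^\star\,\ex_x[\et^{-A_{T_{x^\star}^+}}\ind_{\{T_{x^\star}^+<\infty\}}]=x^\star\,\pr_x(\ol{X}_\zeta>x^\star).
\]
Substituting the previously derived expression for $\pr_x(\ol{X}_\zeta>x^\star)$ in terms of the $r$-scale function $W^{(r)}$ gives exactly \eqref{eq:prop33}.

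The only genuinely delicate point inside this otherwise short argument is the identification $X_{T_{x^\star}^+}=x^\star$: it is this absence of overshoot that collapses the reward at the passage time to the constant $x^\star$ and thereby reduces the value to $x^\star\,\pr_x(\ol{X}_\zeta>x^\star)$. I expect this to be the main thing to justify carefully, and it rests entirely on the absence of positive jumps, so I would cite the standard creeping property of spectrally negative L\'evy processes. Everything else is bookkeeping — matching the finite $x^\star$ and $c_0=0$ to the hypotheses of Theorem \ref{thmSNLP} and plugging in the law of $\ol{X}_\zeta$ computed above.
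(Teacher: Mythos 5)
Your proposal is correct and follows essentially the same route as the paper: the paper likewise treats the preceding sign analysis of $N$ and $h$, the computation of $c_0=0$, and the verification of Assumptions \ref{assumption'} and \ref{assumption''} as the substance of the proof, and then simply invokes Remark \ref{rem32} and Theorem \ref{thmSNLP}. Your only addition is to spell out the final bookkeeping the paper leaves implicit --- that absence of positive jumps gives $X_{T_{x^\star}^+}=x^\star$ on $\{T_{x^\star}^+<\infty\}$, so $V(x)=x^\star\,\pr_x(\ol{X}_\zeta>x^\star)$ via \eqref{eq:p2e} --- which is exactly how the displayed formula \eqref{eq:prop33} arises.
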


\section{Recursive optimal stopping problems}\label{sec:mul}
In optimal stopping, a natural question to ask is whether the optimality of a threshold type strategy for a single optimal stopping problem  will imply that for the corresponding multiple stop version or recursive stop version of the problem.  As seen in \cite{Carmona2008,TimKazuHZ14}, this question is not trivial even for the most popular reward function $(\et^x-K)^+$.  It is thus our objective in this section to provide a positive answer to this question.

\subsection{Cases with constant discount rate, refraction times, and no running cost }\label{sec41}

In this section, we will discuss the optimality of a threshold type strategy in recursive optimal stopping problems with refraction times but no running cost. More specifically, for a constant discounting rate\footnote{\label{fn:CAFr}Our approach does not apply to a general CAF discounting here, because of the ``delay'' caused by the refraction time $\delta$ and  possible spatially inhomogeneity of the law of the associated  random time $\zeta$ in  $X_0$. See the proof of Proposition \ref{prop:induction} for more details. } $r>0$, let $X_\cdot$ be a general L\'evy process henceforth, and let $f(\cdot)$ be a lower semi-continuous function,  satisfying Condition (M) in Definition \ref{conM0}. For
 a positive integer $n\ge 1$,  we consider the following recursive optimal stopping problems
\begin{align}\label{multiplepro}
v^{(l)}(x)&=\sup_{\tau\in\mathcal{T}}\ex_{x}[\et^{-r\tau}f^{(l)}(X_{\tau})\ind_{\{\tau<\infty\}}]\text{, $l=1,2,\ldots,n$},\\
\text{with }f^{(l)}(x)&=f(x)+\ex_x[\et^{-r\delta}v^{(l-1)}(X_{\delta})],\label{multiplepro1}
\end{align}
where $v^{(0)}(x)\equiv 0$, and $\delta>0$ is the refraction time.

\begin{remark}
We point out, in order for the problems in \eqref{multiplepro} to admit finite value functions $v^{(l)}(\cdot)$, $l=1,\ldots,n$, we only need $v^{(1)}(\cdot)\equiv v(\cdot)$ to be well defined and finite.  This is due to the supermartingale property of value function $v(\cdot)$:
\begin{multline*}f(x)\le f^{(2)}(x)\le f(x)+v(x)
\Rightarrow v^{(2)}(x)\le v(x)+\sup_{\tau\in\mc{T}}\ex_x[\et^{-r\tau}v(X_\tau)\ind_{\{\tau<\infty\}}]\le v(x)+v(x)=2v(x),\end{multline*}
\vspace*{-2ex}
\begin{multline*}
f(x)\le f^{(3)}(x)\le f(x)+v^{(2)}(x)\\\Rightarrow v^{(3)}(x)\le v(x)+\sup_{\tau\in\mc{T}}\ex_x[\et^{-r\tau}v^{(2)}(X_\tau)\ind_{\{\tau<\infty\}}]\le v(x)+2v(x)=3v(x).\end{multline*}
\end{remark}

\begin{remark}
The recursive optimal stopping problems defined in \eqref{multiplepro} and \eqref{multiplepro1} naturally arise in optimal stopping problems with multiple exercising opportunities. In particular, consider
\be\label{mulpro1}
V^{(n)}(x):=\sup_{\vec{\tau}\in \mathcal{T}^{(n)}_\delta}\ex_{x}[\sum_{i=1}^n \et^{-r\tau_{i}}f(X_{\tau_i})],\text{ $\forall x\in\R$},
\ee
where $\mathcal{T}^{(n)}_\delta$ is the set of admissible sequence of exercise times, defined as
\begin{align*}
\mathcal{T}^{(n)}_\delta:=\{&\vec{\tau}=(\tau_{n},...,\tau_{1})\in\mathcal{T}^n:\tau_{i+1}+\delta\leq\tau_{i}, \forall i=n-1,\ldots,1\}
\end{align*}
with a constant $\delta>0$ representing the refraction time that separates consecutive exercises. Then it is a standard argument to show that the value function $v^{(n)}(\cdot)$ determined in \eqref{multiplepro} is identical to $V^{(n)}(\cdot)$ in \eqref{mulpro1}, subject to some (extra) mild condition on $f(\cdot)$ (see, e.g., \cite{Carmona2008,TimKazuHZ14}).
\end{remark}

 Obviously, by Theorem \ref{singlesolution}, we know that lower semi-continuity and Condition (M) imply the optimality of the up-crossing strategy $T_{x^\star}^+$ for the optimal stopping problem with reward function $f(\cdot)$.  If one can show that both the lower semi-continuity property and the set $\Upsilon_{\e_r}$ are invariant under the recursive operation as prescribed in \eqref{multiplepro}-\eqref{multiplepro1}, then the optimality of threshold type strategies for the recursive optimal stopping problems \eqref{multiplepro}-\eqref{multiplepro1} can be proved by applying Theorem \ref{singlesolution} sequentially.

\begin{prop}\label{prop:induction}
Suppose the reward function $f(\cdot)$ is lower semi-continuous and belongs to $\Upsilon_{\e_r}$ (defined in Definition \ref{conM0}), then so does $f^{(l)}(\cdot)$ for all $l=1,2,\ldots$. That is, for every positive integer $l$, $f^{(l)}(\cdot)$ is lower semi-continuous, and there is a nondecreasing function $h^{(l)}$ such that  $h^{(l)}(x)>0$ if and only if $x>x_l^\star$ for some constant $x_l^\star\in[-\infty,\infty)$, and
 \be
 \ex_x[|h^{(l)}(\ol{X}_{\e_r})|]<\infty, \quad f^{(l)}(x)=\ex_x[h^{(l)}(\ol{X}_{\e_r})], \quad\forall x\in\mathbb{R}. \nn
 \ee
 Moreover,  a recursion equation hold: for any $l=1,2,\ldots$,
\be
h^{(l+1)}(x)=h(x)+\et^{-r\delta}\ex[h^{(l)}(x+X_\delta)\ind_{\{x+X_\delta>x_l^\star\}}],\quad\forall x\in\R.\nn\ee
\end{prop}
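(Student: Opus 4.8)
The plan is to argue by induction on $l$, using Theorem \ref{singlesolution} to pass from the reward $f^{(l)}$ to its value function $v^{(l)}$, and then showing that the averaging operation $w\mapsto\ex[w(\cdot+X_\delta)]$ maps $\Upsilon_{\e_r}$ into itself. The base case $l=1$ is immediate, since $v^{(0)}\equiv0$ forces $f^{(1)}=f$, which is lower semi-continuous and lies in $\Upsilon_{\e_r}$ by hypothesis, with $h^{(1)}=h$ and $x_1^\star=x^\star$. For the inductive step I would assume the claim for $l$ and deduce it for $l+1$. The structural fact I intend to exploit throughout is the spatial homogeneity of the Lévy process together with the independence of $\e_r$ from $X_\cdot$: under $\pr_y$ the running maximum $\ol{X}_{\e_r}$ has the same law as $y+\ol{X}_{\e_r}$ under $\pr$, so every $w\in\Upsilon_{\e_r}$ can be written as $w(y)=\ex[\tilde h(y+\ol{X}_{\e_r})]$ for a nondecreasing $\tilde h$. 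This is exactly the ingredient that fails under a general CAF discounting, cf. footnote \ref{fn:CAFr}.

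Granting the inductive hypothesis, Theorem \ref{singlesolution} yields $v^{(l)}(y)=\ex[g^{(l)}(y+\ol{X}_{\e_r})]$ with $g^{(l)}(z):=h^{(l)}(z)\ind_{\{z>x_l^\star\}}$, which is nondecreasing and \emph{nonnegative}. Writing $X_\delta$ for an independent copy and using $\ex_x[\et^{-r\delta}v^{(l)}(X_\delta)]=\et^{-r\delta}\ex[v^{(l)}(x+X_\delta)]$, an application of Tonelli's theorem (legitimate as $g^{(l)}\ge0$) should give
\begin{align}
\et^{-r\delta}\ex[v^{(l)}(x+X_\delta)]=\ex\big[\phi^{(l)}(x+\ol{X}_{\e_r})\big],\qquad \phi^{(l)}(z):=\et^{-r\delta}\ex\big[g^{(l)}(z+X_\delta)\big].\nn
\end{align}
Here $\phi^{(l)}$ is nondecreasing (since $g^{(l)}$ is) and nonnegative. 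Setting $h^{(l+1)}:=h+\phi^{(l)}$, a sum of nondecreasing functions, and recalling $f(x)=\ex[h(x+\ol{X}_{\e_r})]$, I obtain $f^{(l+1)}(x)=\ex[h^{(l+1)}(x+\ol{X}_{\e_r})]$; unwinding $\phi^{(l)}$ reproduces verbatim the asserted recursion $h^{(l+1)}(x)=h(x)+\et^{-r\delta}\ex[h^{(l)}(x+X_\delta)\ind_{\{x+X_\delta>x_l^\star\}}]$. Since $\phi^{(l)}\ge0$ we have $h^{(l+1)}\ge h$, so $\{h^{(l+1)}>0\}\supseteq(x^\star,\infty)$; monotonicity of $h^{(l+1)}$ then supplies a threshold $x_{l+1}^\star\le x^\star<\infty$ with the required sign pattern. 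For integrability, the supermartingale property from Theorem \ref{singlesolution} gives $\ex[\phi^{(l)}(x+\ol{X}_{\e_r})]=\ex_x[\et^{-r\delta}v^{(l)}(X_\delta)]\le v^{(l)}(x)<\infty$, which together with $\ex[|h(x+\ol{X}_{\e_r})|]<\infty$ yields $\ex_x[|h^{(l+1)}(\ol{X}_{\e_r})|]<\infty$, confirming $f^{(l+1)}\in\Upsilon_{\e_r}$ with the claimed representation.

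The remaining, and I expect the hardest, point is lower semi-continuity of $f^{(l+1)}(x)=f(x)+\et^{-r\delta}\ex[v^{(l)}(x+X_\delta)]$. As $f$ is lower semi-continuous, it suffices to treat $x\mapsto\ex[v^{(l)}(x+X_\delta)]$. Here I would use that $v^{(l)}$ is nondecreasing (Remark \ref{remark:32}) and lower semi-continuous: for $x_n\uparrow x$, the monotonicity and lower-semicontinuity of $v^{(l)}$ squeeze $v^{(l)}(x_n+X_\delta)\uparrow v^{(l)}(x+X_\delta)$ $\pr$-a.s., and since $v^{(l)}(x+X_\delta)$ is integrable (again by $\ex[\et^{-r\delta}v^{(l)}(x+X_\delta)]\le v^{(l)}(x)<\infty$) the dominated convergence theorem gives left-continuity of $x\mapsto\ex[v^{(l)}(x+X_\delta)]$; combined with monotonicity this is precisely lower semi-continuity. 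This is the delicate use of dominated convergence, and it closes the induction. The crux is therefore not the algebra of the recursion, which is a clean change-of-order computation, but verifying that the $\delta$-shifted average inherits enough regularity — which rests entirely on the monotonicity and lower semi-continuity of $v^{(l)}$ handed down by the previous step.
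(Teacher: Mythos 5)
Your proposal is correct and takes essentially the same route as the paper's own proof: induction via Theorem \ref{singlesolution}, the independence/Tonelli swap that yields $h^{(l+1)}(x)=h(x)+\et^{-r\delta}\ex[h^{(l)}(x+X_\delta)\ind_{\{x+X_\delta>x_l^\star\}}]$, the bound $h^{(l+1)}\ge h$ to locate the threshold $x_{l+1}^\star\le x_1^\star$, and the monotone-squeeze plus dominated-convergence argument (with integrability supplied by the supermartingale property of $v^{(l)}$) for left continuity. The only cosmetic difference is that the paper isolates the nondecreasing and lower semi-continuity properties of $v^{(l)}$ as its Lemma \ref{lem:cont}, which you invoke as inherited from the inductive hypothesis rather than proving separately.
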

\begin{proof}
We apply mathematical induction to prove the claim. To that end, we assume that, for some positive integer $k$, $f^{(l)}(\cdot)\in\Upsilon_{\e_r}$ holds for all $l=1,2,\ldots,k$. By Theorem \ref{singlesolution}, we know that the value function for the reward  function $f^{(k)}(\cdot)$ is given by
\be
v^{(k)}(x)=\ex_x[h^{(k)}(\ol{X}_{\e_r})\ind_{\{\ol{X}_{\e_r}>x_k^\star\}}]=\ex[h^{(k)}(x+\ol{X}_{\e_r})\ind_{\{x+\ol{X}_{\e_r}>x_k^\star\}}],\quad\forall x\in\mathbb{R}.\nn
\ee
Let us denote by $Y$ a random variable that is independent of $X_\cdot$, having the same law as $\ol{X}_{\e_r}$ under $\pr$. Then we have
\[v^{(k)}(x)=\ex[h^{(k)}(x+Y)\ind_{\{x+Y>x_k^\star\}}],\quad\forall x\in\R.\]
Hence, by \eqref{multiplepro1} we have
\begin{align}
f^{(k+1)}(x)&=f(x)+\et^{-r\delta}\ex[v^{(k)}(x+X_{\delta})]\nn\\
&=\ex[h^{(1)}(x+\ol{X}_{\e_r})]+\ex[\ex[\et^{-r\delta}h^{(k)}(x+X_\delta+Y)\ind_{\{x+X_\delta+Y>x^\star_k\}}|X_\delta]]\nn\\
&=\ex[h^{(1)}(x+Y)]+\ex[\et^{-r\delta}h^{(k)}(x+X_\delta+Y)\ind_{\{x+X_\delta+Y>x_k^\star\}}]\nn\\
&=\ex[h^{(k+1)}(x+Y)]=\ex_x[h^{(k+1)}(\ol{X}_{\e_r})],\label{eq:30}
\end{align}
where $h^{(k+1)}$ is defined as in the statement of the proposition. 
Because both $h^{(1)}(\cdot)$ and $h^{(k)}(\cdot)$ are nondecreasing, we know from the definition of $h^{(k+1)}(\cdot)$ that it is also nondecreasing. Moreover, from \eqref{eq:30} we also see the integrability condition of $h^{(k+1)}(\cdot)$ holds.

On the other hand, for any $x>x_1^\star$, we know from $h^{(k)}(\cdot)\ind_{\{\cdot>x_k^\star\}}>0$ that
\be h^{(k+1)}(x)\ge h^{(1)}(x)>0.\label{ineq_h}\ee
This implies that there exists an $x_{k+1}^\star\in [-\infty, x_1^\star]$ such that, $h^{(k+1)}(x)>0$ if and only if $x>x_{k+1}^\star$.

Finally, we need to prove that $f^{(k+1)}(\cdot)$ is lower semi-continuous, or equivalently, left continuous since it is nondecreasing. Using Lemma \ref{lem:cont} we know that $v^{(k)}(\cdot)$ is nondecreasing and lower semi-continuous. Therefore, for any $x_1>x_2$ we have $\pr$-a.s that, 
$$
v^{(k)}(x_1+X_\delta)\ge \limsup_{x_2\uparrow x_1}v^{(k)}(x_2+X_\delta)\ge \liminf_{x_2\uparrow x_1}v^{(k)}(x_2+X_\delta)=v^{(k)}(x_1+X_\delta),$$
which implies that $v^{(k)}(x_1+X_\delta)= \lim_{x_2\uparrow x_1}v^{(k)}(x_2+X_\delta)$ holds $\pr$-a.s.

Moreover, by the supermartingale  property of value function $v^{(k)}(\cdot)$, we know that nonnegative random variable $v^{(k)}(x_1+X_\delta)$ has a finite expectation.
By the dominated convergence theorem, we have
\begin{align}
\lim_{x_2\uparrow x_1}\ex[v^{(k)}(x_2+X_\delta)]=\ex[\lim_{x_2\uparrow x_1}v^{(k)}(x_2+X_\delta)]=\ex[v^{(k)}(x_1+X_\delta)].\nn
\end{align}
Thus,  $\ex[v^{(k)}(x+X_\delta)]$ is left continuous in $x$. The left continuity of $f^{(k+1)}(\cdot)$ now follows from the first line of \eqref{eq:30}.
This completes the proof.
\end{proof}
%

\begin{prop}\label{prop:42}
The sequence $\{x_{l}^\star\}_{1\le l\le n}$ in Proposition \ref{prop:induction} is non-increasing in $l$. The sequence of functions $\{h^{(l)}(\cdot)\}_{1\le l\le n}$ in Proposition \ref{prop:induction} is nondecreasing in $l$.
\end{prop}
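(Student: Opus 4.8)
The plan is to reduce both assertions to the single pointwise monotonicity claim $h^{(l+1)}(x)\ge h^{(l)}(x)$ for all $x\in\R$, and then prove that claim by induction on $l$. The reduction is immediate from Proposition \ref{prop:induction}: since each $h^{(l)}$ satisfies $h^{(l)}(x)>0$ if and only if $x>x_l^\star$, once $h^{(l+1)}\ge h^{(l)}$ holds pointwise we get $(x_l^\star,\infty)=\{x:h^{(l)}(x)>0\}\subseteq\{x:h^{(l+1)}(x)>0\}=(x_{l+1}^\star,\infty)$, which forces $x_{l+1}^\star\le x_l^\star$. Thus the non-increasing monotonicity of the thresholds is a free byproduct of the pointwise monotonicity of the functions, and it suffices to establish the latter.

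For the induction I would use the recursion $h^{(l+1)}(x)=h(x)+\et^{-r\delta}\ex[h^{(l)}(x+X_\delta)\ind_{\{x+X_\delta>x_l^\star\}}]$ from Proposition \ref{prop:induction}. The base case $h^{(2)}\ge h^{(1)}$ is direct: since $h^{(1)}=h$, the difference $h^{(2)}(x)-h^{(1)}(x)=\et^{-r\delta}\ex[h(x+X_\delta)\ind_{\{x+X_\delta>x_1^\star\}}]$ has a nonnegative integrand, because $h(y)\ind_{\{y>x_1^\star\}}$ equals the positive value $h(y)$ when $y>x_1^\star$ and equals $0$ otherwise. For the inductive step, assuming $h^{(l)}\ge h^{(l-1)}$ (hence $x_l^\star\le x_{l-1}^\star$ by the reduction above), I subtract the two instances of the recursion to obtain, with the shorthand $Y:=x+X_\delta$,
\[h^{(l+1)}(x)-h^{(l)}(x)=\et^{-r\delta}\ex\big[h^{(l)}(Y)\ind_{\{Y>x_l^\star\}}-h^{(l-1)}(Y)\ind_{\{Y>x_{l-1}^\star\}}\big],\]
and it remains to check that the bracketed integrand is almost surely nonnegative; the integrability needed to split and take expectations is already guaranteed by Proposition \ref{prop:induction}.

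The heart of the argument, and the step I expect to require the most care, is the pointwise sign analysis of this integrand, split according to where $Y$ falls relative to the two thresholds $x_l^\star\le x_{l-1}^\star$. On $\{Y\le x_l^\star\}$ both indicators vanish and the integrand is $0$; on $\{Y>x_{l-1}^\star\}$ both indicators equal $1$ and the integrand reduces to $h^{(l)}(Y)-h^{(l-1)}(Y)\ge0$ by the induction hypothesis; the delicate middle region $\{x_l^\star<Y\le x_{l-1}^\star\}$ is exactly where the two indicators disagree, but there the integrand equals $h^{(l)}(Y)-0=h^{(l)}(Y)>0$ because $Y>x_l^\star$. Hence the integrand is nonnegative in all three regions, which yields $h^{(l+1)}\ge h^{(l)}$ and closes the induction, and the monotonicity of $\{x_l^\star\}$ follows from the reduction. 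The only subtlety to watch is that the middle region is handled correctly precisely because the ordering $x_l^\star\le x_{l-1}^\star$, itself produced by the previous induction step, ensures the indicator attached to $h^{(l)}$ switches on no later than the one attached to $h^{(l-1)}$.
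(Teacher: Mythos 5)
Your proof is correct and is essentially the paper's own argument: the paper runs the same induction and handles your three-region case analysis algebraically, writing the difference as $\et^{-r\delta}\bigl(\ex[(h^{(k)}-h^{(k-1)})(x+X_\delta)\ind_{\{x+X_\delta>x_{k-1}^\star\}}]+\ex[h^{(k)}(x+X_\delta)\ind_{\{x_{k-1}^\star\ge x+X_\delta>x_k^\star\}}]\bigr)$, with both terms nonnegative for exactly the reasons you give (induction hypothesis above $x_{k-1}^\star$, positivity of $h^{(k)}$ on the middle band). The threshold monotonicity is then deduced just as in your reduction, via $x_{k+1}^\star=\sup\{x:h^{(k+1)}(x)\le0\}\le\sup\{x:h^{(k)}(x)\le0\}=x_k^\star$.
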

\begin{proof}
First, from the proof of Proposition \ref{prop:induction}, we already know the claim holds for $l=1,2$. Suppose it holds for $l=1,2,\ldots,k$ for some $k\ge2$. In particular,
\[h^{(k)}(x)\ge h^{(k-1)}(x),\quad\forall x\in\mathbb{R}, \text{ and }x_k^\star\le x_{k-1}^\star.\]
Then, from
\begin{align}\nonumber
h^{(k+1)}(x)=&h^{(1)}(x)+\ex[\et^{-r\delta}h^{(k)}(x+X_\delta)\ind_{\{x+X_\delta>x_k^\star\}}],
\nn\\
h^{(k)}(x)=&h^{(1)}(x)+\ex[\et^{-r\delta}h^{(k-1)}(x+X_\delta)\ind_{\{x+X_\delta>x_{k-1}^\star\}}],\nn
\end{align}
we know that
\begin{align}
&h^{(k+1)}(x)-h^{(k)}(x)\nn\\
=&\et^{-r\delta}\ex[h^{(k)}(x+X_{\delta})\ind_{\{x+X_{\delta}>x_{k}^*\}}-h^{(k-1)}(x+X_{\delta})\ind_{\{x+X_{\delta}>x^{\star}_{k-1}\}}]\nn\\
\nn
=&\et^{-r\delta}\left(\ex[(h^{(k)}(x+X_{\delta})-h^{(k-1)}(x+X_{\delta}))\ind_{\{x+X_{\delta}>x_{l-1}^*\}}]+\ex[h^{(k)}(x+X_\delta)\ind_{\{x_{k-1}^\star\ge x+X_\delta>x_k^\star\}}]\right)\nn\\
\ge&\et^{-r\delta}\ex[(h^{(k)}(x+X_{\delta})-h^{(k-1)}(x+X_{\delta}))\ind_{\{x+X_{\delta}>x_{l-1}^*\}}]\ge0.\nn
\end{align}
Therefore, we have
\be\label{h_+1>h}
x_{k+1}^{\star}\equiv\sup\{x\in\mathbb{R}: h^{(k+1)}(x)\le 0\}\le \sup\{x\in\mathbb{R}: h^{(k)}(x)\le 0\}\equiv x_k^\star.
\ee
This completes the proof.
\end{proof}

\begin{remark}
There are easily verifiable sufficient conditions that lead to a  strictly decreasing sequence of thresholds
\[x_1^\star>x_2^\star>\ldots.\]
For example, if $X_\cdot$ has unbounded variation and $h^{(1)}$ is continuous on $\mathbb{R}$, then we are in this case. To see this, we apply mathematical induction to show that  $h^{(l)}$ is continuous over $\mathbb{R}$ for all $l\ge1$. On the other hand, by
  \cite[Theorem 24.10(i)]{Sato99}, we know that the random variable $X_\delta$ is supported on $\R$.
Hence, the first inequality \eqref{ineq_h} becomes a strict inequality and in particular, $h^{(2)}(x)>h^{(1)}(x)$ for all $x\in\mathbb{R}$. As a consequence, we know that
$x_2^\star<x_1^\star$. By the same argument, we see that the sequence $\{x_l^\star\}_{l\ge1}$ is strictly decreasing in $l$.
\end{remark}

In conclusion, we obtain the following result.
\begin{thm}\label{thm44}
Assume that the reward function $f(\cdot)=f^{(1)}(\cdot)$ is lower semi-continuous and satisfies Condition (M). Then the recursive optimal stopping problems \eqref{multiplepro} and \eqref{multiplepro1} are solved by up-crossing strategies $T_{x_l^\star}^+$. And these thresholds satisfy
\[-\infty\le x_n^\star\le x_{n-1}^\star\le \ldots\le x_1^\star<\infty.\]
\end{thm}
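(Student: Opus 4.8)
The plan is to assemble the theorem directly from the three preceding building blocks, so that the proof reduces to an induction that feeds the output of Theorem \ref{singlesolution} back into the recursion \eqref{multiplepro}--\eqref{multiplepro1}. First I would settle the base case: since $f=f^{(1)}$ is by hypothesis lower semi-continuous and satisfies Condition (M), it lies in $\Upsilon_{\e_r}$ with an associated nondecreasing function $h^{(1)}=h$ and a threshold $x_1^\star\in[-\infty,\infty)$, as in Definition \ref{conM0}. In particular $x_1^\star<\infty$, which already secures the right-hand end of the chain of inequalities.

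Next I would run the induction carried out in Proposition \ref{prop:induction}. Assuming $f^{(1)},\ldots,f^{(k)}\in\Upsilon_{\e_r}$ are lower semi-continuous, Theorem \ref{singlesolution} applies to the single stopping problem with reward $f^{(k)}$ (here the CAF discounting is the constant-rate case $A_t=rt$, $\zeta=\e_r$) and yields $v^{(k)}(x)=\ex_x[h^{(k)}(\ol{X}_{\e_r})\ind_{\{\ol{X}_{\e_r}>x_k^\star\}}]$ together with the optimality of $T_{x_k^\star}^+$. Plugging this value function into \eqref{multiplepro1} and using the independence of $X_\delta$ from an independent copy of $\ol{X}_{\e_r}$, Proposition \ref{prop:induction} produces the nondecreasing, integrable function $h^{(k+1)}$ with threshold $x_{k+1}^\star\in[-\infty,x_1^\star]$, and shows that $f^{(k+1)}$ is again lower semi-continuous. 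Thus Condition (M) and lower semi-continuity are invariant under one step of the recursion, the induction closes, and a final application of Theorem \ref{singlesolution} gives the optimality of $T_{x_l^\star}^+$ for every $l=1,\ldots,n$.

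Finally I would invoke Proposition \ref{prop:42} to order the thresholds: it shows $\{h^{(l)}\}$ is nondecreasing in $l$ and hence $\{x_l^\star\}$ is non-increasing, which together with $x_1^\star<\infty$ and the admissible value $x_n^\star=-\infty$ yields the full chain $-\infty\le x_n^\star\le\cdots\le x_1^\star<\infty$. Well-definedness is not a separate issue: Theorem \ref{singlesolution} guarantees $v^{(1)}=v$ is finite, and the remark following \eqref{multiplepro1} shows $v^{(l)}\le l\,v$, so every value function in \eqref{multiplepro} is finite.

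The genuine work, and the only real obstacle, has already been discharged inside Proposition \ref{prop:induction}. The two delicate points there are (i) verifying that the recursively defined $h^{(l+1)}$ retains the single sign-change structure required by Definition \ref{conM0}, which relies on the inequality $h^{(k+1)}(x)\ge h^{(1)}(x)$ for $x>x_1^\star$ to locate $x_{k+1}^\star\in[-\infty,x_1^\star]$, and (ii) propagating the lower semi-continuity (equivalently, left continuity) of $f^{(k+1)}$, which uses the monotonicity and lower semi-continuity of $v^{(k)}$ from Lemma \ref{lem:cont} together with a dominated-convergence argument justified by the supermartingale property of $v^{(k)}$. Granting those, the claimed result follows as a direct corollary of Theorems \ref{singlesolution} and Propositions \ref{prop:induction}--\ref{prop:42}.
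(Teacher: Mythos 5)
Your proposal is correct and follows the paper's own route exactly: the paper presents Theorem \ref{thm44} as an immediate consequence (``In conclusion...'') of Theorem \ref{singlesolution} applied inductively, with the invariance of lower semi-continuity and Condition (M) supplied by Proposition \ref{prop:induction} and the monotonicity of the thresholds by Proposition \ref{prop:42}, which is precisely your assembly, including the finiteness observation $v^{(l)}\le l\,v$ from the remark after \eqref{multiplepro1}. You also correctly identify where the genuine work lives (the sign-change structure of $h^{(l+1)}$ and the left-continuity propagation via Lemma \ref{lem:cont} and dominated convergence), matching the paper's emphasis.
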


Finally, we give an example of swing options under general L\'evy processes.
\begin{cor}\label{example42}
Suppose the discount factor rate is a constant $r>0$ such that $\ex[\et^{X_1}]<\et^r$. For any $K_1, K_2>0$, the results in Theorem \ref{thm44} hold if the reward function  $f^{(1)}(\cdot)$  can be written as a convex combination of $\et^x-K_1$ and $K_2-\et^{-x}$.
\end{cor}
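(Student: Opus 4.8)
The plan is to exploit the convex-cone structure of $\Upsilon_{\e_r}$ recorded in Remark \ref{rem:convex}. Writing $f_1(x)=\et^x-K_1$ and $f_2(x)=K_2-\et^{-x}$, a convex combination $\alpha f_1(\cdot)+\beta f_2(\cdot)$ with $\alpha,\beta\ge0$ is a nonnegative linear combination, so it suffices to verify that each of $f_1(\cdot)$ and $f_2(\cdot)$ is lower semi-continuous and belongs to $\Upsilon_{\e_r}$; Theorem \ref{thm44} then applies verbatim to the combination. Lower semi-continuity is immediate, since $f_1,f_2$ and hence every convex combination are continuous on $\R$. Thus the whole task reduces to producing, for each of $f_1$ and $f_2$, a nondecreasing function $h$ as in Definition \ref{conM0}.

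For this I would search for $h$ within the same exponential family as the reward: set $h_1(y):=c_1\et^{y}-K_1$ and $h_2(y):=K_2-c_2\et^{-y}$ with constants $c_1,c_2>0$ to be fixed. Using the identity $\ex_x[g(\ol X_{\e_r})]=\ex[g(x+\ol X_{\e_r})]$ and linearity, the representations $f_i(x)=\ex_x[h_i(\ol X_{\e_r})]$ force
\[c_1=\frac{1}{\ex[\et^{\ol X_{\e_r}}]},\qquad c_2=\frac{1}{\ex[\et^{-\ol X_{\e_r}}]}.\]
Granting that these constants are well defined and strictly positive, each $h_i$ is strictly increasing; each satisfies the integrability requirement (for $h_1$, $|h_1(\ol X_{\e_r})|\le c_1\et^{\ol X_{\e_r}}+K_1$ is integrable; for $h_2$, $|h_2(\ol X_{\e_r})|\le K_2+c_2$ is bounded because $\ol X_{\e_r}\ge0$); and $\{h_i>0\}$ is a half-line $(x_i^\star,\infty)$ with $x_1^\star=\log(K_1/c_1)$ and $x_2^\star=\log(c_2/K_2)$, both finite. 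Hence Definition \ref{conM0} is met once the two expectations above are shown to be finite and positive.

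The constant $c_2$ is free of difficulty: since $\ol X_{\e_r}\ge0$ we have $\et^{-\ol X_{\e_r}}\in(0,1]$, so $\ex[\et^{-\ol X_{\e_r}}]\in(0,1]$. The only genuine obstacle is the finiteness of $\ex[\et^{\ol X_{\e_r}}]$, and this is exactly where the hypothesis $\ex[\et^{X_1}]<\et^r$ enters. Writing the Laplace exponent $\psi(\theta):=\log\ex[\et^{\theta X_1}]$, the hypothesis reads $\psi(1)<r$. I would invoke the spatial Wiener--Hopf factorization: $\ol X_{\e_r}$ and $X_{\e_r}-\ol X_{\e_r}\stackrel{d}{=}\underline X_{\e_r}$ are independent, where $\underline X_t:=\inf_{s\le t}X_s$, so that at $\theta=1$,
\[\frac{r}{r-\psi(1)}=\ex[\et^{X_{\e_r}}]=\ex[\et^{\ol X_{\e_r}}]\,\ex[\et^{\underline X_{\e_r}}].\]
The left-hand side is finite and positive because $\psi(1)<r$, while $\ex[\et^{\underline X_{\e_r}}]\in(0,1]$ since $\underline X_{\e_r}\le0$ is finite almost surely. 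Solving for the first factor yields $\ex[\et^{\ol X_{\e_r}}]=r/\big((r-\psi(1))\,\ex[\et^{\underline X_{\e_r}}]\big)\in(0,\infty)$, so $c_1$ is well defined and positive. I expect this step to be the crux: a naive exponential-martingale and optional-stopping bound on $\ex[\et^{-rT_z^+}\ind_{\{T_z^+<\infty\}}]$ only produces the borderline decay $\et^{-z}$, which is not summable against $\et^{z}$, whereas the Wiener--Hopf identity delivers finiteness precisely because it never requires $\psi(\theta)<\infty$ for any $\theta>1$.

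With Condition (M) verified for $f_1$ and $f_2$, Remark \ref{rem:convex} gives $\alpha f_1(\cdot)+\beta f_2(\cdot)\in\Upsilon_{\e_r}$ for all $\alpha,\beta\ge0$; combined with its continuity, this places the reward function squarely within the hypotheses of Theorem \ref{thm44}, whose conclusion — optimality of the nested up-crossing strategies $T_{x_l^\star}^+$ and monotonicity $x_n^\star\le\cdots\le x_1^\star<\infty$ of the thresholds — is the desired claim.
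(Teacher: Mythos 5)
Your proposal is correct and follows essentially the same route as the paper: the paper likewise reduces to showing that $f_1(x)=\et^x-K_1$ and $f_2(x)=K_2-\et^{-x}$ satisfy Condition (M), handles the convex combination via Remark \ref{rem:convex}, and uses exactly your candidates $h_1(x)=\et^{x}/\ex[\et^{\ol{X}_{\mathbf{e}_r}}]-K_1$ and $h_2(x)=K_2-\et^{-x}/\ex[\et^{-\ol{X}_{\mathbf{e}_r}}]$. The only difference is that the paper simply cites Mordecki's Theorems 1 and 2 for these representations (in particular for the finiteness of $\ex[\et^{\ol{X}_{\mathbf{e}_r}}]$ under $\ex[\et^{X_1}]<\et^{r}$), whereas you re-derive that ingredient from scratch via the Wiener--Hopf factorization of $X_{\mathbf{e}_r}$ --- which is precisely the argument underlying the cited results.
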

\begin{proof} It suffices to prove that both $f_1(x)=\et^x-K_1$ and $f_2(x)=K_2-\et^{-x}$ satisfy Condition (M).  The claim for the convex combination $\alpha f_1(x)+\beta f_2(x)$ follows from Remark \ref{rem:convex}.

From  \cite[Theorem 1]{mordecki2002} we know that $f_1(x)=\ex_x[h_1(\ol{X}_{\e_r})]$ for $h_1(x)=\et^x/\ex[\et^{\ol{X}_{\e_r}}]-K_1$, and $h_1(x)\gtrless 0$ if and only if $x\gtrless  \log (K_1\ex[\et^{\ol{X}_{\e_r}}])$.

Similarly, from  \cite[Theorem 2]{mordecki2002} we know that $\ex[\et^{-\ol{X}_{\e_r}}]=\ex[\et^{\inf_{s\in[0,\e_r]}(-X_s)}]>0$. Then it is straightforward to verify that
 $f_2(x)=\ex_x[h_2(\ol{X}_{\e_r})]$ for $h_2(x)=K_2-\et^{-x}/\ex[\et^{-\ol{X}_{\e_r}}]$, and $h_2(x)\gtrless 0$ if and only if $x\gtrless \log (K_2\ex[\et^{-\ol{X}_{\e_r}}])$.

\end{proof}
\subsection{Cases with random discounting and a running cost}\label{sec42}
In a recent work \cite{YamazakiAMO2014}, the author considered the following type of multiple stopping problem:
\be
\sup_{\substack{\tau^{(1)}\le\ldots\le\tau^{(n)}\\
\tau^{(l)}\in\mc{T}}} \sum_{l=1}^{n}\ex_x[\int_{\tau^{(l-1)}}^{\tau^{(l)}}\et^{-rt}(-C_l(X_t))\diff t+\et^{-r\tau^{(l)}}f_l(X_{\tau^{(l)}})\ind_{\{\tau^{(l)}<\infty\}}],\nn
\ee
where we interpret $C_l(\cdot)$ as the running cost between the $(l-1)$-th and the $l$-th stoppings, and $f_l(\cdot)$ as the reward upon the $l$-th stopping, and $r>0$ as the discounting rate. In particular, it is shown using explicit calculations (and the principle of smooth fit) under the spectrally negative L\'evy model that the optimal strategies $\tau^{(l)}$ are of threshold type.

In this section, we let $X_\cdot$ be a general L\'evy process, and  use the average problem approach to show that threshold type strategy is optimal for a related recursive optimal stopping problems. Formally, consider
\begin{align}\label{multiplepro11}
v^{(l)}(x)&=\sup_{\tau\in\mathcal{T}}\ex_{x}[\int_0^{\tau}\et^{-A_t}(-C_l(X_t))\diff t+\et^{-A_\tau}f^{(l)}(X_{\tau})\ind_{\{\tau<\infty\}}]\text{, $l=1,2,\ldots,n$.}\\
\text{with }f^{(l)}(x)&=f_l(x)+v^{(l-1)}(x),\label{multiplepro12}
\end{align}
where $C^{(0)}(x)=v^{(0)}(x)\equiv0$ as before, and $A_\cdot$ is the CAF random discounting.

\begin{Assumption}\label{assumption4}
For all $1\le l\le n$, the cost functions $C_l(x)$ satisfy
\[\ex_x[\int_0^\infty \et^{-A_t}|C_l(X_t)|\diff t]<\infty.\]
\end{Assumption}

\begin{remark}\label{rmk:AC}
The condition in Assumption \ref{assumption4} holds when, for example, $C_l(\cdot)$ is uniformly bounded and $\ex_x[\et^{-A_t}]$ is integrable over $t\in(0,\infty)$. If $X_\cdot$ is spectrally negative and it holds that $A_t\ge rt$, for all $t\ge0$ holds $\pr_x$-a.s. for some $r>0$, then by \cite[Corollary 8.9]{Kyprianou2006} we have
\[0\le \ex_x[\int_0^\infty \et^{-A_t}|C_l(X_t)|\diff t]\le \ex_x[\int_0^\infty \et^{-rt}|C_l(X_t)|\diff t]=\int_{-\infty}^\infty\left(\Phi'(r)\et^{-\Phi(r)(y-x)}-W^{(r)}(x-y)\right)|C_l(y)|\diff y,\]
where $W^{(r)}(\cdot)$ is the $r$-scale function of $X_\cdot$ and $\Phi'(r)$ is the derivative of $\Phi(r)$ (see Appendix \ref{sec:SNLP}).  Thus, Assumption \ref{assumption4} holds if $\int_{-\infty}^\infty (\Phi'(r)\et^{-\Phi(r)(y-x)}-W^{(r)}(x-y))|C_l(y)|\diff y<\infty$. More sufficient conditions can be found   in \cite{YamazakiAMO2014}.
\end{remark}
To apply the average problem approach, we need to recast the problem \eqref{multiplepro11}-\eqref{multiplepro12} to one without the running cost $-C_l(\cdot)$. To that end, we define for any $1\le l\le n$ that
\be
\ol{C}_l(x):=\ex_x[\int_0^\infty\et^{-A_t}C_l(X_t)\diff t].\nn
\ee
We also define that $\ol{C}_0(x)\equiv0$.
It follows that, for any stopping time $\tau\in\mc{T}$, and $1\le l\le n$ such that $f^{(l)}(\cdot)$ is well defined using \eqref{multiplepro12}, we have
\begin{align}
&\ex_x[\int_0^\tau\et^{-A_t}(-C_l(X_t))\diff t+\et^{-A_\tau}f^{(l)}(X_\tau)\ind_{\{\tau<\infty\}}]\nn\\
=&\ex_x[\int_0^\infty\et^{-A_t}(-C_l(X_t))\diff t-\ind_{\{\tau<\infty\}}\int_{\tau}^\infty\et^{-A_t}(-C_l(X_t))\diff t+\et^{-A_\tau}f^{(l)}(X_\tau)\ind_{\{\tau<\infty\}}]\nn\\
=&\ex_x[\int_0^\infty\et^{-A_t}(-C_l(X_t))\diff t-\et^{-A_\tau}\ex_{X_\tau}[\int_{0}^\infty\et^{-A_t}(-C_l(X_t))\diff t]\ind_{\{\tau<\infty\}}+\et^{-A_\tau}f^{(l)}(X_\tau)\ind_{\{\tau<\infty\}}]\nn\\
=&-\ol{C}_l(x)+\ex_x[\et^{-A_\tau}g^{(l)}(X_\tau)\ind_{\{\tau<\infty\}}],\nn
\end{align}
where we used the strong Markov property of $X_\cdot$ in the second equality, and defined
\be g^{(l)}(x):=\ol{C}_l(x)+f^{(l)}(x).\label{eq:CG}\ee
It follows that (using \eqref{multiplepro12} and \eqref{eq:CG})
\begin{align}
v^{(l)}(x)=&-\ol{C}_l(x)+\sup_{\tau\in\mc{T}}\ex_x[\et^{-A_\tau}g^{(l)}(X_\tau)\ind_{\{\tau<\infty\}}],\label{eq:33}\\
g^{(l+1)}(x)=&\ol{C}_{l+1}(x)-\ol{C}_l(x)+f_{l+1}(x)+\sup_{\tau\in\mc{T}}\ex_x[\et^{-A_\tau}g^{(l)}(X_\tau)\ind_{\{\tau<\infty\}}],\label{eq:34}
\end{align}
whenever the right hand sides are well defined. 
By \eqref{eq:33} and \eqref{eq:34} we obtain the main results of this section.
\begin{prop}\label{prop43}
Under Assumption \ref{assumption4}, and assume that $\ol{C}_{l}(\cdot)-\ol{C}_{l-1}(\cdot)+ f_l(\cdot)$ is continuous and belongs to $\Upsilon_{\zeta}$ for all $1\le l\le n$. Then for $1\le l\le n$, functions $g^{(l)}(\cdot)$ are well defined and belong to $\Upsilon_{\zeta}$; moreover, $g^{(l)}(\cdot)$ satisfies 
\be
\lim_{n\to\infty}g^{(l)}(X_{\tau_n})=g^{(l)}(X_\tau)\quad\text{on }\{\tau<\infty\},\label{eq:41}
\ee
for any stopping time $\tau\in\mathcal{T}$ and  a nonincreasing sequence of stopping times $(\tau_n)_{n\ge1}$ such that $\tau_n\downarrow\tau$ as $n\to\infty$.
 And there are thresholds $\{x_l^\star\}_{1\le l\le n}$ such that $v^{(l)}(x)+\ol{C}_l(x)=g^{(l)}(x)$ if and only if $x\ge x_l^\star$. That is, the threshold type strategy $T_{x_l^\star}^+$ is optimal for problems \eqref{multiplepro11} and \eqref{multiplepro12}.
\end{prop}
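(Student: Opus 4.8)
The plan is to argue by induction on $l$, carrying the hypothesis that $g^{(l)}(\cdot)$ is well defined, continuous, and belongs to $\Upsilon_{\zeta}$ — so that it is nondecreasing and admits the representation $g^{(l)}(x)=\ex_x[h^{(l)}(\ol{X}_\zeta)]$ for a nondecreasing $h^{(l)}$ with threshold $x_l^\star$. The base case $l=1$ is immediate: since $\ol{C}_0\equiv v^{(0)}\equiv0$ and $f^{(1)}=f_1$, we have $g^{(1)}=\ol{C}_1-\ol{C}_0+f_1$, which is continuous and lies in $\Upsilon_{\zeta}$ by hypothesis, while $\ol{C}_1$ is finite by Assumption \ref{assumption4}. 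Because $X_\cdot$ is c\`adl\`ag, any $\tau_n\downarrow\tau$ forces $X_{\tau_n}\to X_\tau$ on $\{\tau<\infty\}$, so the continuity of $g^{(1)}$ yields \eqref{eq:41} at level one.

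For the inductive step, assume the hypothesis at level $l$ and write $\mathcal{V}^{(l)}(x):=\sup_{\tau\in\mc{T}}\ex_x[\et^{-A_\tau}g^{(l)}(X_\tau)\ind_{\{\tau<\infty\}}]$. I would apply Theorem \ref{singlesolution} to the reward $g^{(l)}\in\Upsilon_{\zeta}$; continuity supplies lower semi-continuity (or one may invoke Remark \ref{rmk:33} via \eqref{eq:41}), and the integrability $\ex_x[|h^{(l)}(\ol{X}_\zeta)|]<\infty$ makes $\mathcal{V}^{(l)}$ finite, hence well defined. The theorem then gives $\mathcal{V}^{(l)}(x)=\ex_x[h^{(l)}(\ol{X}_\zeta)\ind_{\{\ol{X}_\zeta>x_l^\star\}}]\in\Upsilon_{\zeta}$ with optimal rule $T_{x_l^\star}^+$. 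Since $\ol{X}_\zeta\ge x$ a.s., the identity $g^{(l)}(x)-\mathcal{V}^{(l)}(x)=\ex_x[h^{(l)}(\ol{X}_\zeta)\ind_{\{\ol{X}_\zeta\le x_l^\star\}}]\le0$ shows $\mathcal{V}^{(l)}=g^{(l)}$ on $[x_l^\star,\infty)$ with strict inequality below; combined with \eqref{eq:33}, which reads $v^{(l)}+\ol{C}_l=\mathcal{V}^{(l)}$, this is exactly the asserted equivalence $v^{(l)}(x)+\ol{C}_l(x)=g^{(l)}(x)\iff x\ge x_l^\star$ and the optimality of $T_{x_l^\star}^+$. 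Finally, by \eqref{eq:34} we have $g^{(l+1)}=(\ol{C}_{l+1}-\ol{C}_l+f_{l+1})+\mathcal{V}^{(l)}$, a sum of two members of $\Upsilon_{\zeta}$, so the convex-cone property (Remark \ref{rem:convex}) yields $g^{(l+1)}\in\Upsilon_{\zeta}$, furnishing $h^{(l+1)}$ and $x_{l+1}^\star$.

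The remaining task is to propagate continuity, and this I expect to be the crux. The summand $\ol{C}_{l+1}-\ol{C}_l+f_{l+1}$ is continuous by hypothesis, so it suffices to show that the value $\mathcal{V}^{(l)}$ is continuous; then $g^{(l+1)}$ is continuous and \eqref{eq:41} at level $l+1$ follows from $X_{\tau_n}\to X_\tau$, closing the induction. On $[x_l^\star,\infty)$ continuity of $\mathcal{V}^{(l)}$ is inherited from $g^{(l)}$, while on $(-\infty,x_l^\star)$ I would use the hitting-time form $\mathcal{V}^{(l)}(x)=\ex_x[\et^{-A_{T_{x_l^\star}^+}}g^{(l)}(X_{T_{x_l^\star}^+})\ind_{\{T_{x_l^\star}^+<\infty\}}]$ from \eqref{eq:hitpay}. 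Left-continuity (equivalently lower semi-continuity, as $\mathcal{V}^{(l)}$ is nondecreasing) can be obtained as in Proposition \ref{prop:induction} from the supermartingale property of $(\et^{-A_t}\mathcal{V}^{(l)}(X_t))_{t\ge0}$, and the matching upper bound by a dominated convergence argument along $x_n\to x$, with $\pr_x$-integrable majorant $|h^{(l)}(\ol{X}_\zeta)|$ and the continuity of $g^{(l)}$. The delicate point is precisely this limit: unlike the refraction case of Section \ref{sec41}, where integrating against the law of $X_\delta$ smooths the value function, here the CAF discounting $A_\cdot$ is not spatially homogeneous, so shifting the starting point $x$ alters the joint law of $(A_{T_{x_l^\star}^+},X_{T_{x_l^\star}^+})$ and one cannot merely translate; controlling this joint law as $x$ varies is the main obstacle to securing continuity of $\mathcal{V}^{(l)}$.
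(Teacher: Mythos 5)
Your architecture is the right one — induction via the decomposition \eqref{eq:33}--\eqref{eq:34}, the convex-cone property of $\Upsilon_\zeta$ (Remark \ref{rem:convex}) to get $g^{(l+1)}\in\Upsilon_\zeta$, and Remark \ref{rmk:33} as the substitute for lower semi-continuity — and your base case and threshold identification agree with the paper. But there is a genuine gap at exactly the step you flag as the crux, and it is not a patchable technicality: you try to propagate \emph{spatial} continuity of the value function $\mathcal{V}^{(l)}=v^{(l)}+\ol{C}_l$, and as you correctly observe, under CAF discounting the joint law of $(A_{T^+_{x_l^\star}},X_{T^+_{x_l^\star}})$ is not translation invariant in the starting point, so neither the translation/dominated-convergence device of Section \ref{sec41} nor the hitting-time representation closes the argument. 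Your induction therefore does not close.

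The paper never proves — and never needs — spatial continuity of the value function. The induction hypothesis it carries is precisely the path-regularity property \eqref{eq:41}, and it propagates this along the \emph{time} axis rather than the space axis: since $g^{(k)}$ satisfies \eqref{eq:41}, the discounted reward process $(\et^{-A_t}g^{(k)}(X_t))_{t\ge0}$ is right continuous, hence its Snell envelope $(\et^{-A_t}\widehat{g^{(k)}}(X_t))_{t\ge0}$, where $\widehat{g^{(k)}}:=v^{(k)}+\ol{C}_k$, is c\`adl\`ag (\cite[page 353]{KaratzasShreve01}). Then for any $\tau_n\downarrow\tau$, on $\{\tau<\infty\}$ one gets, using continuity of $A_\cdot$ and Lemma \ref{lem:inf} to separate the factors,
\[
\lim_{n\to\infty}\widehat{g^{(k)}}(X_{\tau_n})=\et^{A_\tau}\cdot\lim_{n\to\infty}\et^{-A_{\tau_n}}\widehat{g^{(k)}}(X_{\tau_n})=\et^{A_\tau}\cdot\et^{-A_\tau}\widehat{g^{(k)}}(X_\tau)=\widehat{g^{(k)}}(X_\tau),
\]
and adding the continuous summand $\ol{C}_{k+1}-\ol{C}_k+f_{k+1}$ gives \eqref{eq:41} for $g^{(k+1)}=(\ol{C}_{k+1}-\ol{C}_k+f_{k+1})+\widehat{g^{(k)}}$ directly, with no claim about continuity of $\widehat{g^{(k)}}$ in $x$. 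Note this is also why the proposition's conclusion is stated as \eqref{eq:41} rather than continuity of $g^{(l)}$: the property needed to re-run Theorem \ref{singlesolution} at the next level (per Remark \ref{rmk:33}) is regularity of $t\mapsto g^{(l)}(X_t)$ along decreasing stopping times, which the Snell-envelope argument delivers even though spatial continuity remains unknown for $l\ge2$. Replacing your final step with this argument repairs the proof; the remainder of your write-up, including the implicit recursion $h_g^{(k+1)}=h_f^{(k+1)}+h_g^{(k)}\ind_{\{\cdot>x_k^\star\}}$, matches the paper.
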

\begin{proof}
In light of \eqref{eq:CG} and Theorem \ref{singlesolution}, the claim for $l=1$ is obvious. In particular, the condition in \eqref{eq:41} follows from lower semi-continuity of $g^{(1)}(x)=\ol{C}_1(x)+f_1(x)$. 

Suppose the claim holds for $l=k$ for some $k\ge1$, then we know that there is $h_g^{(k)}(\cdot)$ satisfying conditions in Definition \ref{conM0}, and
\[g^{(k)}(x)=\ex_x[h_g^{(k)}(\ol{X}_{\zeta})],\quad\forall x\in\R,\]
and $h_g^{(k)}(x)>0$ if and only if $x>x_k^\star$ for some $x_k^\star\in[-\infty,\infty)$. 
To obtain a similar representation for $\sup_{\tau\in\mc{T}}\ex_x[\et^{-A_\tau}g^{(k)}(X_\tau)\ind_{\{\tau<\infty\}}]$, we cannot apply Theorem \ref{singlesolution} directly (even though we already know that $T_{x_k^\star}^+$ is optimal), as we don't know if $g^{(k)}(\cdot)$ is lower semi-continuous (except when $k=1$). However, as discussed in Remark \ref{rmk:33}, the property in \eqref{eq:41} of $g^{(k)}(\cdot)$ ensures that the proof of Theorem \ref{singlesolution} goes through, so we obtain 
\[\ex_x[\et^{-A_\tau}g^{(k)}(X_\tau)\ind_{\{\tau<\infty\}}]=\ex_x[h_g^{(k)}(\ol{X}_\zeta)\ind_{\{\ol{X}_\zeta>x_k^\star\}}].\]
On the other hand, by the assumption,  there  is $h_f^{(k+1)}(\cdot)$ satisfying conditions in Definition \ref{conM0}, and
\be\ol{C}_{k+1}(x)-\ol{C}_k(x)+f_{k+1}(x)=\ex_x[h_f^{(k+1)}(\ol{X}_\zeta)],\quad\forall x\in\R,\label{eq:hfdef}\ee
and $h_f^{(k+1)}(x)>0$ if and only if $x>\ol{x}_{k+1}^\star$ for some $\ol{x}_{k+1}^\star\in[-\infty,\infty)$.
Now from  \eqref{eq:34}, we have
\begin{align}
g^{(k+1)}(x)=&\ol{C}_{k+1}(x)-\ol{C}_k(x)+f_{k+1}(x)+\ex_x[h_g^{(k)}(\ol{X}_{\zeta})\ind_{\{\ol{X}_{\zeta}>x_k^\star\}}]\nn\\
=&\ex_x[h_f^{(k+1)}(\ol{X}_{\zeta})+h_g^{(k)}(\ol{X}_{\zeta})\ind_{\{\ol{X}_{\zeta}>x_k^\star\}}]\nn\\
=&\ex_x[h_g^{(k+1)}(\ol{X}_{\zeta})],\label{eq:hg1}
\end{align}
where
\be
h_g^{(k+1)}(x):=h_f^{(k+1)}(x)+h_g^{(k)}(x)\ind_{\{x>x_k^\star\}}.\label{eq:hg2}\ee
By Remark \ref{rem:convex}, we know that $g^{(k+1)}(\cdot)\in\Upsilon_\zeta$. In particular,  there is a unique $x_{k+1}^\star\in\R$ such that $h_g^{(k+1)}(x)>0$ if and only if $x>x_{k+1}^\star$. 

We now prove that $g^{(k+1)}(\cdot)$ satisfies the property in \eqref{eq:41}. To that end, denote
\[\widehat{g^{(k)}}(x):=v^{(k)}(x)+\ol{C}_k(x)=\sup_{\tau\in\mc{T}}\ex_x[\et^{-A_\tau}g^{(k)}(X_\tau)\ind_{\{\tau<\infty\}}].\]
We already know that $(\et^{-A_t}\widehat{g^{(k)}}(X_t))_{t\ge0}$ is the Snell envelope of right continuous process $(\et^{-A_t}g^{(k)}(X_t))_{t\ge0}$ (due to \eqref{eq:41}), so it is c\`adl\`ag (see \cite[page 353]{KaratzasShreve01}). In particular, using Lemma \ref{lem:inf} we have  on $\{\tau<\infty\}$ that,
\begin{align}\lim_{n\to\infty}\widehat{g^{(k)}}(X_{\tau_n})&=\exp(A_\tau)\cdot\lim_{n\to\infty}\exp(-A_{\tau_n})\cdot\lim_{n\to\infty}\widehat{g^{(k)}}(X_{\tau_n})=\exp(A_\tau)\cdot\lim_{n\to\infty}\exp(-A_{\tau_n})\widehat{g^{(k)}}(X_{\tau_n})\nn\\
&=\exp(A_\tau)\cdot\exp(-A_\tau)\widehat{g^{(k)}}(X_\tau)=\widehat{g^{(k)}}(X_\tau),\label{eq:44}\end{align}
where the third step is due to the  c\`adl\`ag property.  It follows that, on $\{\tau<\infty\}$,  
\begin{align}
\lim_{n\to\infty}g^{(k+1)}(X_{\tau_n})&=\lim_{n\to\infty}\left(\ol{C}_{k+1}(X_{\tau_n})-\ol{C}_k(X_{\tau_n})+f_{k+1}(X_{\tau_n})\right)+\lim_{n\to\infty}\widehat{g^{(k)}}(X_{\tau_n})\nn\\
&=\left(\ol{C}_{k+1}(X_{\tau})-\ol{C}_k(X_{\tau})+f_{k+1}(X_{\tau})\right)+\widehat{g^{(k)}}(X_{\tau})=g^{(k+1)}(X_\tau).
\end{align}
By mathematical induction, the claim holds for all $l=1,2,\ldots,n.$
\end{proof}

\begin{figure}
\centering
\subfloat[$C(x)\equiv-0.02$]{{\includegraphics[width=200pt]{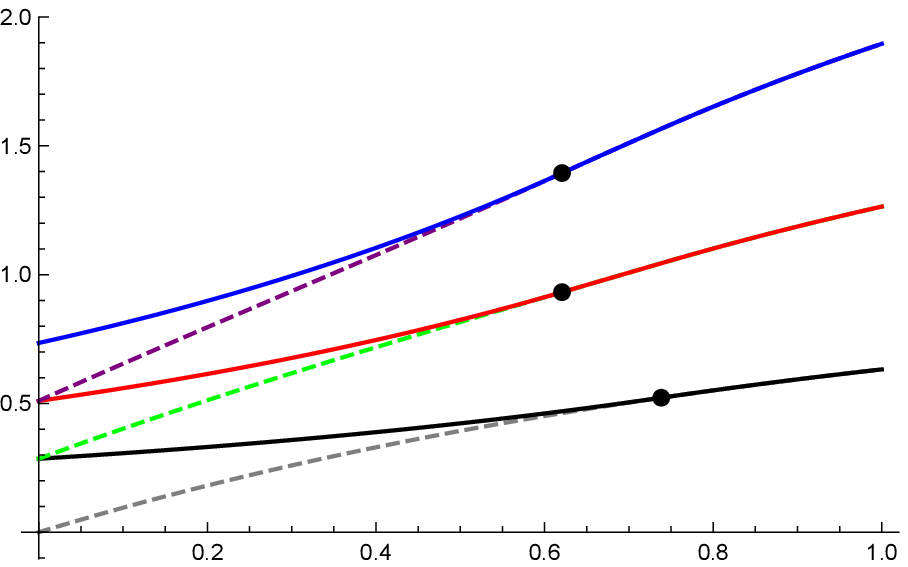} }}%
    \qquad
    \subfloat[$C(x)\equiv0.02$]{{\includegraphics[width=200pt]{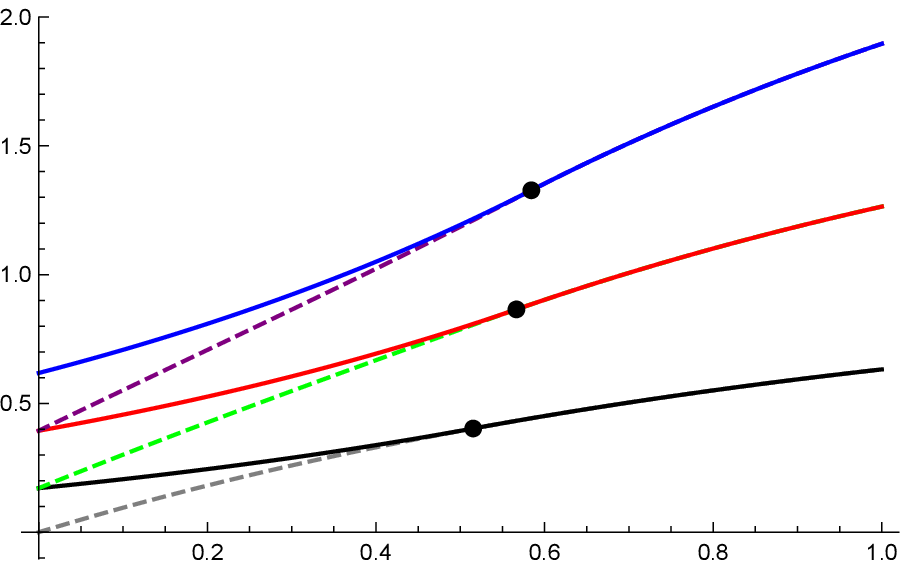} }}%
    \caption{Here we plot the reward function $f^{(l)}(\cdot)$ (in purple, green and gray dashed lines) and the value function $v^{(l)}(\cdot)$ given in \eqref{multiplepro11} (in blue, red, black solid lines), for $l=3,2,1$, respectively.  L\'evy process used: a spectrally negative L\'evy process with Laplace exponent: $\psi(\lambda)=0.18\lambda+0.02\lambda^2-0.25(\frac{\lambda}{\lambda+4})$. Parameter: constant discounting rate $r=0.18$ (hence $A_t=0.18t$ for all $t\ge0$), reward functions $f_1(x)\equiv f_2(x)\equiv f_3(x)\equiv f(x)=1-\et^{-x}$ and the common cost function $C_1(x)\equiv C_2(x)\equiv C_3(x)\equiv C(x)$ with $C(x)\equiv-0.02$ (left) or $C(x)\equiv0.02$ (right). By using Theorem \ref{thmSNLP} with mathematical induction, we know that smooth fit holds at the pasting points. The black dots stand for the pasting points where optimal exercising occurs. In the left figure, $x^\star=0.6207=x_3^\star=x_2^\star<x_1^\star=0.7384$; in the right figure,  $x_1^\star=0.5153$, $x_2^\star=0.5666$, $x_3^\star=0.5843$ and $x^\star=0.6207$. }%
    \label{fig:example}%
\end{figure}

\begin{remark}\label{rmk:special}
A special case that can be conveniently checked against conditions in Proposition \ref{prop43} is the situation when we have identical cost functions and identical reward functions, i.e. \be
C_l(\cdot)\equiv C_1(\cdot), f_l(\cdot)\equiv f_1(\cdot), \forall l\ge 2,\label{eq:special_eq}\ee
where it is assumed that $\ol{C}_1(\cdot)+f_1(\cdot), f_1(\cdot)$ are continuous and belong to $\Upsilon_{\zeta}$. In the case of spectrally negative L\'evy with constant discounting rate $r>0$, we can construct a class of such examples with $C_1(x)=c\cdot e^{\beta x}$ for $c>0$ and $0\le\beta<\Phi(r)$, and $f_1(x)\in\Upsilon_{\mathbf{e}_r}$.
\end{remark}
\begin{prop}
\label{prop44}
Under the condition of Proposition \ref{prop43}, for $1\le l\le n$, let $\ol{x}_{l}^\star$ be the unique constant in $[-\infty,\infty)$ such that $$h_f^{(l)}(x)\begin{dcases}>0,\quad & x>\ol{x}_l^\star,\\\le0, & x\le \ol{x}_l^\star, \end{dcases}$$ where $h_f^{(l)}(\cdot)$ is given in \eqref{eq:hfdef} when $l=k+1$. Then we have $x_1^\star=\ol{x}_1^\star$, and 
for any $k\ge2$:
\begin{enumerate}
\item[(i)] if  $\ol{x}_{k}^\star< x_{k-1}^\star$, then $x_{k}^\star=\ol{x}_{k}^\star$;
\item[(ii)] if  $\ol{x}_{k}^\star\ge x_{k-1}^\star$, then $x_{k-1}^\star\le x_{k}^\star\le\ol{x}_{k}^\star$.
\end{enumerate}
 In particular, in the special case that $\ol{x}_l^\star\equiv x^\star$ for all $2\le l\le n$:
 \begin{enumerate} 
 \item if $x^\star< x_1^\star$,
 then thresholds $\{x_l^\star\}_{1\le l\le n}$ satisfy
\[x^\star= x_n^\star=\ldots= x_2^\star< x_1^\star;\]
\item if $x^\star\ge x_1^\star$, then thresholds $\{x_l^\star\}_{1\le l\le n}$ satisfy
\[x_1^\star\le x_2^\star\le \ldots\le x_n^\star\le x^\star.\]
See Figure \ref{fig:example} for a visualization.
\end{enumerate}
\end{prop}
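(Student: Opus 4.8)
The plan is to work entirely with the recursion \eqref{eq:hg2} for the averaging functions, which I shift the index of and write as
\be
h_g^{(k)}(x)=h_f^{(k)}(x)+h_g^{(k-1)}(x)\ind_{\{x>x_{k-1}^\star\}},\quad\forall x\in\R,\nn
\ee
reading off the threshold $x_k^\star$ directly from the sign of the right-hand side. Here I use the facts already secured in Proposition \ref{prop43} (via Definition \ref{conM0} and Remark \ref{rem:convex}): $h_g^{(k-1)}(\cdot)$ is nondecreasing with $h_g^{(k-1)}(x)>0\Leftrightarrow x>x_{k-1}^\star$, and $h_f^{(k)}(x)>0\Leftrightarrow x>\ol{x}_k^\star$, so each threshold is a single crossing point. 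The one useful observation is that the correction term $h_g^{(k-1)}(x)\ind_{\{x>x_{k-1}^\star\}}$ vanishes identically on $(-\infty,x_{k-1}^\star]$ and is strictly positive on $(x_{k-1}^\star,\infty)$; hence adding it to $h_f^{(k)}$ can never turn a strictly positive value nonpositive, and can only alter signs to the right of $x_{k-1}^\star$.

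For the base case I would note $\ol{C}_0\equiv0$, so $g^{(1)}=\ol{C}_1+f_1$ and $h_g^{(1)}=h_f^{(1)}$, giving $x_1^\star=\ol{x}_1^\star$ at once. For $k\ge2$, case (i) $\ol{x}_k^\star<x_{k-1}^\star$ is immediate: on $(-\infty,\ol{x}_k^\star]\subset(-\infty,x_{k-1}^\star)$ the correction vanishes and $h_g^{(k)}=h_f^{(k)}\le0$, while on $(\ol{x}_k^\star,\infty)$ one has $h_f^{(k)}>0$ plus a nonnegative term, so $h_g^{(k)}>0$; thus $x_k^\star=\ol{x}_k^\star$. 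In case (ii) $\ol{x}_k^\star\ge x_{k-1}^\star$, the upper bound $x_k^\star\le\ol{x}_k^\star$ follows because $h_g^{(k)}>0$ whenever $h_f^{(k)}>0$, i.e. for all $x>\ol{x}_k^\star$; the lower bound $x_k^\star\ge x_{k-1}^\star$ follows because on $(-\infty,x_{k-1}^\star]$ the correction vanishes and $x\le x_{k-1}^\star\le\ol{x}_k^\star$ forces $h_f^{(k)}(x)\le0$, whence $h_g^{(k)}\le0$ there.

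For the special case $\ol{x}_l^\star\equiv x^\star$ for all $2\le l\le n$, I would induct on $l$. If $x^\star<x_1^\star$, then $\ol{x}_2^\star=x^\star<x_1^\star$ places us in case (i) and $x_2^\star=x^\star$; for each subsequent $l$ one has $\ol{x}_l^\star=x^\star=x_{l-1}^\star$, the boundary of case (ii), and the two-sided bound $x_{l-1}^\star\le x_l^\star\le\ol{x}_l^\star$ collapses to $x_l^\star=x^\star$, yielding $x^\star=x_n^\star=\cdots=x_2^\star<x_1^\star$. If instead $x^\star\ge x_1^\star$, then at each step the induction hypothesis $x_{l-1}^\star\le x^\star=\ol{x}_l^\star$ keeps us in case (ii), delivering $x_{l-1}^\star\le x_l^\star\le x^\star$ and hence the nondecreasing chain $x_1^\star\le x_2^\star\le\cdots\le x_n^\star\le x^\star$.

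Most of the argument is clean sign bookkeeping once the recursion is in hand. The only point requiring a little care is the boundary situation in the first special subcase, where $\ol{x}_l^\star$ and $x_{l-1}^\star$ coincide: there one must invoke the full two-sided estimate of case (ii), rather than case (i), to pin $x_l^\star$ exactly. I do not anticipate a genuine obstacle, since the monotonicity of $h_g^{(l)}$ — which guarantees each threshold is well defined as a unique crossing point — has already been established in Proposition \ref{prop43}.
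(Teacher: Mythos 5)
Your proof is correct and follows essentially the same route as the paper's: both read off the sign of $h_g^{(k)}$ from the recursion \eqref{eq:hg2}, splitting at $x_{k-1}^\star$ to get (i) and (ii), and then handle the special case by induction. Your explicit treatment of the boundary situation $\ol{x}_l^\star=x_{l-1}^\star$ via the two-sided bound of case (ii) correctly fills in the induction detail the paper leaves implicit.
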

\begin{proof}
The claim that $x_1^\star=\ol{x}_1^\star$ obviously holds. 
For $k\ge2$, if $\ol{x}_{k}^\star< x_{k-1}^\star$, then $h_f^{(k)}(x)>0$ and $h_g^{(k-1)}(x)>0$ for all $x>x_{k-1}^\star$,
so by  \eqref{eq:hg2} we know that $h_g^{(k)}(x)>0$ for all $x>x_{k-1}^\star$. This implies that $x_{k}^\star\le x_{k-1}^\star$. On the other hand, for all $x\le x_{k-1}^\star$, we have $h_g^{(k)}(x)=h_f^{(k)}(x)$, so 
 $x_{k}^\star=\ol{x}_{k}^\star$.

 If $\ol{x}_k^\star\ge x_{k-1}^\star$, then by a similar argument, we know that $h_g^{(k)}(x)>0$ for all $x>\ol{x}_{k}^\star$, so $x_{k}^\star\le\ol{x}_{k}^\star$. On the other hand,  for all $x\le x_{k-1}^\star$, $h_g^{(k)}(x)=h_f^{(k)}(x)\le 0$ . So we know that $x_{k-1}^\star\le x_{k}^\star\le \ol{x}_{k}^\star$. 
 
 The results in the special case now follow from mathematical induction.
\end{proof}
\begin{remark}\label{rmk:360}
Suppose Eq. \eqref{eq:special_eq} holds, and $\ol{C}_1(\cdot)+f_1(\cdot), f_1(\cdot)$ are continuous and belong to $\Upsilon_{\zeta}$, then we have $\ol{x}_l^\star\equiv x^\star$ for all $2\le l\le n$, so we are in the special case of Proposition \ref{prop44}. Specifically, we can write 
\[\ol{C}_1(x)+f_1(x)=\ex_x[h_1(\ol{X}_\zeta)], \quad f_1(x)=\ex_x[h_2(\ol{X}_\zeta)], \quad\forall x\in\R,\]
where $h_1(\cdot)$ and $h_2(\cdot)$ are nondecreasing functions prescribed in Definition \ref{conM0}. Moreover, we can define $x_1^\star$ and $x^\star$ with
\[h_1(x)\begin{dcases}
>0, \quad &x>x_1^\star,\\
\le 0,&x\le x_1^\star,
\end{dcases}
\quad
h_2(x)\begin{dcases}
>0, \quad &x>x^\star,\\
\le 0,&x\le x^\star.
\end{dcases}
\]
In this case, Proposition \ref{prop44} gives us a very simple rule for solving the optimal stopping problems \eqref{multiplepro11}-\eqref{multiplepro12}, via the comparison between the optimal threshold for $l=1$, $x_1^\star$, and  $x^\star$: 

If $x^\star<x_1^\star$, then it will be optimal to use the first $(n-1)$ opportunities all at once upon crossing the threshold $x^\star$ from below, and use the last exercising opportunity only when the underlying process crosses $x_1^\star$ from below. The intuition behind this is that the cost component $-C_1(\cdot)$ is actually beneficial, in the sense that the perpetuity $-\ol{C}_1(\cdot)$ tend to be positive (a constant perpetuity function $\ol{C}_1(\cdot)\equiv \ol{C}_1<0$ will cause $x_1^\star$ to be larger than $x^\star$). In this case, one should be slightly patient, and use the last exercising opportunity when the underlying reaches a higher target $x_1^\star$.  See Figure \ref{fig:example}(a).

If $x^\star\ge x_1^\star$, then it will be optimal to exercise all $n$ opportunities upon crossing $x_n^\star$ from below. In this case, the cost component $-C_1(\cdot)$ is not beneficial, in the sense that the perpetuity $-\ol{C}_1(\cdot)$ tend to be non-positive (a constant perpetuity function $\ol{C}_1(\cdot)\equiv \ol{C}_1\ge0$ will  cause $x_1^\star$ to be no larger than $x^\star$), so one should exercise all opportunities as soon as possible.  See Figure \ref{fig:example}(b).

In both cases, it is always optimal to exercise at least the first $(n-1)$ opportunities at the same time. This is because the cost functions are the same, and there is no benefit to make separate, consecutive stops except possibly the last one. 

The same intuition works for the general cases of Proposition \ref{prop44}. For example, $\ol{x}_k^\star<x_{k-1}^\star$ means that the cost to be paid between the $(n+1-k)$-th and the $(n+2-k)$-th exercises is beneficial, suggesting separate, consecutive stops; in contrast, $\ol{x}_k^\star\ge x_{k-1}^\star$ means that there is no benefit to separately use these two exercising opportunities.
\end{remark}
\begin{cor}
Assume that the discounting rate is constant $r>0$, and the reward functions $(f_l(\cdot))_{1\le l\le n}$ are continuous and belong to $\Upsilon_{\e_r}$, and $f_1(\cdot)$ has representation $h(\cdot)$. Suppose that for all $1\le l\le n$, the cost functions $C_l(x)\equiv C(x)=-L+\sum_{i=1}^k c_i\et^{\alpha_i x}$, where $L>0$, $c_i, \alpha_i>0$,  $\ex[\et^{\alpha_iX_1}]<\infty$ and $\psi(\alpha_i)<r$ for all $i=1,2,\ldots, k$ or $C(x)\equiv$some constant in $(-r\cdot h(\infty), \infty)$. Then results in Propositions \ref{prop43} and \ref{prop44} hold.
\end{cor}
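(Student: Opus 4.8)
The plan is to verify the two hypotheses of Propositions \ref{prop43} and \ref{prop44}, namely Assumption \ref{assumption4} and the requirement that $\ol{C}_l(\cdot)-\ol{C}_{l-1}(\cdot)+f_l(\cdot)$ be continuous and lie in $\Upsilon_{\e_r}$ for each $1\le l\le n$. Because the discounting is constant we have $A_t=rt$, so $\zeta=\e_r$ and $\ol{X}_\zeta=\ol{X}_{\e_r}$, which under $\pr_x$ has the law of $x+\ol{X}_{\e_r}$ with $\ol{X}_{\e_r}$ exponentially distributed with rate $\Phi(r)$ (see Remark \ref{rmk:constantr} and Appendix \ref{sec:SNLP}). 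Since all cost functions coincide, $\ol{C}_l\equiv\ol{C}_1=:\ol{C}$ for every $l$, whence $\ol{C}_l-\ol{C}_{l-1}+f_l$ equals $f_l$ for $l\ge2$ and equals $\ol{C}+f_1$ for $l=1$. As the $f_l$ are continuous and belong to $\Upsilon_{\e_r}$ by hypothesis, the $l\ge2$ cases are immediate, and the whole proof reduces to establishing Assumption \ref{assumption4} and $\ol{C}+f_1\in\Upsilon_{\e_r}$.

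First I would compute $\ol{C}$ explicitly. For $C(x)=-L+\sum_{i=1}^kc_i\et^{\alpha_ix}$, Fubini's theorem together with $\ex_x[\et^{\alpha_iX_t}]=\et^{\alpha_ix+t\psi(\alpha_i)}$ and the assumption $\psi(\alpha_i)<r$ give
\[
\ol{C}(x)=\ex_x\Big[\int_0^\infty\et^{-rt}C(X_t)\diff t\Big]=-\frac{L}{r}+\sum_{i=1}^k\frac{c_i}{r-\psi(\alpha_i)}\et^{\alpha_ix},
\]
which is continuous. Running the same computation with $|C|$ in place of $C$ produces the finite upper bound $\tfrac{L}{r}+\sum_i\tfrac{c_i}{r-\psi(\alpha_i)}\et^{\alpha_ix}$, so Assumption \ref{assumption4} holds. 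In the constant-cost case $C\equiv c$ one simply has $\ol{C}\equiv c/r$ and Assumption \ref{assumption4} is trivial.

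The core of the argument is the membership $\ol{C}+f_1\in\Upsilon_{\e_r}$. I would first observe that, since $\psi$ is convex and $\Phi(r)$ is the largest solution of $\psi(\theta)=r$ lying on the increasing branch of $\psi$, the inequality $\psi(\alpha_i)<r$ forces $\alpha_i<\Phi(r)$; hence $\ex[\et^{\alpha_i\ol{X}_{\e_r}}]=\Phi(r)/(\Phi(r)-\alpha_i)<\infty$ and each exponential admits the representation $\et^{\alpha_ix}=\ex_x\big[\tfrac{\Phi(r)-\alpha_i}{\Phi(r)}\et^{\alpha_i\ol{X}_{\e_r}}\big]$ with a strictly increasing, strictly positive kernel (cf.\ Remark \ref{rmk:constantr1} and \cite{mordecki2002}). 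Writing $f_1(x)=\ex_x[h(\ol{X}_{\e_r})]$ and incorporating the constant $-L/r=\ex_x[-L/r]$, we get $\ol{C}(x)+f_1(x)=\ex_x[H(\ol{X}_{\e_r})]$ with
\[
H(x)=h(x)+\sum_{i=1}^k\frac{c_i}{r-\psi(\alpha_i)}\cdot\frac{\Phi(r)-\alpha_i}{\Phi(r)}\et^{\alpha_ix}-\frac{L}{r}.
\]
Here $H$ is nondecreasing, being a sum of the nondecreasing $h$ and strictly increasing exponentials; and since the exponential sum diverges to $+\infty$ as $x\to\infty$, $H$ is eventually positive, so there is a finite $x_1^\star$ with $H(x)>0$ if and only if $x>x_1^\star$ (adjusting $H$ harmlessly at the boundary if necessary). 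This is precisely Condition (M), so $\ol{C}+f_1\in\Upsilon_{\e_r}$. In the constant case $H=h+c/r$, and its eventual positivity $H(\infty)=h(\infty)+c/r>0$ is equivalent to $c>-r\,h(\infty)$, which is exactly the stated restriction.

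The main obstacle is this last point: guaranteeing that the representing function is positive for large $x$, equivalently $x_1^\star<\infty$. In the exponential-cost case it is automatic because the positive exponential terms dominate, whereas in the constant-cost case it genuinely requires the threshold condition $c>-r\,h(\infty)$, which is why that hypothesis is imposed. Once Assumption \ref{assumption4} and the $\Upsilon_{\e_r}$-membership of $\ol{C}_l-\ol{C}_{l-1}+f_l$ are in hand for every $l$, Propositions \ref{prop43} and \ref{prop44} apply verbatim, yielding the conclusion.
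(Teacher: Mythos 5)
Your proposal is essentially the paper's own proof: the same reduction (since $C_l\equiv C$ gives $\ol{C}_l-\ol{C}_{l-1}+f_l=f_l$ for $2\le l\le n$, only Assumption \ref{assumption4} and $\ol{C}+f_1\in\Upsilon_{\e_r}$ need checking), the same Fubini computation of $\ol{C}$ using $\ex_x[\et^{\alpha_i X_t}]=\et^{\alpha_i x+t\psi(\alpha_i)}$ and $\psi(\alpha_i)<r$, and the same device of representing each exponential as the expectation of a normalized, strictly increasing kernel of $\ol{X}_{\e_r}$, exactly as the paper does by appealing to the proof of Corollary \ref{example42}. Your explicit handling of the constant-cost case, identifying $c>-r\,h(\infty)$ as precisely the condition making $H(\infty)=h(\infty)+c/r>0$, is a useful elaboration of the paper's terse ``the conclusion obviously holds.''

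One step is over-specialized, though. Section \ref{sec42} takes $X_\cdot$ to be a \emph{general} L\'evy process, so the identity $\ex[\et^{\alpha_i\ol{X}_{\e_r}}]=\Phi(r)/(\Phi(r)-\alpha_i)$ and the inference $\alpha_i<\Phi(r)$ from convexity of $\psi$ are unjustified as written: both rest on $\ol{X}_{\e_r}$ being exponentially distributed with rate $\Phi(r)$, which holds only for spectrally negative $X_\cdot$ (and $\Phi$ is only defined in that setting in Appendix \ref{sec:SNLP}). The repair is already implicit in your citation of \cite{mordecki2002}: from $\psi(\alpha_i)<r$ one has $\ex[\et^{\alpha_i X_{\e_r}}]=r/(r-\psi(\alpha_i))<\infty$, and the Wiener--Hopf factorization $\ex[\et^{\alpha_i X_{\e_r}}]=\ex[\et^{\alpha_i\ol{X}_{\e_r}}]\cdot\ex[\et^{\alpha_i\inf_{s\le\e_r}X_s}]$ with $\ex[\et^{\alpha_i\inf_{s\le\e_r}X_s}]\in(0,1]$ yields $\ex[\et^{\alpha_i\ol{X}_{\e_r}}]<\infty$; replacing your constant $\frac{\Phi(r)-\alpha_i}{\Phi(r)}$ by $1/\ex[\et^{\alpha_i\ol{X}_{\e_r}}]$ makes your $H$ coincide with the paper's $h+h_c$, and nothing downstream changes, since only positivity, monotonicity of the kernels and finiteness of the normalizing constants were used. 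Finally, I would drop rather than rely on your parenthetical ``adjusting $H$ harmlessly at the boundary'': modifying $H$ at a single point can change $\ex_x[H(\ol{X}_{\e_r})]$ when $\ol{X}_{\e_r}$ charges that point (which can occur at $x$ itself for processes irregular upward), so the half-open boundary dichotomy in Condition (M) is genuinely glossed over---but the paper's own proof is silent on the same point, so this is a shared, not a new, informality.
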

\begin{proof} In this case, we notice that $\ol{C}_{l}(\cdot)-\ol{C}_{l-1}(\cdot)+f_l(\cdot)=f_l(\cdot)$ for $2\le l\le n$, so we only need to verify that $\ol{C}(\cdot)+f_1(\cdot)\in\Upsilon_{\e_r}$. If $C(\cdot)$ is a constant function bounded in $(-r\cdot h(\infty), \infty)$, then the conclusion obviously holds. Otherwise, we calculate $\ol{C}(\cdot)$ as follows:
\begin{align}
\ol{C}(x)=\ex_x[\int_0^\infty\et^{-rt}C(X_t)\diff t]=&-\frac{L}{r}+\sum_{i=1}^kc_i\int_0^{\infty}\et^{-rt}\ex_{x}[\et^{\alpha_i X_t}]\diff t\nn\\
=&-\frac{L}{r}+\sum_{i=1}^kc_i\et^{\alpha_ix}\int_0^\infty \et^{-(r-\psi(\alpha_i))t}\diff t\nn\\
=&-\frac{L}{r}+\sum_{i=1}^k\frac{c_i}{r-\psi(\alpha_i)}\et^{\alpha_ix}.
\end{align}
One can also see from the above calculation that Assumption \ref{assumption4} holds.
On the other hand, similar as in the proof of Corollary \ref{example42}, we have
\be
\ol{C}(x)=\ex_x[h_c(\ol{X}_{\e_r})]\quad\text{ where }\quad h_c(x):=\sum_{i=1}^k\frac{c_i}{r-\psi(\alpha_i)}\frac{\et^{\alpha_i x}}{\ex[\et^{\alpha_i\ol{X}_{\e_r}}]}-\frac{L}{r}.\nn
\ee
So the function $\ol{C}(\cdot)+f_1(\cdot)\in\Upsilon_{\e_r}$. Thus, results in Propositions \ref{prop43} and \ref{prop44} apply. \end{proof}

{\bf Acknowledgements} We are grateful to the associate editor and  anonymous referees who offered helpful comments that improve this work.

\appendix
\section{Proofs}\label{sec:proof}
\begin{proof}[Proof of Lemma \ref{lem:equivalence}]
As $z\to\infty$, we know that $T_z^+\to\infty$, $\pr_x$-a.s. Therefore, if Assumption \ref{ass:A} holds, then
\begin{enumerate}
\item if $\pr_x(A_\infty=\infty)=1$, we have
\be
0\le \lim_{z\to\infty}\ex_x[\exp(-A_{T_z^+})\ind_{\{T_z^+<\infty\}}]\le 
\lim_{z\to\infty}\ex_x[\exp(-A_{T_z^+})]= \ex_x[\exp(-A_\infty)]=0,\nn
\ee
where we used the bounded convergence theorem to get the first equality;
\item if $\pr(\limsup_{t\to\infty}X_t<\infty)=1$, we know that $\ol{X}_\infty$ is an almost surely finite random variable, so
\be
0\le \lim_{z\to\infty}\ex_x[\exp(-A_{T_z^+})\ind_{\{T_z^+<\infty\}}]\le 
\lim_{z\to\infty}\pr_x(T_z^+<\infty)=\lim_{z\to\infty} \pr_x(\ol{X}_\infty>z)=0.\nn
\ee
\end{enumerate}
On the other hand, from the additive and nonnegative property of $A_\cdot$, we know that $A_\infty\ge A_{T_z^+}$ holds on the event $\{T_z^+<\infty\}$. So 
\be
0\le \ex_x[\exp(-A_\infty)\ind_{\{T_z^+<\infty\}}]\le \ex_x[\exp(-A_{T_z^+})\ind_{\{T_z^+<\infty\}}].\nn
\ee
If \eqref{eq:condition1} holds, then we have 
\be
\lim_{z\to\infty}\ex_x[\exp(-A_\infty)\ind_{\{T_z^+<\infty\}}]=0.\label{eq:42}
\ee
To prove Assumption \ref{ass:A} holds, we only need to demonstrate that, if $\pr(\limsup_{t\to\infty}X_t<\infty)=1$ fails to hold, then $\pr_x(A_\infty=\infty)=1$. But by \cite[Theorem VI.12]{Bertoin_1996}, we know that in this case $X_\cdot$ is either drifting to $\infty$, or oscillating, so $\pr_x(T_z^+<\infty)=1$ for all $z>x$. This implies that \eqref{eq:42} actually reads as $\ex_x[\exp(-A_\infty)]=0$, so $\pr_x(A_\infty=\infty)=1$.
\end{proof}

\begin{proof}[Proof of Lemma \ref{lem:limit}]
Consider the function
\[z\mapsto \ex[\et^{-A_{T_z^+}}f(X_{T_z^+})\ind_{\{T_z^+<\infty\}}]=f(z)\pr(\ol{X}_\zeta>z)=f(z)\exp(-\int_0^z\Lambda(y)\diff y),\quad\forall z>0.\] 
By the monotonicity of the function as discussed in Remark \ref{rmk:constantr}, and Assumption \ref{assumption'}(i), we know that the limit in \eqref{eq:defc0} exists, and the limit $c_0\in[0,\infty)$ if $x^\star<\infty$, and $c_0\in[0,\infty]$ if $x^\star=\infty$. To see why $c_0=0$ cannot happen with $x^\star=\infty$, notice that $f'(x)\ge \Lambda(x)f(x)$ for all $x\in\R$ in this case, and by Gr\"onwall's inequality, we have $f(z)\et^{-\int_0^z\Lambda(y)\diff y}\ge f(x)\et^{-\int_0^x\Lambda(y)\diff y}$ for all $z\ge x$. Choosing an $x$ sufficiently large such that $f(x)>0$ implies that the limit $c_0\ge f(x)\et^{-\int_0^x\Lambda(y)\diff y}>0$. 
\end{proof}

\begin{proof}[Proof of Lemma \ref{lem1}]
Using Corollary 2(i) and equation (11) of \cite{OccupationInterval}, we know that, for $0<\epsilon,x<c$,
\begin{align}
\ex_{x}[\et^{-q\int_{0}^{T_{c}^+}\ind_{(-\epsilon,\epsilon)}(X_{t})\diff t}\ind_{\{T_{c}^+<\infty\}}]
=\frac{\et^{\Phi(0)(x+\epsilon)}+q\int_{-\epsilon}^\epsilon W(x-y)\mathcal{H}^{(q)}(y+\epsilon)\diff y}{\et^{\Phi(0)(c+\epsilon)}+q\int_{-\epsilon}^\epsilon W(c-y)\mathcal{H}^{(q)}(y+\epsilon)\diff y},\label{eq:local}
\end{align}
where, for any $q\ge 0$, we defined
\be
\mathcal{H}^{(q)}(x)=\et^{\Phi(0)x}+q\int_{0}^{x}\et^{\Phi(0)(x-y)}W^{(q)}(y)\diff y.\label{eq:H}
\ee

To obtain the law of local time $L_{T_c^+} $, we use the occupation time density formula:
\[L_{T_c^+} =\lim_{\epsilon\downarrow 0}\frac{1}{2\epsilon}\int_0^{T_c^+}\ind_{(-\epsilon,\epsilon)}(X_t)\diff t, \quad \pr_x\text{-a.s. on the event }\{T_c^+<\infty\}.\]
In particular, by letting $q=\frac{r}{2\epsilon}$ and taking the limit as $\epsilon\downarrow 0$ in \eqref{eq:local}, we will get the result, thanks to the bounded convergence theorem. However, this limit requires a subtle estimate in order to properly control the $\mc{H}^{(q)}$ term in the above integrals. To that end, we recall the estimate of the $q$-scale functions appeared in the proof of \cite[Lemma 8.3]{Kyprianou2006}:
\be
0\le W^{(q)}(x)\leq\sum_{k\geq0}q^k\frac{x^k}{k!}W^{k+1}(x)=W(x)\et^{qxW(x)}.\nn
\ee
The above inequalities imply that,
for any $y\in(-\epsilon,\epsilon)$, by $0<\frac{y+\epsilon}{2\epsilon}<1$ we have $\frac{r}{2\epsilon}z<r$ for all $z\in(0,y+\epsilon)$, so
\begin{align*}
0\le \mc{H}^{(\frac{r}{2\epsilon})}(y+\epsilon)-\et^{\Phi(0)(y+\epsilon)}\le  \frac{r}{2\epsilon}\int_0^{y+\epsilon}\et^{\Phi(0)(y+\epsilon-z)}W(z)\et^{rW(z)}\diff z\le r\et^{2\Phi(0)\epsilon}W(2\epsilon)\et^{rW(2\epsilon)}.
\end{align*}
Hence,
\begin{align}
 \frac{r}{2\epsilon}\int_{-\epsilon}^\epsilon W(x-y)\mc{H}^{(\frac{r}{2\epsilon})}(y+\epsilon)\diff y\ge&\frac{r}{2\epsilon}\int_{-\epsilon}^\epsilon W(x-y)\et^{\Phi(0)(y+\epsilon)}\diff y\to rW(x),\text{ as }\epsilon\downarrow 0.\nn
\end{align}
On the other hand,
\begin{align*}
\frac{r}{2\epsilon}\int_{-\epsilon}^\epsilon W(x-y)\mc{H}^{(\frac{r}{\epsilon})}(y+\epsilon)\diff y\le &\frac{r}{2\epsilon}\int_{-\epsilon}^\epsilon W(x-y)\left(\et^{\Phi(0)(y+\epsilon)}+r\et^{2\Phi(0)\epsilon}W(2\epsilon)\et^{rW(2\epsilon)}\right)\diff y\nn\\
\to&rW(x)(1+rW(0)\et^{rW(0)})=rW(x), \text{ as }\epsilon\downarrow 0,
\end{align*}
thanks to the fact that $W(0)=0$.
The equation \eqref{eq:lem} now follows from taking the limit as $\epsilon\downarrow 0$ in both the numerator and the denominator of  \eqref{eq:local}.
\end{proof}

\begin{lem}\label{lem:inf}
Let $(a_n)_{n\ge1}$ and $(b_n)_{n\ge1}$ be two nonnegative sequences, and $(a_n)_{n\ge1}$ is bounded and nondecreasing. Then 
\[\liminf_{n\to\infty}(a_nb_n)=\lim_{n\to\infty}a_n\cdot\liminf_{n\to\infty}b_n.\]
\end{lem}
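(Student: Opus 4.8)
The plan is to exploit the fact that $(a_n)$ is convergent, not merely bounded: since it is nondecreasing and bounded, the limit $a:=\lim_{n\to\infty}a_n=\sup_{n\ge1}a_n$ exists and lies in $[0,\infty)$. This convergence is precisely what lets the factor $a_n$ be pulled through the $\liminf$, even though $(b_n)$ need not converge. I would first dispose of the degenerate case $a=0$: because $(a_n)$ is nonnegative and nondecreasing with supremum $0$, one necessarily has $a_n=0$ for every $n$, so the left-hand side is $0$ and the right-hand side is $0\cdot\liminf_n b_n$, which we read as $0$ under the convention $0\cdot\infty=0$. Both sides therefore vanish.

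For the main case $a>0$, write $b:=\liminf_{n\to\infty}b_n\in[0,\infty]$ and prove the two inequalities separately. The lower bound $\liminf_n(a_nb_n)\ge ab$ is immediate when $b=0$, since every term $a_nb_n$ is nonnegative; when $0<b<\infty$, fix $\epsilon\in(0,a\wedge b)$. By definition of $\liminf$ there is $N_1$ with $b_n>b-\epsilon$ for all $n\ge N_1$, and since $a_n\to a$ there is $N_2$ with $a_n>a-\epsilon$ for all $n\ge N_2$; multiplying these two nonnegative lower bounds gives $a_nb_n\ge(a-\epsilon)(b-\epsilon)$ for $n\ge N_1\vee N_2$, whence $\liminf_n(a_nb_n)\ge(a-\epsilon)(b-\epsilon)$, and letting $\epsilon\downarrow0$ yields $\liminf_n(a_nb_n)\ge ab$. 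For the reverse inequality I would select a subsequence $(b_{n_k})_{k\ge1}$ realizing the $\liminf$, i.e. $b_{n_k}\to b$; since $a_{n_k}\to a$ as well, the product converges, $a_{n_k}b_{n_k}\to ab$, and because a $\liminf$ never exceeds the limit of any convergent subsequence, $\liminf_n(a_nb_n)\le ab$. Combining the two bounds gives equality whenever $b<\infty$.

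It remains to treat the sub-case $a>0$ and $b=\infty$: since $a_n\uparrow a>0$, there is $N$ with $a_n\ge a/2$ for all $n\ge N$, so $a_nb_n\ge(a/2)b_n\to\infty$, forcing $\liminf_n(a_nb_n)=\infty=ab$. The only genuinely delicate point is the bookkeeping at the boundary values—applying the convention $0\cdot\infty=0$ consistently in the branch $a=0$ and keeping track of the case $b=\infty$—but once the case split is in place, each branch is elementary. The structural observation I would emphasize is that it is the \emph{convergence} of $(a_n)$ to a \emph{positive} limit, rather than its boundedness alone, that makes the multiplicative factor commute with the $\liminf$; this is exactly the feature supplied by $a_n=\et^{-A_{\tau_n}}$ with $A_{\tau_n}$ nonincreasing in the application within the proof of Theorem \ref{singlesolution}.
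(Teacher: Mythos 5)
Your proof is correct, and in fact there is nothing in the paper to compare it against: the authors explicitly omit the proof of Lemma \ref{lem:inf}, remarking only that it is a standard exercise in real analysis. Your write-up supplies that missing argument in full, and the case analysis is complete and sound: the degenerate branch $a=0$, the two-sided estimate for $a>0$ with $0\le b<\infty$ (lower bound via an $\epsilon$-argument, upper bound via a subsequence realizing the $\liminf$), and the branch $b=\infty$. One point worth highlighting as genuine value added is that you isolate precisely where the monotonicity hypothesis is used: only in the branch $a=0$, where nonnegativity plus nondecrease forces $a_n\equiv0$, which is what saves the identity under the convention $0\cdot\infty=0$ --- for a merely convergent positive sequence the statement can fail there (e.g.\ $a_n=1/n$, $b_n=n^2$ gives $\liminf_n(a_nb_n)=\infty$ while $\lim_n a_n\cdot\liminf_n b_n=0$), whereas for $a>0$ convergence of $(a_n)$ alone suffices. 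Your closing remark also correctly matches the lemma's use inside the proof of Theorem \ref{singlesolution}: there $a_n=\et^{-A_{\tau_n}}$ is nondecreasing because $A_{\tau_n}\downarrow A_\tau$, and on $\{\tau<\infty\}$ it converges to the strictly positive limit $\et^{-A_\tau}$, so the application sits in your main (nondegenerate) case. No gaps.
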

We omit the proof of Lemma \ref{lem:inf} as it is a standard exercise in real analysis. 

\begin{lem}\label{lem:Ulocal}
The process $(\exp(-A_t+\int_0^{X_t}\Lambda(y)\diff y))_{t\ge0}$ is a local martingale. 
\end{lem}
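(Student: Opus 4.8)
The plan is to write the process as $M_t:=\exp(-A_t)G(X_t)$, where $G(x):=\exp(\int_0^x\Lambda(y)\diff y)$, and to localize along the upward first-passage times $\tau_n:=T_n^+$. Since $\Lambda(\cdot)$ is positive and locally integrable (Assumption \ref{assumption'}), $G$ is continuous, strictly increasing and strictly positive, so $M_\cdot$ is a nonnegative, adapted, c\`adl\`ag process. Because $X_\cdot$ has c\`adl\`ag paths, $\ol{X}_t<\infty$ for every finite $t$, whence $\tau_n\uparrow\infty$ $\pr_x$-a.s. as $n\to\infty$; moreover, on $\{t\le\tau_n\}$ one has $X_t\le n$ and $A_t\ge0$, so $0\le M_{t\wedge\tau_n}\le G(n)$, i.e.\ the stopped process is bounded. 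It therefore suffices to prove that, for each fixed $z>x$, the stopped process $(M_{t\wedge\tau_z})_{t\ge0}$ is a bounded $\pr_x$-martingale; localizing along $z=n\to\infty$ then yields the local martingale property.

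The key input is a first-passage identity. By \eqref{eq:p2e} and Corollary \ref{cor1}, for every $y\le z$,
\[\ex_y[\exp(-A_{T_z^+})\ind_{\{T_z^+<\infty\}}]=\pr_y(\ol{X}_\zeta>z)=\exp\Big(-\int_y^z\Lambda(u)\diff u\Big)=\frac{G(y)}{G(z)}.\]
Here I would use spectral negativity crucially: since $X_\cdot$ has no positive jumps it creeps upward, so $X_{T_z^+}=z$ on $\{T_z^+<\infty\}$, and hence $G(X_{T_z^+})=G(z)$ there. Multiplying the identity by $G(z)$ gives
\[\ex_y[M_{\tau_z}\ind_{\{\tau_z<\infty\}}]=G(y)=M_0\quad\text{under }\pr_y,\qquad\forall\,y\le z.\]

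To upgrade this to the martingale property of the stopped process, I would combine it with the strong Markov property of $(X_\cdot,A_\cdot)$ at an arbitrary bounded stopping time $\sigma$. On $\{\sigma<\tau_z\}\in\mathcal{F}_\sigma$ one has $X_\sigma\le z$, and the additive property of $A_\cdot$ together with the strong Markov property yields
\[\ex_y[M_{\tau_z}\ind_{\{\sigma<\tau_z<\infty\}}]=\ex_y\big[\exp(-A_\sigma)\ind_{\{\sigma<\tau_z\}}\,G(z)\,\ex_{X_\sigma}[\exp(-A_{T_z^+})\ind_{\{T_z^+<\infty\}}]\big]=\ex_y[M_\sigma\ind_{\{\sigma<\tau_z\}}],\]
where the last equality uses the first-passage identity with base point $X_\sigma\le z$, so that $G(z)\,\ex_{X_\sigma}[\exp(-A_{T_z^+})\ind_{\{T_z^+<\infty\}}]=G(X_\sigma)$. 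Since $M_{\tau_z}=M_{\tau_z\wedge\sigma}$ on $\{\tau_z\le\sigma\}$ and $M_\sigma=M_{\tau_z\wedge\sigma}$ on $\{\sigma<\tau_z\}$, splitting $\ex_y[M_{\tau_z}\ind_{\{\tau_z<\infty\}}]=G(y)$ over these two events gives $\ex_y[M_{\tau_z\wedge\sigma}]=G(y)=\ex_y[M_0]$ for every bounded stopping time $\sigma$. As $(M_{t\wedge\tau_z})$ is adapted, c\`adl\`ag and bounded, the standard optional-sampling characterization of martingales (testing against $\sigma=s\ind_B+t\ind_{B^c}$ with $B\in\mathcal{F}_s$ and $s<t$) then shows it is a $\pr_x$-martingale.

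The main obstacle will be the bookkeeping in the strong-Markov step: one must verify that on $\{\sigma<\tau_z\}$ the decompositions $\tau_z=\sigma+T_z^+\circ\theta_\sigma$ and $\exp(-A_{\tau_z})=\exp(-A_\sigma)\,\big(\exp(-A_{T_z^+})\circ\theta_\sigma\big)$ are valid and that $X_\sigma\le z$ there, so that the first-passage identity may be legitimately applied with base point $X_\sigma$; once these are in place the computation is routine. Spectral negativity enters only through the creeping identity $X_{T_z^+}=z$, which is precisely what renders $G(X_{\tau_z})$ deterministic on $\{\tau_z<\infty\}$ and makes the identity $\ex_y[M_{\tau_z}\ind_{\{\tau_z<\infty\}}]=G(y)$ available.
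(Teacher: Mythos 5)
Your proposal is correct and takes essentially the same route as the paper's proof: localize along the passage times $T_n^+$, use the first-passage identity $\ex_y[\et^{-A_{T_z^+}}\ind_{\{T_z^+<\infty\}}]=\exp(-\int_y^z\Lambda(u)\diff u)$ from \eqref{eq:p2e} and Corollary \ref{cor1}, together with upward creeping ($X_{T_z^+}=z$) and the strong Markov/additivity of $(X_\cdot,A_\cdot)$, to identify the stopped process. The only cosmetic difference is that the paper conditions at deterministic times, exhibiting $U_{t\wedge T_n^+}$ as $\ex_x[U_{T_n^+}\ind_{\{T_n^+<\infty\}}\mid\mc{F}_{t\wedge T_n^+}]$ (a closed martingale), whereas you verify constant expectation at arbitrary bounded stopping times and then invoke the standard optional-sampling characterization.
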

\begin{proof}
Let us denote $U_t=\exp(-A_t+\int_0^{X_t}\Lambda(y)\diff y)$.
Consider the sequence of stopping times $(T_n^+)_{n\ge1}$, which is almost surely increasing and diverging (i.e., $T_n^+\to\infty$, $\pr$-a.s.). Then for all $t\ge0$, we have 
\[\ex_x[U_{t\wedge T_n^+}]=U_0=\et^{\int_0^x\Lambda(y)\diff y}, \quad\forall n\le x;\]
for any fixed $x\in\R$ and all $n>x$, we have
\begin{align}
U_0&=\et^{\int_0^x\Lambda(y)\diff y}=\et^{\int_0^n\Lambda(y)\diff y}\ex_x[\et^{-A_{T_n^+}}\ind_{\{T_n^+<\infty\}}]=\ex_x[U_{T_n^+}\ind_{\{T_n^+<\infty\}}]\nn\\
&=\ex_x[\ex_x[U_{T_n^+}\ind_{\{T_n^+<\infty\}}|\mc{F}_{t\wedge T_n^+}]]=\ex_x[\et^{-A_{t\wedge T_n^+}}\cdot\et^{\int_0^n\Lambda(y)\diff y}\cdot\ex_{X_{t\wedge T_n^+}}[\et^{-A_{T_n^+}}\ind_{\{T_n^+<\infty\}}]]\nn\\
&=\ex_x[\exp(-A_{t\wedge T_n^+}+\int_0^{X_{t\wedge T_n^+}}\Lambda(y)\diff y)]=\ex_x[U_{t\wedge T_n^+}],
\end{align}
where the second line follows from the  tower property of conditional expectation and the strong Markov property of $X_\cdot$. Hence $(U_{t\wedge T_n^+})_{t\ge0}$ is a martingale for all $n\ge1$.
\end{proof}

\begin{lem}\label{lem:cont}
If $f(\cdot)$ is lower semi-continuous and belongs to $\Upsilon_{\e_r}$, then the value function $$V(x)=\sup_{\tau\in\mathcal{T}}\ex_x[\et^{-r\tau}f(X_\tau)\ind_{\{\tau<\infty\}}]$$ is nondecreasing and lower semi-continuous in $x$.
\end{lem}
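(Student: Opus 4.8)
The plan is to reduce everything to the explicit representation of the value function furnished by Theorem \ref{singlesolution}. Since $f(\cdot)$ is lower semi-continuous and lies in $\Upsilon_{\e_r}$, Theorem \ref{singlesolution} gives $V(x)=\ex_x[h(\ol{X}_{\e_r})\ind_{\{\ol{X}_{\e_r}>x^\star\}}]$, where $h(\cdot)$ is nondecreasing with $h(z)>0$ if and only if $z>x^\star$. First I would record two elementary reductions. Because $h(z)\le 0$ for $z\le x^\star$ and $h(z)>0$ for $z>x^\star$, the truncated integrand coincides with the positive part, namely $h(z)\ind_{\{z>x^\star\}}=h^+(z):=\max\{h(z),0\}$, and $h^+(\cdot)$ is again nondecreasing. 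Moreover, by spatial homogeneity of the L\'evy process, under $\pr_x$ the random variable $\ol{X}_{\e_r}$ has the same law as $x+Y$ under $\pr$, where $Y:=\ol{X}_{\e_r}$ under $\pr$. Hence $V(x)=\ex[h^+(x+Y)]$ for all $x\in\R$, and $V(x)\le\ex[|h(x+Y)|]<\infty$ by the integrability in Definition \ref{conM0}. (When $x^\star=\infty$ one has $h^+\equiv 0$ and $V\equiv 0$, which is trivially nondecreasing and lower semi-continuous.)

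Monotonicity is then immediate: for $x_1<x_2$ we have $h^+(x_1+Y)\le h^+(x_2+Y)$ pointwise, so $V(x_1)\le V(x_2)$. Since a nondecreasing function is lower semi-continuous if and only if it is left-continuous, it remains to prove $\lim_{x\uparrow x_0}V(x)=V(x_0)$ for every $x_0\in\R$. Fixing any $x_1<x_0$ and letting $x\uparrow x_0$, dominated convergence (with $h^+(x+Y)$ sandwiched between $0$ and the integrable $h^+(x_0+Y)$) gives $\lim_{x\uparrow x_0}V(x)=\ex[h^+((x_0+Y)-)]$, so left-continuity is equivalent to showing that the expected left-jump $\ex[h^+(x_0+Y)-h^+((x_0+Y)-)]$ vanishes.

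The crux is to transfer the continuity of $f$ to the representing function $h$ along the support of $Y$. Since $f(\cdot)$ is nondecreasing (Remark \ref{remark:32}) and lower semi-continuous, it is left-continuous, so $f(x_0)=f(x_0-)$. Writing $f(x)=\ex[h(x+Y)]$ and applying dominated convergence as $x\uparrow x_0$ (now with $h(x+Y)$ dominated by the integrable $|h(x_1+Y)|+|h(x_0+Y)|$ on $x\in[x_1,x_0]$), I obtain $f(x_0-)=\ex[h((x_0+Y)-)]$ and $f(x_0)=\ex[h(x_0+Y)]$. Subtracting and using that the left-jump of the nondecreasing function $h$ is nonnegative forces $h(x_0+Y)=h((x_0+Y)-)$, $\pr$-a.s.; that is, $h$ is left-continuous at the random point $x_0+Y$ almost surely. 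Because $z\mapsto\max\{z,0\}$ is continuous, $h^+$ is left-continuous wherever $h$ is, so $h^+(x_0+Y)=h^+((x_0+Y)-)$ $\pr$-a.s., whence the expected left-jump of $h^+$ vanishes and $\lim_{x\uparrow x_0}V(x)=V(x_0)$.

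I expect the main obstacle to be the third step: the expected left-jump of $h^+$ need not vanish for an arbitrary nondecreasing representing function, and the only leverage available is the lower semi-continuity of $f$, which controls $h$ merely in an integrated (expectation) sense rather than pointwise. The key observation that makes the argument work is that the nonnegativity of the left-jump of the nondecreasing function $h$ upgrades the integrated identity $f(x_0)=f(x_0-)$ into almost-sure left-continuity of $h$ at $x_0+Y$, which then passes through the continuous map $\max\{\cdot,0\}$ to $h^+$.
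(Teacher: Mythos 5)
Your proof is correct and follows essentially the same route as the paper's: both start from the representation $V(x)=\ex[h(x+\ol{X}_{\e_r})\ind_{\{x+\ol{X}_{\e_r}>x^\star\}}]$ supplied by Theorem \ref{singlesolution}, upgrade the left-continuity of $f$ (nondecreasing by Remark \ref{remark:32}, hence left-continuous by lower semi-continuity) to almost-sure left-continuity of $h$ at the random point $x_0+\ol{X}_{\e_r}$ via the nonnegativity of the left jump of a nondecreasing function with zero expected jump, and finish with dominated convergence on the truncated integrand. The only cosmetic differences are your rewriting of $h(\cdot)\ind_{\{\cdot>x^\star\}}$ as the positive part $h^+(\cdot)$, which streamlines the final case analysis, and your use of dominated convergence where the paper invokes Fatou's lemma for the same zero-expectation-of-a-nonnegative-limit step.
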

\begin{proof}
Suppose there are a nondecreasing function $h(\cdot)$ such that $f(x)=\ex[h(x+\ol{X}_{\e_r})]$ and a constant $x^\star\in[-\infty,\infty)$ such that $h(x)>0$ if and only if $x>x^\star$.  Then by Theorem \ref{singlesolution}, we know that $V(x)=\ex[h(x+\ol{X}_{\e_r})\ind_{\{x+\ol{X}_{\e_r}>x^\star\}}]$, and $V(\cdot)\in\Upsilon_{\e_r}$, which, by Remark \ref{remark:32}, implies that $V(\cdot)$ is nondecreasing. 

Recall that, for nondecreasing functions, 
the lower semi-continuity is equivalent to the left continuity. In particular,  $f(\cdot)$ is left continuous. To show that $V(\cdot)$ is left continuous,  
consider any $x_1>x_2$, notice that the non-negative random variable defined as $\Delta(x_1,x_2)=h(x_1+\ol{X}_{\mathbf{e}_r})-h(x_2+\ol{X}_{\mathbf{e}_r})$ is decreasing in $x_2$ for each fixed $x_1$, and satisfies (by Fatou's lemma)
\begin{multline}0\le \ex[\lim_{x_2\uparrow x_1}\Delta(x_1,x_2)]\le \liminf_{x_2\uparrow x_1}\ex[h(x_1+\ol{X}_{\mathbf{e}_r})-h(x_2+\ol{X}_{\mathbf{e}_r})]=\liminf_{x_2\uparrow x_1}\left(f(x_1)-f(x_2)\right)\\=f(x_1)-\limsup_{x_2\uparrow x_1}f(x_2)
=f(x_1)-\lim_{x_2\uparrow x_1}f(x_2)=0,\end{multline}
where the last two steps are due to the left continuity of $f(\cdot)$. It follows that the nonnegative random variable $\lim_{x_2\uparrow x_1}\Delta(x_1,x_2)=0$, $\pr$-a.s. That is, $\lim_{x_2\uparrow x_1}h(x_2+\ol{X}_{\mathbf{e}_r})=h(x_1+\ol{X}_{\mathbf{e}_r})$, $\pr$-a.s. which implies that $\lim_{x_2\uparrow x_1}h(x_2+\ol{X}_{\mathbf{e}_r})\ind_{\{x_2+\ol{X}_{\mathbf{e}_r}>x^\star\}}=h(x_1+\ol{X}_{\mathbf{e}_r})\ind_{\{x_1+\ol{X}_{\mathbf{e}_r}>x^\star\}}$, $\pr$-a.s. On the other hand, because 
\begin{align}
0\le&h(x_1+\ol{X}_{\mathbf{e}_r})\ind_{\{x_1+\ol{X}_{\e_r}>x^\star\}}-h(x_2+\ol{X}_{\mathbf{e}_r})\ind_{\{x_2+\ol{X}_{\e_r}>x^\star\}}\le h(x_1+\ol{X}_{\mathbf{e}_r})\ind_{\{x_1+\ol{X}_{\e_r}>x^\star\}}\nn,
\end{align}
by the dominated convergence theorem, we know that the value function $V(x)=\ex[h(x+\ol{X}_{\mathbf{e}_r})\ind_{\{x+\ol{X}_{\e_r}>x^\star\}}]$ is left continuous at $x_1$, so it is also lower semi-continuous at $x_1$.   
\end{proof}

\section{Scale functions of spectrally negative L\'evy processes}\label{sec:SNLP}
Let $X_\cdot$ be a spectrally negative L\'evy process, with Laplace exponent  $\psi(\lambda)=\log \ex[\et^{\lambda X_{1}}]$, which is well defined for all real number $\lambda \ge0$ (see, e.g., \cite[page 78]{Kyprianou2006}).
For $q\geq0$, the $q$-scale function of the process $X_\cdot$ is the unique function supported on $[0,\infty)$, defined via the Laplace transform
\be
\int_{0}^{\infty}e^{-\lambda y}W^{(q)}(y)dy=\frac{1}{\psi(\lambda)-q},\text{ for }\lambda>\Phi(q),\label{eq:scale}
\ee
 and $\Phi:[0,\infty)\rightarrow[0,\infty)$ is defined by $\Phi(q)=\sup\{\lambda\geq0 : \psi(\lambda)=q\}$ such that
$\psi(\Phi(q))=q,\text{ }q\geq0.$
It is well-known that $W^{(q)}(\cdot)$ is strictly increasing on $[0,\infty)$, and is continuously differentiable over $(0,\infty)$ if the jump measure of $X_\cdot$ has no atoms.  Suppose $\psi'(\Phi(q))>0$, i.e., $q>0$ or $q=0$ and $\psi'(0)>0$, then as $x\to\infty$, we have
\be
\et^{-\Phi(q)x}W^{(q)}(x)\to1/\psi'(\Phi(q)).\label{limbigx}
\ee
Moreover, $W^{(q)}(0)=W^{(0)}(0)= W(0)\ge 0$,\footnote{When $q=0$ we suppress the superscript of the $0$-scale function.} where the last inequality becomes an equality if and only if $X_\cdot$ has paths of unbounded variation. In  case that $W^{(q)}$ is continuously differentiable on $(0,\infty)$, we have
\be
W^{(q)\prime}(0+)=\begin{dcases}\frac{2}{\sigma^2}, \quad &\text{if $\sigma>0$}\\
\infty, \quad&\text{if $\sigma=0$ and $\Pi(-\infty,0)=\infty$}\\\frac{q+\Pi(-\infty,0)}{\gamma^2},\quad&\text{else}\end{dcases}.\label{limsmallx}\ee
In addition, it is known that, (see, e.g., the proof of   \cite[Lemma 8.2]{Kyprianou2006}, and   \cite[Eq. (3.13)]{Leung_Yamazaki_2011}), the mapping $x\mapsto W^{(q)\prime}(x)/W^{(q)}(x)$ is strictly decreasing over $(0,\infty)$, with limit
\be
\lim_{x\to\infty}\frac{W^{(q)\prime}(x)}{W^{(q)}(x)}=\Phi(q).\label{eq:W'}
\ee

\bibliographystyle{amsplain}
\def\cprime{$'$}
\providecommand{\bysame}{\leavevmode\hbox to3em{\hrulefill}\thinspace}
\providecommand{\MR}{\relax\ifhmode\unskip\space\fi MR }
\providecommand{\MRhref}[2]{%
  \href{http://www.ams.org/mathscinet-getitem?mr=#1}{#2}
}
\providecommand{\href}[2]{#2}

\bibliographystyle{amsplain}
\def\cprime{$'$}
\providecommand{\bysame}{\leavevmode\hbox to3em{\hrulefill}\thinspace}
\providecommand{\MR}{\relax\ifhmode\unskip\space\fi MR }
\providecommand{\MRhref}[2]{%
  \href{http://www.ams.org/mathscinet-getitem?mr=#1}{#2}
}
\providecommand{\href}[2]{#2}



\end{document}